\begin{document}
\def\rcgindex#1{\index{#1}}
\def\myidxeffect#1{{\bf\large #1}}
%%%%%%%%%%%%%%%%%%%%%%%%%%%%%%%%%%%%%%%%%%%%%%%%%%%%%%%%%%%%%%%%%%%%%

%%%%% BEGIN MACROS %%%%%
\newtheorem{Remark}{Remark}
\renewcommand{\theremark}{\arabic{Remark}}
%%%%% END MACROS %%%%%

% Title 
\title*{Breather solutions of the discrete $p$-Schr\"odinger equation}
\titlerunning{Breather solutions of DpS equation}
\author{
Guillaume James\inst{1,2}
\and
Yuli Starosvetsky\inst{3}
}

\institute{
Laboratoire Jean Kuntzmann, Universit\'e de Grenoble and CNRS,
BP 53, 38041 Grenoble Cedex 9, France \texttt{Guillaume.James@imag.fr}
\and 
INRIA, Bipop Team-Project,
ZIRST Montbonnot, 655 Avenue de l'Europe, 38334 Saint Ismier, France
\and 
Faculty of Mechanical Engineering,
Technion - Israel Institute of Technology,
Technion City, Haifa 32000, Israel
\texttt{staryuli@techunix.technion.ac.il}
}
\maketitle
\abstract
{We consider the discrete $p$-Schr\"odinger (DpS) equation, which approximates
small amplitude oscillations in
chains of oscillators with fully-nonlinear nearest-neighbors interactions
of order $\alpha = p-1 >1$.
Using a mapping approach,
we prove the existence of breather solutions of the DpS equation
with even- or odd-parity reflectional symmetries.
We derive in addition analytical approximations for the breather profiles and the
corresponding intersecting stable and unstable manifolds, valid on a whole
range of nonlinearity orders $\alpha$. 
In the limit of weak nonlinearity ($\alpha \rightarrow 1^+$), we introduce a 
continuum limit connecting the stationary DpS and logarithmic
nonlinear Schr\"odinger equations. In this limit, breathers correspond
asymptotically to Gaussian homoclinic solutions.
We numerically analyze the 
stability properties of breather solutions 
depending on their even- or odd-parity symmetry.
A perturbation of an
unstable breather generally results in a translational motion
(traveling breather) when $\alpha$ is close to unity, whereas 
pinning becomes predominant for larger values of $\alpha$.
}

%%%%%%%%%%%%%%%%%%%%%%%%%%%%%%%%%%%%%%%%%%%%%%%%%%%%%%%%%%%%%%%%%%%%%%%%%%%%%%%%%%%
%%%%%%%%%%%%%%%%%%%%%%%%%%%%%%%%%%%%%%%%%%%%%%%%%%%%%%%%%%%%%%%%%%%%%%%%%%%%%%%%%%%

\section{\label{intro}Introduction and main results}
In this paper
we consider the discrete $p$-Schr\"odinger (DpS) equation 
\rcgindex{\myidxeffect{D}!Discrete $p$-Schr\"odinger (DpS) equation}
defined by
\begin{equation}
\label{dpsa}
i  \frac{d}{d t}\phi_n
=(\Delta_{p}\phi)_n, \ \
n\in \mathbb{Z},
\end{equation}
where $\phi (t)=(\phi_n (t))_{n\in \mathbb{Z}}$ denotes a time-dependent
complex sequence and 
$$
(\Delta_{p}\phi)_n=(\phi_{n+1}-\phi_n)\, |\phi_{n+1}-\phi_n |^{p-2} -
(\phi_{n}-\phi_{n-1})\, |\phi_{n}-\phi_{n-1} |^{p-2} 
$$
the discrete $p$-Laplacian with $p>2$. 
This model has a Hamiltonian structure associated with the energy
\begin{equation}
\label{defh}
H(\phi )= \frac{2}{p}\, {\sum_{n \in \mathbb{Z}}{|\phi_{n+1}-\phi_n|^{p}}}.
\end{equation}
It is reminiscent of the
so-called discrete nonlinear Schr{\"o}dinger (DNLS) equation \cite{pgk,eilbeckj},
with the difference that it contains a fully nonlinear inter-site coupling 
\rcgindex{\myidxeffect{F}!Fully nonlinear inter-site coupling}
term.
The DpS equation was recently introduced to describe small amplitude oscillations in 
a class of mechanical systems consisting of a chain of touching beads confined in 
smooth local potentials 
\cite{James11,JKC11,sahvm},
the most well known example of such systems
being Newton's cradle \cite{hutzler}. In this context, the $p$-Laplacian involved
in (\ref{dpsa}) accounts for the fully-nonlinear character of Hertzian interactions
between beads (with $p=5/2$ in the case of contacting spheres). More generally, the DpS equation 
arises in the study of 
chains of oscillators involving fully nonlinear nearest-neighbors interactions.
It can be derived as a modulation equation when nonlinear interactions dominate on-site anharmonicity,
and allows to describe the slow evolution of the envelope of small amplitude oscillations
over long (but finite) times \cite{bdj}. A variant of the DpS equation combining
an on-site cubic nonlinearity and a cubic intersite nonlinearity (case $p=4$)
was also introduced in \cite{ros} in a case when nonlinear interactions and 
on-site anharmonicity have equal orders. 

Numerical simulations of the DpS equation with $p=5/2$
have revealed the existence 
of breather 
\rcgindex{\myidxeffect{B}!Breather}
solutions (i.e. spatially localized oscillations) of different types, 
either static (i.e. pinned to some lattice sites and time-periodic) or
traveling along the lattice. Such solutions
can be generated from localized initial conditions or may arise from modulational
instabilities of periodic waves \cite{James11,JKC11,sahvm}. Although such
properties are classical in the context of anharmonic Hamiltonian lattices \cite{flg},
energy localization is particularly strong in the DpS equation. Indeed, 
it is proved in \cite{bdj} that the solution of
(\ref{dpsa}) does not disperse for any nonzero initial condition in $\ell_2 (\mathbb{Z})$
(i.e. with finite power). Moreover, 
static breather solutions decay doubly-exponentially in space, as numerically
illustrated in \cite{James11} (see also \cite{nolta12} for an analytical decay estimate).

In this paper, we prove an existence theorem for
static breather solutions of the DpS equation valid for all $p>2$
and introduce 
different methods to approach their profiles. The
breather solutions are searched in the form 
\begin{equation}
\label{solper}
\phi_n (t ) = {|\Omega |}^{\frac{1}{p -2}}  a_n\, e^{i\, \Omega \, t},
\end{equation}
with $a=(a_n)_{n \in \mathbb{Z}}$ a real sequence and $\Omega$
an arbitrary nonvanishing constant.
Equation (\ref{dpsa}) admits time-periodic solutions of the form (\ref{solper})
if and only if $a$ satisfies the stationary DpS equation
\begin{equation}
\label{dpsstat}
-s\, {a_n}= (\Delta_{p}a)_n, \ \
n\in \mathbb{Z},
\end{equation} 
where $s=\mbox{Sign}(\Omega)=1$ or $-1$.
We shall prove the following theorem, which ensures
the existence of nontrivial solutions of (\ref{dpsstat}) homoclinic 
\rcgindex{\myidxeffect{H}!Homoclinic solution}
to $0$ 
(i.e. satisfying $\lim_{n\rightarrow \pm\infty}{a_n}=0$) for $s=1$,
and the nonexistence of (nonzero) bounded solutions for $s=-1$. 

\begin{theorem}
\label{mainthm}
The stationary DpS equation (\ref{dpsstat}) with $s=1$ admits
solutions $a_n^i$ ($i=1,2$) satisfying 
$$
\lim_{n\rightarrow \pm\infty}a_n^i=0,
$$ 
$(-1)^n\, a_{n}^i >0$, 
$|a_n^i | > |a_{n-1}^i | $ for all $n\leq 0$, and
$$
a_n^1 =a_{-n}^1,  \ \ \ 
a_n^2 = -a_{-n+1}^2, 
 \ \ \  \mbox{ for all } n \in \mathbb{Z}.
$$
Moreover, for $s=-1$ the only bounded solution of (\ref{dpsstat}) is $a_n=0$.
\end{theorem}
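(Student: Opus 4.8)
The plan is to recast the second–order recurrence (\ref{dpsstat}) as a planar map and exploit its reversibility. Writing $r_n = a_n - a_{n-1}$, $V(x) = x\,|x|^{p-2}$ (an odd increasing homeomorphism of $\mathbb{R}$, with $V^{-1}(w) = \mathrm{sign}(w)\,|w|^{1/(p-1)}$), and $v_n = V(r_n)$, equation (\ref{dpsstat}) is equivalent to $(a_{n+1},v_{n+1}) = T_s(a_n,v_n)$ with
\[
T_s(a,v) = \bigl(a + V^{-1}(v - s a),\ v - s a\bigr),
\]
the exact–symplectic twist map generated by the discrete Lagrangian $L_s(a,a') = \tfrac1p|a'-a|^p - \tfrac{s}{2}a^2$. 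It is a homeomorphism of $\mathbb{R}^2$ fixing the origin, smooth off the line $\{v = sa\}$, and reversible: $R\,T_s\,R = T_s^{-1}$ for the involution $R(a,v) = (a,\,s a - v)$, and likewise for $R' = R\,T_s$. Bounded (resp. homoclinic) solutions of (\ref{dpsstat}) correspond to bounded (resp. homoclinic) orbits of $T_s$; and since $0 \in \mathrm{Fix}(R)\cap\mathrm{Fix}(R')$, an orbit meeting $\mathrm{Fix}(R)$ (resp. $\mathrm{Fix}(R')$) is automatically homoclinic to $0$ and carries the even symmetry $a_n = a_{-n}$ (resp. the odd symmetry $a_n = -a_{1-n}$). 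So for $s=1$ it suffices to produce one orbit of $T_1$ through each of the two symmetry lines $\mathrm{Fix}(R) = \{v = a/2\}$ and $\mathrm{Fix}(R')$, with the required sign and monotonicity; for $s=-1$ it suffices to rule out nontrivial bounded orbits.

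The case $s = -1$ is the easy one. Multiplying $a_n = V(r_{n+1}) - V(r_n)$ by $a_n$ and using $a_n = a_{n+1} - r_{n+1}$ gives the telescoping identity
\[
a_{n+1}V(r_{n+1}) - a_n V(r_n) = a_n^2 + |r_{n+1}|^p \ge 0,
\]
so $F_n := a_n V(r_n)$ is nondecreasing. If $a$ is bounded then $F_n$ is bounded, hence converges as $n \to \pm\infty$, and $\sum_{n}\bigl(a_n^2 + |r_{n+1}|^p\bigr) = \lim_{+\infty}F - \lim_{-\infty}F < \infty$; in particular $a \in \ell^2$, $r \in \ell^p$, and $a_n, r_n \to 0$. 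Summing $a_n^2 = a_n V(r_{n+1}) - a_n V(r_n)$ over $|n| \le N$, telescoping the cross terms and letting $N \to \infty$ (the boundary terms $a_{\pm N}V(r_{\pm N})$ vanish) yields $\|a\|_{\ell^2}^2 = -\|r\|_{\ell^p}^p \le 0$, hence $a \equiv 0$.

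For $s = 1$ I would first remove the sign oscillation: setting $a_n = (-1)^n b_n$ turns (\ref{dpsstat}) into $b_n = (b_n + b_{n+1})^{\alpha} + (b_n + b_{n-1})^{\alpha}$ with $\alpha = p-1 > 1$, and the theorem's claims reduce to finding a positive sequence $(b_n)$ with $b_n \to 0$, increasing up to a central index (the even solution has its maximum at $n=0$, the odd solution a two–site maximum at $n=0,1$). Any positive solution of this recurrence satisfies $b_n \ge 2b_n^{\alpha}$, hence the a priori bound $b_n \le 2^{-1/(\alpha-1)}$, so there is no escape to infinity. Passing back to $T_1$, the relevant homoclinic orbits are the points of $W^u(0)\cap\mathrm{Fix}(R)$ and $W^u(0)\cap\mathrm{Fix}(R')$; the double–exponential decay $b_n \sim b_{n-1}^{\alpha}$ forced by the recurrence shows that near $0$ the local unstable curve $W^u_{\mathrm{loc}}$ is tangent to the $a$–axis ($v \approx V(a)$), while $W^s_{\mathrm{loc}} = R(W^u_{\mathrm{loc}})$ is tangent to the singular line $\{v = a\}$, with $\mathrm{Fix}(R) = \{v = a/2\}$ sitting strictly between them near $0$. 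I would then construct $W^u_{\mathrm{loc}}$, follow the branch of $W^u$ entering the configuration imposed by the staggering under forward iteration of $T_1$, and show by a one–parameter shooting / intermediate–value argument — taking as shooting parameter the central value $b_0$, with the dichotomy ``the radicand $b_n - (b_n + b_{n-1})^{\alpha}$ becomes negative'' versus ``the monotonicity $b_{n+1} < b_n$ fails'' — that this branch must cross $\mathrm{Fix}(R)$, resp. $\mathrm{Fix}(R')$, at a point $\ne 0$, with the resulting orbit decaying to $0$ and not to the nonzero staggered equilibrium $b_n \equiv 2^{-(\alpha+1)/(\alpha-1)}$.

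The hard part is making steps of the $s=1$ argument rigorous: the fixed point $0$ of $T_1$ is neither hyperbolic nor smooth — the derivative of $V^{-1}$ is infinite at $0$, the dynamics near $0$ is super–exponential, and $W^s_{\mathrm{loc}}$ is tangent to the very line along which $T_1$ fails to be $C^1$ — so the standard stable–manifold theorem does not apply and $W^u_{\mathrm{loc}}$ must be built by hand (for instance by a graph–transform / monotone–iteration scheme after a suitable rescaling); one must then obtain enough control along the forward orbit to certify both the sign condition $(-1)^n a_n > 0$ and the monotonicity $|a_n| > |a_{n-1}|$ for all $n \le 0$. (As an alternative to the mapping route, the two solutions can be obtained variationally, by minimizing $\|a\|_{\ell^2}^2$ over the Nehari set $\{a \ne 0:\ \|r(a)\|_{\ell^p}^p = \|a\|_{\ell^2}^2\}$ — nonempty since $0 < \sup_a \|r(a)\|_{\ell^p}/\|a\|_{\ell^2} < \infty$ — within the closed subspaces of even and of odd sequences, where the symmetry restores compactness; but the map–based construction, as in the statement, is the one I would carry out.)
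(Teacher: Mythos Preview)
Your $s=-1$ argument is correct and cleaner than the paper's: the paper proceeds by an elementary case analysis showing any nontrivial solution is eventually monotone and unbounded on one side, whereas your energy identity $F_{n+1}-F_n = a_n^2 + |r_{n+1}|^p$ with $F_n=a_nV(r_n)$ gives the result in one stroke and incidentally shows that any bounded solution must lie in $\ell^2$.

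For $s=1$ your framework (planar reversible map, symmetric homoclinics as intersections of $W^u(0)$ with fixed sets of reversors) matches the paper's, but two points need attention. First, a slip: the reversor $R' = RT_1$ forces the \emph{even} bond-centred symmetry $a_n = a_{1-n}$, not the odd one $a_n = -a_{1-n}$ required by the theorem; you need to compose with the sign flip $S:(a,v)\mapsto(-a,-v)$, which commutes with $T_1$, and use $SR'$ instead. Second, and this is the genuine gap you yourself flag, the construction of $W^u_{\mathrm{loc}}(0)$ is left as a programme. Your coordinates $(a,v)$ are in fact the worst possible for this step, since $T_1$ involves $V^{-1}=P_{1/\alpha}$ and is not $C^1$ at the origin. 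The paper's key observation is that in the \emph{staggered amplitude} variables $u_n = (-1)^n a_n$ the half-line system $u_n = P_\alpha(u_{n+1}+u_n)+P_\alpha(u_n+u_{n-1})$, $n\ge1$, is $C^1$ on $\mathbb{R}\times\ell_\infty(\mathbb{N})$ (it involves only $P_\alpha$, whose derivative vanishes at $0$) with $u$-derivative equal to the identity at the origin: a single application of the implicit function theorem in $\ell_\infty$ then produces the entire local stable set as the graph of a $C^1$ map $u_0\mapsto(\psi_k(u_0))_{k\ge1}$, with $\psi_k=\psi_1^{\circ k}$, $\psi_1(u_0)=P_\alpha(u_0)+O(|u_0|^{2\alpha-1})$, and the double-exponential decay falling out automatically. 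The paper then passes to the force variables $x_n=P_\alpha(u_n+u_{n+1})$, which recast the problem as a generalised stationary DNLS $(\Delta x)_n = P_{1/\alpha}(x_n)-4x_n$, and runs a topological argument of Qin--Xiao type to force the unstable curve across both symmetry lines, with the sign and monotonicity conditions ($x_n>x_{n-1}>0$ for $n\le0$, hence $(-1)^na_n>0$ and $|a_n|>|a_{n-1}|$) tracked along the way. Your shooting dichotomy could replace this last step, but the manifold construction is where an idea is actually required, and the implicit-function-theorem trick on the $C^1$ reformulation is the one you are missing.
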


For $s=1$, the solutions $a_n^i$ ($i=1,2$) of (\ref{dpsstat}) homoclinic to $0$ 
correspond to breathers solutions of (\ref{dpsa}) taking the form (\ref{solper}) with 
arbitrary $\Omega >0$.
The case $i=1$ yields the so-called even-parity modes (or site-centered solutions),
and the case $i=2$ corresponds to odd-parity modes (bond-centered solutions).
These solutions were numerically computed in \cite{James11,sahvm,JKC11} for $p = 5/2$
(see also \cite{flg,ros} and references therein in the case $p=4$ with
an additional on-site cubic nonlinearity).

Theorem \ref{mainthm} is proved using a map approach reminiscent
of previous works of Flach \cite{flach2} and Qin and Xiao \cite{qin} (see also \cite{ros}). More precisely,
a proof of existence of homoclinic solutions has been proposed in \cite{flach2}
(when $p \geq 4$ is an even integer), relying on a reformulation of (\ref{dpsstat}) as a two-dimensional
reversible mapping. 
\rcgindex{\myidxeffect{R}!Reversible mapping} 
However, the analysis was not complete because the 
existence of stable and unstable manifolds 
\rcgindex{\myidxeffect{S}!Stable manifold}
\rcgindex{\myidxeffect{U}!Unstable manifold}
of the ``hyperbolic" fixed point at the
origin was not analytically justified. This point requires
a particular attention because the corresponding map is not differentiable at the origin.
This problem was addressed in \cite{flach2} by adding a small linear Laplacian 
$\epsilon (\Delta_{2}a)_n$ in (\ref{dpsstat}), and studying the limit $\epsilon \rightarrow 0$ 
numerically. In the same spirit,
equation (\ref{dpsstat}) can be recast into the form of a generalized stationary
discrete nonlinear Schr\"odinger (DNLS) equation reminiscent of a model studied in \cite{qin}, 
except the resulting nonlinearity is not differentiable at the origin.
The work \cite{qin}
establishes the existence of homoclinic solutions for this class of models when
$C^1$ stable and unstable manifolds exist, a point requiring special justification in our context.
In this paper, we give equivalent reformulations of (\ref{dpsstat}) as two-dimensional
mappings (section \ref{maps}), and 
prove the existence of smooth stable and unstable manifolds of the origin (section \ref{sum}). 
This allows us to show the 
existence of homoclinic solutions of (\ref{dpsstat}) for $s=1$ using Qin and Xiao's result (section \ref{homsol}).
The nonexistence of nontrivial bounded solutions for $s=-1$ follows from
elementary arguments (see the appendix).

Apart from existence results, several works have introduced
analytical approximations of homoclinic orbits of the stationary DpS equation
having compact supports, relying on two different approaches. The first one is 
reminiscent of the method of successive approximations \cite{page,sievpage}, 
but presents some limitations since a convergence analysis is not available.
The second one relies on formal continuum approximations
which provide quantitatively correct results at least for $p=5/2$ and $p=4$
\cite{kivcomp,sahvm,JKC11}. However, in these works the continuum 
approximation is performed with a finite mesh size, so that 
the continuum problem is not properly justified.

In this paper, we proceed differently and
provide analytical approximations of the stable and unstable manifolds of the origin
(section \ref{sectapprox}).
This allow us to approximate the intersections of the invariant
manifolds and thus the corresponding homoclinic orbits. A similar 
approach was previously introduced in \cite{jesus} to approximate homoclinic
solutions of the DNLS equation, but the technique employed to
approximate the invariant manifolds was different.  
In our case, two different methods are used in this context.
The first one employs a leading order approximation of the local stable manifold
in conjunction with backward iterations, a method especially efficient when
$p$ is far from $2$. The second method is based on a continuum limit
\rcgindex{\myidxeffect{C}!Continuum limit}
obtained when $p$ is close to $2$, where one recovers a logarithmic stationary
nonlinear Schr\"odinger equation 
\rcgindex{\myidxeffect{L}!Logarithmic stationary nonlinear Schr\"odinger equation}
with a Gaussian homoclinic solution. 
This limiting procedure is a first step towards the rigorous justification of previous
formal continuum approximations \cite{kivcomp,sahvm,JKC11}.
Moreover, it is interesting to note that 
the case $p\approx 2$ is physically sound in the context of 
granular chain models \cite{khatri,sun,daraio}.

We complete the above existence and approximation results by a numerical study
of the stability of site- and bond-centered breathers (section \ref{stab}). Similarly to the case
$p=5/2$ considered in \cite{JKC11}, we find that bond-centered breathers are spectrally stable
and site-centered breathers are unstable, at least for all $p\in [2.2,4.4]$. 
When $p$ is close enough to $2$ (e.g. for $p \leq 5/2$), 
site-centered breather instability is very weak, and perturbations 
along a marginal mode \cite{aubryC} lead to traveling breathers
\rcgindex{\myidxeffect{T}!Traveling breathers}
propagating at an almost constant velocity. When $p$ becomes larger (e.g. for $p \geq 4$),
suitable perturbations of a site-centered breather can induce
a translational motion over a few sites, but trapping of the localized solution occurs subsequently. 
We relate this pinning 
\rcgindex{\myidxeffect{P}!Pinning}
effect to a sharp increase of the Peierls-Nabarro energy barrier 
\rcgindex{\myidxeffect{P}!Peierls-Nabarro energy barrier}
\cite{kivc} separating site- and bond-centered breathers.

Another interest of the analytical approximations of stable and unstable manifolds
obtained in section \ref{sectapprox}
concerns the analysis of breather bifurcations induced by a localized defect.
Indeed, as shown in \cite{james1} (see also \cite{palmero}), such bifurcations
possess a geometrical interpretation for DNLS-type systems. In the presence of
an isolated defect, an homoclinic orbit exists when the image of the unstable
manifold by some linear transformation (which depends on the defect strength) 
intersects the stable manifold. Consequently, homoclinic bifurcations occur
when the set of these intersections changes topologically. This typically corresponds
to tangent or pitchfork bifurcations 
of some intersections for critical defect values.
Obviously, 
using this approach to analyze defect-induced homoclinic bifurcations 
requires a good knowledge
of the geometrical structure of the stable and unstable manifolds, whereas
their geometry is often hard to establish rigorously. In that case, a good strategy consists in approximating
the stable and unstable manifolds and perform the above analysis 
on the approximate invariant manifolds, which provides approximations of critical
defect values \cite{james1}. This approach is briefly discussed in
section \ref{sectdiscuss}, and will be used in a forthcoming paper to analyze
defect-induced breather bifurcations in the DpS equation.

\section{\label{maps} Two-dimensional mappings equivalent to the stationary DpS equation}

In this section, we introduce some reformulations of the stationary DpS equation
as two-dimensional reversible mappings. 
The corresponding variables will
be denoted by force variables and mixed (amplitude-force) variables,
in analogy with the Newton's cradle system \cite{James11}.
For spatially homogeneous systems,
the force variables described in section \ref{fov} 
allow to convert the stationary DpS equation
into a generalized DNLS equation with an on-site (fractional power) nonlinearity,
which allows to prove the existence of homoclinic orbits by classical arguments
(see section \ref{homsol}). However, this formulation
would lead to some complications when considering
DpS equations with a local inhomogeneity (see section \ref{sectdiscuss}). 
The mixed variables described in section \ref{miv} 
will turn out to be more convenient in that case.
They will be used in section \ref{sectapprox} when approximating
the stable and unstable manifolds of the origin.

\vspace{1ex}

In this section we restrict to the case $s=1$ of (\ref{dpsstat}).
The even- and odd-parity localized solutions of (\ref{dpsstat})
described in theorem \ref{mainthm}
consist of spatial modulations of binary oscillations, 
i.e. $a_n$ and $a_{n+1}$ have opposite signs. This leads us to perform
the so-called staggering transformation $a_n = (-1)^n\, u_n$, which yields
the equivalent form of (\ref{dpsstat}) given by
\begin{equation}
\label{dpsstatbis}
s\, {u_n}= P_\alpha (u_{n+1}+u_{n})+P_\alpha (u_{n}+u_{n-1}), \ \
n\in \mathbb{Z},
\end{equation} 
where 
$$P_\alpha (x)=x\, {|x|}^{\alpha -1}
$$ 
and $\alpha =p-1 >1$. 
With this change of variables, the even- and odd-parity localized solutions of (\ref{dpsstat})
correspond to one-sign localized solutions $u_n$ of (\ref{dpsstatbis}).

The nonlinear system (\ref{dpsstatbis}) with $s=1$ can be rewritten 
as a two-dimensional mapping in different ways. 
The most direct one is obtained by 
introducing $w_n = u_{n+1}$ and using the variable $(u_n, w_n)$.
Since the map $P_\alpha$ defines an homeomorphism of 
$\mathbb{R}$ and $P_\alpha^{-1}=P_{1/\alpha}$, (\ref{dpsstatbis})
can be rewritten 
\begin{equation}
\label{mappingm}
\begin{pmatrix} u_{n+1} \\ w_{n+1}\end{pmatrix}
=M\, 
\begin{pmatrix} u_{n} \\ w_{n}\end{pmatrix} ,
\end{equation} 
where $M$ is the nonlinear map defined by
\begin{equation}
\label{mapm}
M\, 
\begin{pmatrix} u \\ w \end{pmatrix}
= \begin{pmatrix} w \\ P_{1/\alpha}[w-P_\alpha (u+w)] -w \end{pmatrix} .
\end{equation} 
Notice that $P_{1/\alpha}$ and $M$ are
not differentiable at the origin since $\alpha >1$.
Because the expression of the map $M$ is quite cumbersome, we shall replace
(\ref{mappingm}) by two different formulations, which involve the force and mixed
variables described below. 

\subsection{\label{fov}Force variables}

Defining 
\begin{equation}
\label{defforvar}
x_n = P_\alpha (u_{n+1}+u_{n})
\end{equation}
and summing (\ref{dpsstatbis}) (with $s=1$) at ranks 
$n$ and $n+1$ yields the following equivalent problem,
\begin{equation}
\label{dnlsstat2}
(\Delta x )_n = q(x_n), \ \ \
x=(x_n)_{n \in \mathbb{Z}},
\end{equation} 
where $\Delta$ is the usual discrete Laplacian
$$
(\Delta x )_n = x_{n+1} -2x_n + x_{n-1}, \ \ \
x=(x_n)_{n \in \mathbb{Z}},
$$
and for all $x \in \mathbb{R}$
\begin{equation}
\label{defq}
q(x)=P_{1/\alpha}(x )-4x
\end{equation} 
(we omit the $\alpha$-dependency of $q$ in notations for simplicity).
System (\ref{dnlsstat2}) falls within the class of generalized stationary DNLS equations studied by
Qin and Xiao in \cite{qin}, except $q$ is not differentiable at the origin. 
Introducing $z_n = x_{n+1}$, one can reformulate (\ref{dnlsstat2})
as a two-dimensional mapping as in \cite{qin}
\begin{equation}
\label{mappingt}
\begin{pmatrix} x_{n+1} \\ z_{n+1}\end{pmatrix}
=T\, 
\begin{pmatrix} x_{n} \\ z_{n}\end{pmatrix} ,
\end{equation} 
where
\begin{equation}
\label{mapt}
T\, 
\begin{pmatrix} x \\ z \end{pmatrix}
= \begin{pmatrix} z \\ q(z)+2z-x \end{pmatrix} .
\end{equation} 
The map $T$ is an homeomorphism of $\mathbb{R}^2$ with inverse
$$
T^{-1}\, 
\begin{pmatrix} x \\ z \end{pmatrix}
= \begin{pmatrix} q(x)+2x-z \\ x \end{pmatrix} .
$$ 
Moreover, $T$ is
reversible with respect to the symmetry 
$$
R_1 = \begin{pmatrix} 0 & 1 \\ 1 &0 \end{pmatrix} 
$$
and the nonlinear symmetry $R_2 = R_1 \circ T$,
i.e. we have $R_i \circ T=T^{-1}\circ R_i$ for $i=1,2$. 
In other words, any orbit $X_n = (x_n , z_n)^T$ of (\ref{mappingt})
yields by symmetry other orbits $\tilde{X}_n^i$ ($i=1,2$) taking the form
$\tilde{X}_n^i = R_i (X_{-n}) $. 

\vspace{1ex}

In order to stress the equivalence between formulations (\ref{mappingm}) and (\ref{mappingt}), 
let us consider the homeomorphism of $\mathbb{R}^2$
\begin{equation}
\label{maph}
h\, 
\begin{pmatrix} u \\ w \end{pmatrix}
= \begin{pmatrix} P_\alpha (u+w) \\ w-P_\alpha (u+w) \end{pmatrix} ,
\end{equation} 
whose inverse reads
\begin{equation}
\label{maphm1}
h^{-1}\, 
\begin{pmatrix} x \\ z \end{pmatrix}
= \begin{pmatrix} P_{1/\alpha} (x)-x-z \\ x+z \end{pmatrix} .
\end{equation} 
Note that $h$ defines a diffeomorphism on $\mathbb{R}^2 \setminus \{ 0 \}$.
The following result can be checked by a simple computation.

\begin{lemma}
\label{conju}
The maps $T$ and $M$ are topologically conjugate by $h$, i.e. 
they satisfy $M=h^{-1}\circ T \circ h$.
\end{lemma}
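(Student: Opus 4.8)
The plan is to verify the conjugacy identity $M = h^{-1} \circ T \circ h$ by direct computation, tracing an arbitrary point $(u,w)^T \in \mathbb{R}^2$ through both sides. First I would set $x = P_\alpha(u+w)$ and $z = w - P_\alpha(u+w)$, so that $h\,(u,w)^T = (x,z)^T$; note in particular that $x + z = w$, which is the combination that $h^{-1}$ returns in its second component. Applying $T$ from (\ref{mapt}) gives $T\,(x,z)^T = (z,\; q(z) + 2z - x)^T$, and then I would feed this into $h^{-1}$ from (\ref{maphm1}): with first entry $z$ and second entry $q(z) + 2z - x$, the map $h^{-1}$ produces
\begin{equation}
\label{conjucheck}
h^{-1}\begin{pmatrix} z \\ q(z)+2z-x \end{pmatrix}
= \begin{pmatrix} P_{1/\alpha}(z) - z - (q(z)+2z-x) \\ z + (q(z)+2z-x) \end{pmatrix}.
\end{equation}

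The next step is to simplify both components using the definition (\ref{defq}) of $q$, namely $q(\xi) = P_{1/\alpha}(\xi) - 4\xi$. For the second component of (\ref{conjucheck}) we get $z + q(z) + 2z - x = z + P_{1/\alpha}(z) - 4z + 2z - x = P_{1/\alpha}(z) - z - x$; then substituting back $z = w - P_\alpha(u+w)$ and $x = P_\alpha(u+w)$ gives $P_{1/\alpha}\bigl(w - P_\alpha(u+w)\bigr) - (w - P_\alpha(u+w)) - P_\alpha(u+w) = P_{1/\alpha}\bigl(w - P_\alpha(u+w)\bigr) - w$, which is exactly the second component of $M\,(u,w)^T$ in (\ref{mapm}). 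For the first component of (\ref{conjucheck}), the terms $P_{1/\alpha}(z)$ cancel after expanding $q(z)$, leaving $-z - (-4z) - 2z + x = z + x = w$, which matches the first component of $M\,(u,w)^T$. Hence $h^{-1}\circ T \circ h = M$ pointwise on $\mathbb{R}^2$.

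One small point worth noting explicitly is that $h$ is a genuine homeomorphism of $\mathbb{R}^2$ with inverse given by (\ref{maphm1}) — this is itself a short computation checking $h \circ h^{-1} = \mathrm{id}$ and $h^{-1}\circ h = \mathrm{id}$, using $P_\alpha \circ P_{1/\alpha} = \mathrm{id}$ — so that the relation $M = h^{-1}\circ T\circ h$ is a bona fide topological conjugacy and not merely a formal algebraic identity. Since $P_\alpha$ and $P_{1/\alpha}$ are continuous and mutually inverse on all of $\mathbb{R}$, no issue arises at the origin at the level of continuity, even though $h$ fails to be differentiable there.

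I do not expect any real obstacle here: the statement is, as the paper says, checked by a simple computation, and the only thing requiring a modicum of care is bookkeeping the cancellations driven by the ``$-4x$'' term in $q$ and keeping track of which of $x$, $z$, $x+z$ corresponds to which of $u$, $w$. The identity $x + z = w$ is the organizing observation that makes every cancellation transparent.
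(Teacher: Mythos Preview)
Your proposal is correct and is exactly what the paper has in mind: the paper offers no proof beyond the remark that the lemma ``can be checked by a simple computation,'' and your explicit verification of $h^{-1}\circ T\circ h = M$ via the organizing identity $x+z=w$ is precisely that computation carried out in full.
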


\subsection{\label{miv}Mixed variables}

Let us consider $v_n = u_{n-1}$ and 
\begin{equation}
\label{defyn}
y_n = x_{n-1}=P_\alpha (v_{n+1}+v_{n}).
\end{equation} 
Equation (\ref{dpsstatbis}) reads for $s=1$
\begin{equation}
\label{dpsstattrans}
v_{n+1}=y_{n+1}+y_{n}. 
\end{equation} 
This yields the mapping
\begin{equation}
\label{mappingn}
\begin{pmatrix} v_{n+1} \\ y_{n+1}\end{pmatrix}
=F\, 
\begin{pmatrix} v_{n} \\ y_{n}\end{pmatrix} ,
\end{equation} 
where
\begin{equation}
\label{mapn}
F\, 
\begin{pmatrix} v \\ y \end{pmatrix}
= \begin{pmatrix} P_{1/\alpha} (y)-v \\ 
P_{1/\alpha} (y)-y-v \end{pmatrix} .
\end{equation} 
The first component of (\ref{mappingn}) corresponds to equation (\ref{defyn})
and the second to (\ref{dpsstattrans}).

The maps $T$ and $F$ are
topologically conjugate by the linear transformation
$$
J = \begin{pmatrix} 1 & 1 \\ 0 &1 \end{pmatrix}, 
$$
i.e. one has $F=J\circ T \circ J^{-1}$.
It follows in particular that $F$ is
reversible with respect to the symmetry 
\begin{equation}
\label{reversr1}
\mathcal{R}_1 = J\, R_1 J^{-1} = 
\begin{pmatrix} 1 & 0 \\ 1 &-1 \end{pmatrix} 
\end{equation}
and the nonlinear symmetry $\mathcal{R}_2 = \mathcal{R}_1 \circ F$ taking the form
$$
\mathcal{R}_2 
\begin{pmatrix} v \\ 
y \end{pmatrix}
= \begin{pmatrix} P_{1/\alpha} (y)-v \\ 
y \end{pmatrix} .
$$
 
\section{\label{sum} Stable and unstable manifolds}

In this section we construct the stable and unstable manifolds of the origin for the
different maps introduced in section \ref{maps}. These results will be used in section
\ref{homsol} to obtain homoclinic solutions as intersections of these manifolds,
and in section \ref{sectapprox} to approximate the stable and unstable manifolds
and homoclinic orbits.

\subsection{Heuristics}

Let us first consider the generalized DNLS equation (\ref{dnlsstat2}) in the usual case when
\begin{equation}
\label{cond1}
q\in C^1 (\mathbb{R}),\ \ \  q^\prime (0)>0 
\end{equation}
and the corresponding mapping (\ref{mappingt}).
Condition (\ref{cond1}) implies that the origin is a hyperbolic
fixed point of $T$, since the Jacobian matrix $DT(0)$ has a pair of
real eigenvalues $\Lambda , \Lambda^{-1}$ given by
\begin{equation}
\label{vp}
\Lambda = 1+ \frac{q^\prime (0)}{2}+\left(\,  \frac{q^\prime (0)^2}{4} + q^\prime (0)  \, \right)^{1/2} >1.
\end{equation}
This yields the existence of  
$C^1$ stable and unstable manifolds of the origin, denoted respectively by
$W^{\text{s}}(0)$ and $W^{\text{u}}(0)$.
At the origin, $W^{\text{u}}(0)$ is tangent to the unstable subspace corresponding
to the line $x = \Lambda^{-1}\, z$, and $W^{\text{s}}(0)$ is tangent to the stable subspace
$z = \Lambda^{-1}\, x$.

For system (\ref{dnlsstat2})-(\ref{defq}), 
$q$ is not differentiable at the origin 
and $\lim_{x\rightarrow 0^{\pm}}{q^\prime (x)} =+\infty$. This situation
corresponds formally to the existence of an hyperbolic fixed point of $T$ at the origin,
the eigenvalues of the singular ``Jacobian matrix" being $\Lambda =+\infty$ and $\Lambda^{-1}=0$,
the unstable subspace corresponding
to the axis $x = 0$ and the stable subspace to the axis $z = 0$.
In what follows, we put these arguments onto a rigorous footing, showing
the existence of $C^1$ stable and unstable manifolds of the origin.
As usual, we shall first prove the existence of local stable and unstable
manifolds which can be written as graphs near the origin, and then
construct the global stable and unstable manifolds.

\subsection{Stable manifold theorem for system (\ref{dpsstatbis})}

In order to obtain local stable and unstable manifolds for (\ref{mappingt}), we first 
prove a similar result for system (\ref{dpsstatbis}), which has the advantage of being $C^1$.
For completeness we shall treat the two cases $s=1$ and $s=-1$ simultaneously, but 
the applications considered subsequently will only concern 
the case $s=1$.

Let us consider the usual Banach space $\ell_\infty (\mathbb{N})$ consisting of real bounded sequences,
equiped with the supremum norm.
For all $u_0 \in \mathbb{R}$ and
$u = (u_n )_{n\geq 1} \in \ell_\infty (\mathbb{N})$, we define
$N(u_0 ,u)\in \ell_\infty (\mathbb{N})$ by
$$
(N(u_0 ,u))_n=P_\alpha (u_{n+1}+u_{n})+P_\alpha (u_{n}+u_{n-1}), \ \
n\geq 1 .
$$
System (\ref{dpsstatbis}) restricted to $n\geq 1$ takes the form
\begin{equation}
\label{dpsstatbis2}
g(u_0 , u )=0 
\end{equation} 
where $g(u_0 , u )=s\, u-N(u_0 ,u)$ satisfies $g\in C^1 (\mathbb{R}\times \ell_\infty (\mathbb{N}),\ell_\infty (\mathbb{N}))$,
$g(0,0)=0$, $D_u g(0,0)=s\, \mbox{Id}$. The following result 
is one of the main steps to construct the
local stable manifold of the origin. Below, $B_\infty (R)$ denotes the open ball of radius $R>0$ centered at the origin
in $\ell_\infty (\mathbb{N})$ and we denote $B(\varepsilon )=(-\varepsilon , \varepsilon )$.

\begin{theorem}
\label{unstm}
There exist $\varepsilon , R >0$ and an odd function $\psi \in C^1 (B(\varepsilon ),B_\infty (R))$, such that
$g(u_0 , \psi (u_0) )=0 $ for all $u_0 \in B(\varepsilon )$. Moreover, for all solution of (\ref{dpsstatbis2}) in
$B(\varepsilon ) \times B_\infty (R)$ one has $u=\psi (u_0)$. In addition, 
$\psi = (\psi_k)_{k\geq 1}$ possesses the following properties.
\begin{itemize}
\item[i)]
The function $u_0 \mapsto \psi_1 (u_0)$ is increasing for $s=1$,
decreasing for $s=-1$, and satisfies
\begin{equation}
\label{decpsi1}
| \psi_1 (u_0 ) | <|u_0| \mbox{ for all } u_0 \in B(\varepsilon ),
\end{equation}
\begin{equation}
\label{estpsi1}
\psi_1 (u_0) = s\, P_\alpha (u_0) + \mbox{O}(|u_0|^{2\alpha -1})
\mbox{ when } u_0 \rightarrow 0.
\end{equation}
\item[ii)] 
The functions $u_0 \mapsto \psi_k (u_0)$ ($k\geq 1$) satisfy 
\begin{equation}
\label{formpsi}
\psi_k = \overbrace{\psi_1 \circ \psi_1 \cdots \circ \psi_1}^{k \mbox{ times}},
\end{equation}
\begin{equation}
\label{eqfoncpsi1}
s\, \psi_1 = P_\alpha \circ (
{\rm Id}
+\psi_1 ) \circ \psi_1 + P_\alpha \circ (
{\rm Id}
+\psi_1 ).
\end{equation}
\item[iii)] 
For all $u_0 \in B(\varepsilon )$ we have 
$\lim_{k\rightarrow +\infty}{\psi_k (u_0)}=0$. Moreover,
for all $C>1$, there exists $\varepsilon_C \leq \varepsilon $ such that for all $u_0 \in B(\varepsilon_C )$ we have
\begin{equation}
\label{bounddec}
|\psi_k (u_0) |\leq  C^{\frac{1}{1-\alpha}}\, {\left( C^{\frac{1}{\alpha -1}} |u_0| \right)}^{\alpha^k}.
\end{equation}
\end{itemize}
\end{theorem}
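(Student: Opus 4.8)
The plan is to apply the implicit function theorem to the $C^1$ map $g(u_0,u)=s\,u-N(u_0,u)$, and then to extract the quantitative and monotonicity properties by hand from the fixed-point equation. Since $g(0,0)=0$ and $D_u g(0,0)=s\,\mathrm{Id}$ is invertible (here we crucially use that the nonlinearity $P_\alpha$ with $\alpha>1$ has vanishing derivative at the origin, so $D_u N(0,0)=0$), the implicit function theorem gives $\varepsilon,R>0$ and a unique $C^1$ map $\psi\colon B(\varepsilon)\to B_\infty(R)$ with $g(u_0,\psi(u_0))=0$, and uniqueness of solutions of (\ref{dpsstatbis2}) in $B(\varepsilon)\times B_\infty(R)$. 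Oddness of $\psi$ follows from the symmetry $g(-u_0,-u)=-g(u_0,u)$ together with uniqueness. Shrinking $\varepsilon$ if needed, I may assume $R<1$ (this will be used below). This disposes of the existence part.

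Next I would establish (\ref{formpsi}) and (\ref{eqfoncpsi1}). The key observation is that the solution operator respects the shift: if $u=\psi(u_0)$ solves (\ref{dpsstatbis2}) on $n\ge 1$, then the shifted sequence $(u_{n+1})_{n\ge 1}$ solves the same system with initial value $u_1=\psi_1(u_0)$, and since it still lies in $B(\varepsilon)\times B_\infty(R)$ (using $R<1$ to keep $|u_1|<\varepsilon$ after possibly shrinking $\varepsilon$), uniqueness forces $u_{k+1}=\psi_1(u_k)$ for all $k\ge 0$, i.e. $\psi_k=\psi_1^{\circ k}$. Then (\ref{eqfoncpsi1}) is just (\ref{dpsstatbis}) at $n=1$ rewritten with $u_0$, $u_1=\psi_1(u_0)$, $u_2=\psi_1(\psi_1(u_0))$, using $P_\alpha=P_\alpha$. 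The expansion (\ref{estpsi1}): from (\ref{eqfoncpsi1}), $s\,\psi_1(u_0)=P_\alpha(u_0+\psi_1(u_0))+P_\alpha(u_1+u_2)$; since $\psi_1(u_0)=O(|u_0|^\alpha)$ already (as $g$ is $C^1$ with $D\psi(0)=0$, actually $\psi_1(u_0)=o(|u_0|)$, and a bootstrap through the equation gives the sharper $O(|u_0|^\alpha)$), one has $P_\alpha(u_0+\psi_1(u_0))=P_\alpha(u_0)(1+O(|u_0|^{\alpha-1}))=P_\alpha(u_0)+O(|u_0|^{2\alpha-1})$ and $P_\alpha(u_1+u_2)=O(|u_0|^{2\alpha-1})$, whence (\ref{estpsi1}). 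The bound (\ref{decpsi1}) then follows because for $s=1$ both $P_\alpha(u_0+\psi_1(u_0))$ and $P_\alpha(u_1+u_2)$ have the same sign as $u_0$ for $u_0$ small (they are dominated by $P_\alpha(u_0)$), and one checks $0<s\,\psi_1(u_0)/u_0<1$ directly from the asymptotics and a continuity/contraction argument on $B(\varepsilon)$; a symmetric argument handles $s=-1$ with the sign reversed. Monotonicity of $\psi_1$ (increasing for $s=1$, decreasing for $s=-1$) I would get either by differentiating (\ref{eqfoncpsi1}) — $s\,\psi_1'=P_\alpha'(\cdot)(1+\psi_1')(\cdot)+P_\alpha'(\cdot)(\cdots)$, and near $0$ the leading term $P_\alpha'(u_0)>0$ forces $\mathrm{sign}(\psi_1')=s$ — or more cleanly from (\ref{estpsi1}) plus the fact that $\psi_1$ cannot have a critical point in $B(\varepsilon)$ by (\ref{eqfoncpsi1}) after shrinking $\varepsilon$.

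Finally, item (iii): from (\ref{decpsi1}), $|\psi_1(u_0)|<|u_0|$, so $|\psi_k(u_0)|=|\psi_1^{\circ k}(u_0)|$ is strictly decreasing in $k$ and bounded below, hence converges to some $\ell\ge 0$ fixed by $|\psi_1|$; since $0$ is the only fixed point of $\psi_1$ in $B(\varepsilon)$ (again by (\ref{estpsi1}), $|\psi_1(u_0)|\le K|u_0|^\alpha<|u_0|$ for $u_0\neq 0$ small), $\ell=0$. For the doubly-exponential bound (\ref{bounddec}): fix $C>1$; by (\ref{estpsi1}) there is $\varepsilon_C\le\varepsilon$ with $|\psi_1(t)|\le C^{-1}|t|^\alpha$ for all $|t|<\varepsilon_C$ (absorbing the constant in $C$, using $\alpha>1$ so $|t|^{\alpha-1}$ is small). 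Then a straightforward induction on $k$ gives $|\psi_k(u_0)|\le C^{-(1+\alpha+\cdots+\alpha^{k-1})}|u_0|^{\alpha^k}=C^{\frac{1-\alpha^k}{1-\alpha}}|u_0|^{\alpha^k}=C^{\frac{1}{1-\alpha}}\bigl(C^{\frac{1}{\alpha-1}}|u_0|\bigr)^{\alpha^k}$, which is exactly (\ref{bounddec}).

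The main obstacle I anticipate is not the implicit function theorem itself but the \emph{bootstrap} needed to upgrade the generic conclusion $\psi_1(u_0)=o(|u_0|)$ (which is all that $C^1$-ness with $D\psi(0)=0$ gives directly, since $\psi$ is not $C^2$) to the sharp rate $\psi_1(u_0)=s\,P_\alpha(u_0)+O(|u_0|^{2\alpha-1})$; this requires feeding a crude a priori bound back into (\ref{eqfoncpsi1}) and controlling the $P_\alpha$-terms carefully near the origin, where $P_\alpha$ is non-smooth. Everything downstream — (\ref{decpsi1}), monotonicity, the limit in (iii), and the doubly-exponential estimate (\ref{bounddec}) — then follows by elementary one-dimensional arguments from (\ref{estpsi1}) and (\ref{formpsi}).
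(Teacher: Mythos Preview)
Your proposal is correct and follows essentially the same route as the paper's proof: implicit function theorem for existence and uniqueness, shift invariance of (\ref{dpsstatbis2}) to get (\ref{formpsi}) and (\ref{eqfoncpsi1}), then elementary one-dimensional estimates for (i) and (iii). Two small remarks. First, in your induction for (\ref{bounddec}) the correct starting bound from (\ref{estpsi1}) is $|\psi_1(t)|\le C\,|t|^\alpha$ with $C>1$ (the leading coefficient of $|\psi_1(t)|/|t|^\alpha$ is $1$, so you cannot get a constant $<1$); the induction then produces the exponent $+(1+\alpha+\cdots+\alpha^{k-1})=\frac{\alpha^k-1}{\alpha-1}$, and your two sign slips happen to cancel to give the right final formula. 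Second, the bootstrap you flag as the main obstacle is in fact immediate and is exactly what the paper does: since $\psi$ is $C^1$ with $\psi(0)=0$ one has $\psi(u_0)=O(|u_0|)$ in $\ell_\infty$, and feeding this into $s\,\psi(u_0)=N(u_0,\psi(u_0))$ gives $\psi(u_0)=O(|u_0|^\alpha)$ in one step, because every entry of $N$ is $P_\alpha$ applied to an $O(|u_0|)$ quantity.
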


\begin{proof}
By a direct application of the implicit function theorem, 
the solutions of (\ref{dpsstatbis2}) sufficiently close to $0$ in $\mathbb{R}\times \ell_\infty (\mathbb{N})$ have
the form $u=\psi (u_0)$, where $\psi$ is odd due to the invariance $u_n \rightarrow -u_n$ of (\ref{dpsstatbis}).
Since we have
\begin{equation}
\label{eqpsi}
s\, \psi (u_0 )=N(u_0 , \psi(u_0 )),
\end{equation}
it follows that 
\begin{equation}
\label{bigo}
\psi (u_0 )=\mbox{O}(|u_0|^\alpha) \mbox{ in } \ell_\infty (\mathbb{N}) \mbox{ when }
u_0 \rightarrow 0.
\end{equation} 
In particular, by choosing $\varepsilon$ small enough we have 
\begin{equation}
\label{decpsi}
\| \psi (u_0 ) \|_{\infty} <|u_0| \mbox{ for all } u_0 \in B(\varepsilon ),
\end{equation}
hence (\ref{decpsi1}) holds true.
Moreover, by identifying the first terms of the sequences at both sides of (\ref{eqpsi}), we get
\begin{equation}
\label{eqpsi1}
s\, \psi_1 (u_0 ) =P_\alpha (\psi_2(u_0)+\psi_1(u_0))+P_\alpha (\psi_1(u_0)+u_0),
\end{equation}
which yields (\ref{estpsi1}) after elementary computations based on estimate (\ref{bigo}).

System (\ref{dpsstatbis2}) is invariant by index shift, i.e. if $(u_0,(u_{n})_{n\geq 1})$ is a solution then, for all $k\geq 1$,
$(u_k,(u_{k+n})_{n\geq 1})$ is also a solution. Combining this invariance with the above reduction
implies $\psi_{n+k}(u_0)=\psi_n (\psi_k (u_0))$
for all $k, n \geq 1$. In particular, $\psi_{1+k}(u_0)=\psi_1 (\psi_k (u_0))$ yields (\ref{formpsi}),
and (\ref{eqfoncpsi1}) follows by combining (\ref{eqpsi1}) and (\ref{formpsi}).
In addition, differentiating (\ref{eqfoncpsi1}) with respect to $u_0$ yields after lengthy but straightforward computations
$$
\psi_1^\prime (u_0) = s\, P_\alpha^\prime (u_0) + \mbox{O}(|u_0|^{2\alpha -2}),
$$
hence $\psi_1$ is increasing on $B(\varepsilon )$ for $s=1$ and
decreasing on $B(\varepsilon )$ for $s=-1$
provided $\varepsilon$ is chosen small enough.

Properties (\ref{formpsi}) and (\ref{decpsi1}) imply
$\lim_{k\rightarrow +\infty}{\psi_k (u_0)}=0$
for all $u_0 \in B(\varepsilon )$. Moreover,
according to (\ref{estpsi1}), for all $C>1$ there exists $\varepsilon_C >0$ such that for $|u_0 | < \varepsilon_C$
we have 
\begin{equation}
\label{estpsi1bis}
|\psi_1 (u_0) | \leq C\, |u_0|^{\alpha} ,
\end{equation}
which gives in conjunction with (\ref{formpsi})
\begin{equation}
\label{estpsik}
|\psi_{k+1} (u_0) | \leq C\, |\psi_{k} (u_0) |^{\alpha} , \mbox{ for all } k\geq 1.
\end{equation}
Then using both (\ref{estpsi1bis}) and (\ref{estpsik}) yields (\ref{bounddec}) by induction.
{\hfill $\Box$}\end{proof}

\subsection{\label{defsumf}Stable and unstable manifolds for the maps $M,T,F$}

In this section we consider the case $s=1$ of (\ref{dpsstatbis}), and use theorem \ref{unstm}
to construct the stable and unstable manifolds of the origin for the maps $M,T,F$.

Introducing $w_n = u_{n+1}$, one can reformulate (\ref{dpsstatbis})
as a two-dimensional mapping defined by (\ref{mappingm}).
Let us consider $\varepsilon , R ,\psi_1$ as in theorem \ref{unstm}.
Since the statement of theorem \ref{unstm} remains true replacing 
$\varepsilon$ by $\varepsilon^\prime < \varepsilon$, one can assume $\varepsilon < R$ 
without loss of generality. Let us define $\Omega = (-\varepsilon , \varepsilon )^2$ and consider
the $C^1$ one-dimensional submanifold of $\mathbb{R}^2$
\begin{equation}
\label{graphewu}
W^{\text{s}}_{\text{loc}}(0) =
\{ \, (u_0 , \psi_1 (u_0 ))\in \mathbb{R}^2 , |u_0 | < \varepsilon \, \} \subset \Omega.
\end{equation} 

The following result establishes that $W^{\text{s}}_{\text{loc}}(0)$ is the local stable manifold
of the fixed point $0$ of $M$.
In the sequel we call
a curve $\Gamma$ {\em negatively invariant} by an invertible map $M$
if $M^{-1}(\Gamma )\subset \Gamma$, and {\em positively invariant} by $M$
if $M(\Gamma )\subset \Gamma$.

\begin{theorem}
\label{stablemm}
The manifold $W^{\text{s}}_{\text{loc}}(0)$ possesses the following properties.
\begin{itemize}
\item[i)]
$W^{\text{s}}_{\text{loc}}(0)$ is positively invariant by $M$.
\item[ii)]
If $(u_0 , w_0) \in W^{\text{s}}_{\text{loc}}(0)$, then the corresponding solution
of (\ref{mappingm}) satisfies 
$$
\lim_{n\rightarrow +\infty}{(u_n , w_n )}=0.
$$
\item[iii)]
All solution of (\ref{mappingm}) such that $(u_n , w_n )\in \Omega$ for all $n\geq 0$
satisfies $(u_n , w_n) \in W^{\text{s}}_{\text{loc}}(0)$ for all $n\geq 0$.
\end{itemize}
\end{theorem}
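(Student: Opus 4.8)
The plan is to derive everything from Theorem~\ref{unstm} by reinterpreting the graph $W^{\text{s}}_{\text{loc}}(0)=\{(u_0,\psi_1(u_0)):|u_0|<\varepsilon\}$ in terms of the dynamics of $M$. First I would record the key structural fact: under the correspondence $(u_n,w_n)=(u_n,u_{n+1})$, an orbit of \eqref{mappingm} is exactly a solution of \eqref{dpsstatbis} indexed by $n\geq 0$ (with $s=1$), and the map $M$ advances the index by one. So the statements about orbits of $M$ translate directly into statements about sequences $(u_n)_{n\geq 0}$ solving \eqref{dpsstatbis}.

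For item~(i), positive invariance, I would take $(u_0,w_0)\in W^{\text{s}}_{\text{loc}}(0)$, so $w_0=\psi_1(u_0)$; by \eqref{formpsi} one has $\psi_k(u_0)=\psi_1^{\circ k}(u_0)$, and the functional equation \eqref{eqfoncpsi1} shows that the sequence $u_n=\psi_n(u_0)$ (with $u_0$ itself, $\psi_0=\mathrm{Id}$) solves \eqref{dpsstatbis} for $n\geq 1$. Then $M(u_0,w_0)=(w_0, \text{second component})$, and the second component is by construction $u_2=\psi_2(u_0)=\psi_1(\psi_1(u_0))=\psi_1(w_0)$; hence $M(u_0,w_0)=(w_0,\psi_1(w_0))\in W^{\text{s}}_{\text{loc}}(0)$, using $|w_0|=|\psi_1(u_0)|<|u_0|<\varepsilon$ from \eqref{decpsi1} to stay in the domain. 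This is essentially a one-line computation once \eqref{formpsi}--\eqref{eqfoncpsi1} are invoked. Iterating, $M^n(u_0,w_0)=(\psi_n(u_0),\psi_{n+1}(u_0))$, and item~(ii) is then immediate from part~(iii) of Theorem~\ref{unstm}, which gives $\lim_{k\to+\infty}\psi_k(u_0)=0$ (or, if a rate is wanted, the doubly-exponential bound \eqref{bounddec}).

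For item~(iii), suppose $(u_n,w_n)\in\Omega=(-\varepsilon,\varepsilon)^2$ for all $n\geq 0$. Then $w_n=u_{n+1}$, so the sequence $(u_n)_{n\geq 0}$ satisfies $|u_n|<\varepsilon$ for all $n\geq 0$ and solves \eqref{dpsstatbis} for $n\geq 1$. Fix $m\geq 0$ and look at the shifted sequence $(u_{m+n})_{n\geq 0}$: it lies in $B(\varepsilon)\times B_\infty(R)$ (here one uses $\varepsilon<R$, arranged above) and solves \eqref{dpsstatbis2} with initial value $u_{m}$. By the uniqueness clause of Theorem~\ref{unstm}, $(u_{m+n})_{n\geq 1}=\psi(u_m)$, and in particular $u_{m+1}=\psi_1(u_m)$, i.e. $w_m=\psi_1(u_m)$, so $(u_m,w_m)\in W^{\text{s}}_{\text{loc}}(0)$. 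Since $m$ was arbitrary this proves~(iii).

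The only genuinely delicate point — and the one I would state carefully rather than gloss — is the bookkeeping on the radii: Theorem~\ref{unstm} produces $\psi_1$ with values in $B_\infty(R)$ over $B(\varepsilon)$, and one must know $\psi_1$ maps $B(\varepsilon)$ into $B(\varepsilon)$ (needed to make sense of the composition $\psi_1^{\circ k}$ staying in the domain and to place shifted orbits back in the theorem's hypothesis set). This is exactly what \eqref{decpsi1} guarantees, together with the harmless reduction $\varepsilon<R$ noted before the statement; I would make this explicit at the start of the proof and then the three items follow mechanically as above. No further estimates are needed beyond those already established in Theorem~\ref{unstm}.
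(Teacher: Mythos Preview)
Your proposal is correct and follows essentially the same approach as the paper's proof: both reduce all three items to Theorem~\ref{unstm} via the identification $(u_n,w_n)=(u_n,u_{n+1})$ and the composition formula~\eqref{formpsi}, with the radius bookkeeping $\varepsilon<R$ and~\eqref{decpsi1} used exactly as you indicate. The only cosmetic difference is in item~(iii): the paper applies the uniqueness clause of Theorem~\ref{unstm} once at $m=0$ to get $u_n=\psi_n(u_0)$ and then reads off $w_n=\psi_{n+1}(u_0)=\psi_1(u_n)$ from~\eqref{formpsi}, whereas you apply uniqueness after each shift $m$; both arguments are equivalent.
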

\begin{proof}
Let $(u_0 , w_0) \in W^{\text{s}}_{\text{loc}}(0)$ and 
$(u_n , w_{n})=(u_n , u_{n+1})$ the corresponding solution
of (\ref{mappingm}). Since $w_0 = \psi_1(u_0)$, we have
$(u_n , w_{n})=(\psi_n (u_0) , \psi_{n+1}(u_0))$ for all $n\geq 1$, by
uniqueness of the solution of (\ref{dpsstatbis}) given an initial condition.
Then properties {\em i)} and {\em ii)} follow from properties {\em ii)} 
and {\em iii)} of theorem \ref{unstm}.

Let $(u_n , w_{n})=(u_n , u_{n+1})$ denote a solution
of (\ref{mappingm}) staying in $\Omega$ for all $n\geq 0$.
Since $|u_0 |<\varepsilon $ and $\sup_{n\geq1}{|u_n|}\leq \varepsilon < R$,
theorem \ref{unstm} ensures that $u_n = \psi_n (u_0)$ for all $n\geq 1$,
hence $(u_n , w_n) \in W^{\text{s}}_{\text{loc}}(0)$ for all $n\geq 0$.
{\hfill $\Box$}\end{proof}

Now, to obtain a local stable manifold of the origin for the map $T$ defined by (\ref{mapt}), 
we use the fact that $T$ and $M$ are topologically conjugate by the
homeomorphism $h$ defined in (\ref{maph}).
Let us define 
\begin{equation}
\label{defwt}
\tilde{\Omega}=h(\Omega ), \ \ \ 
\tilde{W}^{\text{s}}_{\text{loc}}(0)=h(W^{\text{s}}_{\text{loc}}(0)).
\end{equation}
The following result establishes that $\tilde{W}^{\text{s}}_{\text{loc}}(0)$ is
a $C^1$ local stable manifold for the fixed point $0$ of $T$. The 
smoothness of the stable manifold requires a particular treatment, 
given the fact that $T$ and $h^{-1}$ are not differentiable at the origin. 

\begin{theorem}
\label{stablemt}
The manifold $\tilde{W}^{\text{s}}_{\text{loc}}(0)$ possesses the following properties.
\begin{itemize}
\item[i)]
$\tilde{W}^{\text{s}}_{\text{loc}}(0)$ is positively invariant by $T$.
\item[ii)]
If $(x_0 , z_0) \in \tilde{W}^{\text{s}}_{\text{loc}}(0)$, then the corresponding solution
of (\ref{mappingt}) satisfies 
$$
\lim_{n\rightarrow +\infty}{(x_n , z_n )}=0.
$$
\item[iii)]
All solution of (\ref{mappingt}) such that $(x_n , z_n )\in \tilde{\Omega}$ for all $n\geq 0$
satisfies $(x_n , z_n) \in \tilde{W}^{\text{s}}_{\text{loc}}(0)$ for all $n\geq 0$.
\item[iv)]
There exist $\varepsilon_1 >0$ and a function 
$\gamma \in C^1((-\varepsilon_1,\varepsilon_1),\mathbb{R})$ such that
\begin{equation}
\label{graphewst}
\tilde{W}^{\text{s}}_{\text{loc}}(0) =
\{ \, (x , \gamma (x ))\in \mathbb{R}^2 , |x | < \varepsilon_1 \, \} \subset \tilde\Omega.
\end{equation} 
The function $\gamma$ is odd, increasing and satisfies
$\gamma (x)=P_\alpha (x)+o(|x|^\alpha )$ when $x\rightarrow 0$.
\end{itemize}
\end{theorem}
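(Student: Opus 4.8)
The plan is to transport the local stable manifold of $M$ through the conjugacy $T=h\circ M\circ h^{-1}$ of Lemma \ref{conju}. Items i)--iii) are then essentially formal. Given $(x_0,z_0)\in\tilde{W}^{\text{s}}_{\text{loc}}(0)=h(W^{\text{s}}_{\text{loc}}(0))$, the orbit of $T$ reads $(x_n,z_n)=T^n(x_0,z_0)=h\big(M^n(u_0,w_0)\big)$ with $(u_0,w_0):=h^{-1}(x_0,z_0)\in W^{\text{s}}_{\text{loc}}(0)$; positive invariance of $W^{\text{s}}_{\text{loc}}(0)$ under $M$ (Theorem \ref{stablemm} i)) gives i), the convergence $M^n(u_0,w_0)\to 0$ (Theorem \ref{stablemm} ii)) together with continuity of $h$ at $0$ gives ii), and if $(x_n,z_n)\in\tilde{\Omega}=h(\Omega)$ for all $n\ge0$ then $h^{-1}(x_n,z_n)\in\Omega$ for all $n\ge0$, so Theorem \ref{stablemm} iii) places these points on $W^{\text{s}}_{\text{loc}}(0)$ and hence $(x_n,z_n)$ on $\tilde{W}^{\text{s}}_{\text{loc}}(0)$. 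The inclusion $\tilde{W}^{\text{s}}_{\text{loc}}(0)=h(W^{\text{s}}_{\text{loc}}(0))\subset h(\Omega)=\tilde{\Omega}$ is automatic.

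For iv) I would first make the curve $\tilde{W}^{\text{s}}_{\text{loc}}(0)=h(W^{\text{s}}_{\text{loc}}(0))$ explicit. Setting $\xi:=\mathrm{Id}+\psi_1$ and using (\ref{eqpsi1}) with $s=1$ together with $\psi_2=\psi_1\circ\psi_1$ from (\ref{formpsi}), one rewrites the functional equation for $\psi_1$ as $\psi_1(u_0)=P_\alpha\big(\xi(\psi_1(u_0))\big)+P_\alpha\big(\xi(u_0)\big)$; feeding $(u_0,\psi_1(u_0))$ into (\ref{maph}) then gives
\[
h\big(u_0,\psi_1(u_0)\big)=\big(\,\sigma(u_0)\,,\,\sigma(\psi_1(u_0))\,\big),\qquad \sigma:=P_\alpha\circ\xi .
\]
Now $\psi_1\in C^1$ with $\psi_1'(0)=0$ and $\psi_1'\ge0$ near $0$, by the estimate $\psi_1'(u_0)=s\,P_\alpha'(u_0)+\mbox{O}(|u_0|^{2\alpha-2})$ from the proof of Theorem \ref{unstm}; hence $\xi$ is a strictly increasing $C^1$ diffeomorphism of a neighbourhood of $0$ with $\xi'(0)=1$, and $\sigma$ is an odd, strictly increasing, $C^1$ homeomorphism of a neighbourhood of $0$ onto an interval $(-\varepsilon_1,\varepsilon_1)$, with $\sigma'(0)=P_\alpha'(0)\,\xi'(0)=0$. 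Reparametrizing by $x=\sigma(u_0)$ shows that $\tilde{W}^{\text{s}}_{\text{loc}}(0)$ is the graph over $(-\varepsilon_1,\varepsilon_1)$ of
\[
\gamma:=\sigma\circ\psi_1\circ\sigma^{-1},
\]
which is odd and increasing since $\psi_1$ and $\sigma$ are, and whose graph lies in $\tilde{\Omega}$ by the inclusion of the previous paragraph.

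It remains to show $\gamma\in C^1\big((-\varepsilon_1,\varepsilon_1)\big)$, and here lies the whole difficulty: conjugating the smooth graph map $\psi_1$ by $\sigma$, which is a homeomorphism but not a diffeomorphism at $0$ (as $\sigma'(0)=0$), a priori destroys differentiability at the origin, and no standard invariant-manifold theorem helps since $T$ and $h^{-1}$ are themselves non-differentiable there. Off $0$ smoothness is clear, since there $\sigma'\neq0$ so $\sigma^{-1}\in C^1$ and $\psi_1\in C^1$. At $0$ I would invoke the elementary criterion that a function continuous near $0$, of class $C^1$ off $0$, and with $\lim_{x\to0}\gamma'(x)$ finite is $C^1$ at $0$ with that limit as derivative. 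For $x\neq0$, writing $u=\sigma^{-1}(x)$,
\[
\gamma'(x)=\frac{\sigma'\big(\psi_1(u)\big)}{\sigma'(u)}\,\psi_1'(u),
\]
and inserting the sharp expansions from Theorem \ref{unstm} — $\psi_1(u)=P_\alpha(u)+o(|u|^\alpha)$ so $|\psi_1(u)|\sim|u|^\alpha$, $\psi_1'(u)=\alpha|u|^{\alpha-1}+o(|u|^{\alpha-1})$, and $\sigma'(t)=\alpha|t|^{\alpha-1}(1+o(1))$ (from $\xi(t)\sim t$, $\xi'(t)\to1$) — yields $\gamma'(x)=\alpha|x|^{\alpha-1}(1+o(1))\to0$ as $x\to0$ (using $|x|\sim|u|^\alpha$ and $\alpha>1$). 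Thus $\gamma\in C^1$ near $0$ with $\gamma'(0)=0$, reflecting tangency of $\tilde{W}^{\text{s}}_{\text{loc}}(0)$ to the stable axis $z=0$. Finally, tracking the same asymptotics through $\gamma(x)=\sigma\big(\psi_1(\sigma^{-1}(x))\big)$, with $\sigma(s)=P_\alpha(s)+o(|s|^\alpha)$, gives $\gamma(x)=P_\alpha(x)+o(|x|^\alpha)$, which finishes iv).
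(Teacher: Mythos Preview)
Your proof is correct and follows essentially the same route as the paper: items i)--iii) are transported through the conjugacy $h$, and for iv) you identify $h(u_0,\psi_1(u_0))=(\sigma(u_0),\sigma(\psi_1(u_0)))$ with $\sigma=P_\alpha\circ(\mathrm{Id}+\psi_1)$ (the paper writes $\varphi$ for your $\xi$), leading to $\gamma=\sigma\circ\psi_1\circ\sigma^{-1}=P_\alpha\circ\varphi\circ\psi_1\circ\varphi^{-1}\circ P_{1/\alpha}$. Your treatment of the $C^1$ regularity at the origin via the explicit computation of $\gamma'(x)$ and its limit is in fact more detailed than the paper's, which argues more tersely that $(\psi_1\circ\varphi^{-1}\circ P_{1/\alpha})(x)=x+o(x)$ ``compensates'' the non-differentiability of $P_{1/\alpha}$ and then infers $C^1$ from $\gamma(x)=x^\alpha+o(x^\alpha)$.
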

\begin{proof}
Properties {\em i), ii), iii)} are direct consequences of theorem \ref{stablemm}
and lemma \ref{conju}. Let us prove property {\em iv)}. Using 
(\ref{defwt}) and (\ref{eqfoncpsi1}), one 
obtains after elementary computations
$$
\tilde{W}^{\text{s}}_{\text{loc}}(0) = 
\{ \, ((P_\alpha \circ \varphi )(u_0) , (P_\alpha \circ \varphi )(\psi_1 (u_0 )))
\in \mathbb{R}^2 , |u_0 | < \varepsilon \, \}
$$
where $\varphi = \mbox{Id}+\psi_1$ is odd and
defines a $C^1$-diffeomorphism from $(-\varepsilon , \varepsilon )$ onto its image
since $\varphi^\prime >0$. This yields the parametrization (\ref{graphewst}),
where $x=(P_\alpha \circ \varphi )(u_0)$, $\varepsilon_1 = {\varphi(\varepsilon)}^\alpha$
and $\gamma = P_\alpha \circ \varphi \circ \psi_1 \circ \varphi^{-1} \circ P_{1/\alpha}$.
The function $\gamma$ is odd, increasing and belongs to $C^1 ((0 ,\varepsilon_1),\mathbb{R})$.
Using (\ref{estpsi1}), one obtains 
$$
(\psi_1 \circ \varphi^{-1} \circ P_{1/\alpha})(x)=
x +o(x ) \mbox{ when } 
x\rightarrow 0^+ ,
$$
so that the non-differentiability of $P_{1/\alpha}$ at $x=0$ is compensated by $\psi_1$.
Consequently, one finds
$$
\gamma (x)=x^\alpha +o(x^\alpha ) \mbox{ when } 
x\rightarrow 0^+ ,
$$
which implies $\gamma \in C^1 ((-\varepsilon_1 ,\varepsilon_1),\mathbb{R})$.
{\hfill $\Box$}\end{proof}

Due to the reversibility of $T$ under $R_1$, the fixed point of $T$ at the origin also
admits an unstable manifold, which is simply
\begin{equation}
\label{graphewut}
\tilde{W}^{\text{u}}_{\text{loc}}(0) =R_1 \tilde{W}^{\text{s}}_{\text{loc}}(0) =
\{ \, ( \gamma (z ), z)\in \mathbb{R}^2 , |z | < \varepsilon_1 \, \} ,
\end{equation} 
as stated in the following theorem.

\begin{theorem}
\label{unstablemt}
The manifold $\tilde{W}^{\text{u}}_{\text{loc}}(0)$ possesses the following properties.
\begin{itemize}
\item[i)]
$\tilde{W}^{\text{u}}_{\text{loc}}(0)$ is negatively invariant by $T$.
\item[ii)]
If $(x_0 , z_0) \in \tilde{W}^{\text{u}}_{\text{loc}}(0)$, then the corresponding solution
of (\ref{mappingt}) satisfies 
$$
\lim_{n\rightarrow -\infty}{(x_n , z_n )}=0.
$$
\item[iii)]
All solution of (\ref{mappingt}) such that $(x_n , z_n )\in R_1 \tilde{\Omega}$ for all $n\leq 0$
satisfies $(x_n , z_n) \in \tilde{W}^{\text{u}}_{\text{loc}}(0)$ for all $n\leq 0$.
\end{itemize}
\end{theorem}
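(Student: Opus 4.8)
The plan is to obtain all three properties, together with the parametrization (\ref{graphewut}), as formal consequences of Theorem \ref{stablemt} and the reversibility of $T$, using only the identities $R_1\circ T=T^{-1}\circ R_1$ and $R_1^2=\mathrm{Id}$ and the fact that $T$ is a homeomorphism of $\mathbb{R}^2$, so that every orbit of (\ref{mappingt}) extends uniquely to all $n\in\mathbb{Z}$. The central observation is a time-reversal correspondence between orbits: if $(X_n)_{n\in\mathbb{Z}}$ solves (\ref{mappingt}), then $Y_n:=R_1\,X_{-n}$ also solves (\ref{mappingt}), since $T(Y_n)=T\,R_1\,X_{-n}=R_1\,T^{-1}X_{-n}=R_1\,X_{-(n+1)}=Y_{n+1}$. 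Iterating $T^{-1}R_1=R_1T$ gives more generally $T^{-n}R_1=R_1T^{n}$, so that the orbit through $R_1\,X_0'$ is exactly $n\mapsto R_1\,X_{-n}'$ where $X_n'$ is the orbit through $X_0'$; in particular $R_1$ carries orbits converging to $0$ as $n\to+\infty$ onto orbits converging to $0$ as $n\to-\infty$.

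Granting this, the parametrization (\ref{graphewut}) is immediate: combining $\tilde W^{\text u}_{\text{loc}}(0)=R_1\tilde W^{\text s}_{\text{loc}}(0)$ with the graph representation (\ref{graphewst}) and $R_1(x,z)=(z,x)$ yields $\tilde W^{\text u}_{\text{loc}}(0)=\{(\gamma(z),z):|z|<\varepsilon_1\}$. For property \emph{i)}, I would write $T^{-1}(\tilde W^{\text u}_{\text{loc}}(0))=(T^{-1}R_1)(\tilde W^{\text s}_{\text{loc}}(0))=R_1\,T(\tilde W^{\text s}_{\text{loc}}(0))\subset R_1\tilde W^{\text s}_{\text{loc}}(0)=\tilde W^{\text u}_{\text{loc}}(0)$, the inclusion being the positive invariance of $\tilde W^{\text s}_{\text{loc}}(0)$ under $T$ from Theorem \ref{stablemt}\,\emph{i)}.

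For property \emph{ii)}, given $(x_0,z_0)\in\tilde W^{\text u}_{\text{loc}}(0)$ I would write $(x_0,z_0)=R_1(x_0',z_0')$ with $(x_0',z_0')\in\tilde W^{\text s}_{\text{loc}}(0)$; by the time-reversal correspondence the orbit $X_n$ through $(x_0,z_0)$ satisfies $X_{-n}=R_1X_n'$, where $X_n'$ is the orbit through $(x_0',z_0')$, and $X_n'\to 0$ as $n\to+\infty$ by Theorem \ref{stablemt}\,\emph{ii)}, whence $X_n\to 0$ as $n\to-\infty$. For property \emph{iii)}, let $X_n$ be an orbit of (\ref{mappingt}) with $X_n\in R_1\tilde\Omega$ for all $n\le 0$ and set $Y_n:=R_1X_{-n}$; this is again an orbit of $T$, and for $n\ge 0$ one has $Y_n=R_1X_{-n}\in R_1(R_1\tilde\Omega)=\tilde\Omega$, so Theorem \ref{stablemt}\,\emph{iii)} gives $Y_n\in\tilde W^{\text s}_{\text{loc}}(0)$ for all $n\ge 0$, i.e. $X_{-n}=R_1Y_n\in R_1\tilde W^{\text s}_{\text{loc}}(0)=\tilde W^{\text u}_{\text{loc}}(0)$ for all $n\ge 0$. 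Since every step is a purely algebraic manipulation with the involution $R_1$, there is no genuine analytic difficulty here: unlike Theorem \ref{stablemt}\,\emph{iv)}, whose delicate point was the $C^1$ smoothness of the graph at the origin, the present statement is a direct corollary of the stable-manifold results, and the only thing to watch is bookkeeping of the index reversal and of the reflected neighborhood $R_1\tilde\Omega$.
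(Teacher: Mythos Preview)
Your proposal is correct and follows exactly the approach indicated in the paper, which simply states that the result follows from the reversibility of $T$ under $R_1$ and does not spell out any details. Your careful bookkeeping of the time-reversal correspondence $Y_n=R_1X_{-n}$ and of the reflected neighborhood $R_1\tilde\Omega$ makes explicit precisely what the paper leaves to the reader.
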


As a consequence of theorem \ref{stablemt}, 
\begin{equation}
\label{defstablemt}
\tilde{W}^{\text{s}}(0)=\cup_{n\geq 0}{T^{-n}(\tilde{W}^{\text{s}}_{\text{loc}}(0))}
\end{equation}
defines the (global) stable manifold of the origin, i.e. 
the set of initial conditions $(x_0 , z_0)$ such that $\lim_{n\rightarrow +\infty}{(x_n , z_n )}=0$.
In the same way,
\begin{equation}
\label{defunstablemt}
\tilde{W}^{\text{u}}(0)=\cup_{n\geq 0}{T^{n}(\tilde{W}^{\text{u}}_{\text{loc}}(0))}
\end{equation}
defines the (global) unstable manifold of the origin, i.e. 
the set of initial conditions $(x_0 , z_0)$ such that $\lim_{n\rightarrow -\infty}{(x_n , z_n )}=0$.

In the same way, the stable and unstable manifolds of the origin for the map $F$
are respectively 
\begin{equation}
\label{defmanf}
\mathcal{W}^{\text{s}}(0)=J\, \tilde{W}^{\text{s}}(0), \ \ \
\mathcal{W}^{\text{u}}(0)=J\, \tilde{W}^{\text{u}}(0)=\mathcal{R}_1 \mathcal{W}^{\text{s}}(0).
\end{equation} 
In addition, we shall later refer to the local stable and unstable manifolds defined by
\begin{equation}
\label{defmanfloc}
\mathcal{W}^{\text{s}}_{\text{loc}}(0)=J\, \tilde{W}^{\text{s}}_{\text{loc}}(0), \ \ \
\mathcal{W}^{\text{u}}_{\text{loc}}(0)=J\, \tilde{W}^{\text{u}}_{\text{loc}}(0)
=\mathcal{R}_1 \mathcal{W}^{\text{s}}_{\text{loc}}(0).
\end{equation} 
We have more explicitly
\begin{equation}
\label{smanfloc}
\mathcal{W}^{\text{s}}_{\text{loc}}(0)=
\{ \, ( \gamma (x )+x, \gamma(x))\in \mathbb{R}^2 , |x| < \varepsilon_1 \, \} ,
\end{equation} 
\begin{equation}
\label{umanfloc}
\mathcal{W}^{\text{u}}_{\text{loc}}(0)=
\{ \, ( \gamma (z )+z, z)\in \mathbb{R}^2 , |z | < \varepsilon_1 \, \} .
\end{equation} 

\section{\label{homsol} Homoclinic solutions of the stationary DpS equation}

This section is devoted to the proof of theorem \ref{mainthm}. 
For this purpose, we use the results of section \ref{sum}
combined with a reformulation of the results of \cite{qin} adapted to our context.

\subsection{Homoclinic solutions of generalized DNLS equations}

We consider system (\ref{dnlsstat2}) where
$q\, : \, \mathbb{R} \rightarrow \mathbb{R}$ is an odd function. 
In the sequel we assume $q \in W^{1,1}_{\text{loc}}(\mathbb{R})$, where
$W^{1,1}_{\text{loc}}$ refers to a classical Sobolev space \cite{eg}.
Introducing $z_n = x_{n+1}$, one can reformulate (\ref{dnlsstat2})
as a two-dimensional reversible mapping given by (\ref{mappingt}).
The following existence theorem for homoclinic solutions is 
essentially proved in \cite{qin}, but we shall provide
the whole proof for completeness.

\begin{theorem}
\label{homocl}
Assume the function
$q \in W^{1,1}_{\text{loc}}(\mathbb{R})$ is odd, and 
there exists $x^\ast >0$ such that $q(x^\ast )=0$ and
\begin{equation}
\label{condsigne}
q>0 \mbox{ on } (0,x^\ast ), \ \ \
q^\prime <0 \mbox{ on } (x^\ast , +\infty ).
\end{equation}
Moreover, assume the existence of a curve $\Gamma^{\text{u}}_0$ 
negatively invariant by $T$ and taking the form
\begin{equation}
\label{eqga0u}
\Gamma_0^{\text{u}}=\{ \, (x,z)\in  \mathbb{R}^2 , x=\gamma_{\text{u}}(z), \ 0<z<\epsilon_1   \, \} ,
\end{equation}
where $\epsilon_1 \in (0,x^\ast )$, $\gamma_{\text{u}} \in C^1([0 , \epsilon_1],\mathbb{R})$
and $0<\gamma_{\text{u}}(z)<z$ for all $z\in (0,\epsilon_1 ]$. 

Under the above conditions,
there exist symmetric homoclinic solutions of (\ref{mappingt}) taking the form
$X_n = \pm (x_n^i , x_{n+1}^i)^T$ ($i=1,2$), with 
$\lim_{n\rightarrow \pm\infty}x_n^i=0$, 
$x_n^i > x_{n-1}^i >0$ for all $n\leq 0$, $x_0^i > x^\ast$ and
$$
x_n^1 = x_{-n+1}^1, \ \ \  
x_n^2 =x_{-n}^2, \ \ \  
\mbox{ for all } n \in \mathbb{Z}.
$$
\end{theorem}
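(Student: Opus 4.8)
The plan is to construct the homoclinic orbit by following the negatively invariant curve $\Gamma_0^{\text{u}}$ forward under $T$ until it meets the symmetry section of one of the reversors $R_1$ or $R_2$, and then reflect. First I would note that the hypotheses on $q$ (oddness, $q>0$ on $(0,x^\ast)$, $q^\prime<0$ on $(x^\ast,+\infty)$, $q(x^\ast)=0$) give a monotonicity structure for the iteration $x_{n+1}=q(x_n)+2x_n-x_{n-1}$: on a trajectory that starts on $\Gamma_0^{\text{u}}$ near the origin with $0<\gamma_{\text{u}}(z)<z$, one has $x_0>x_{-1}>\cdots>0$, and the second difference $(\Delta x)_n=q(x_n)$ is positive as long as $0<x_n<x^\ast$, so the sequence is strictly increasing and strictly convex while it stays in that range. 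Hence the iterates must eventually exceed $x^\ast$ (they cannot stay bounded below $x^\ast$ with positive second differences bounded away from $0$). Once past $x^\ast$, the second difference $q(x_n)$ becomes negative, so the sequence is now strictly concave; the key monotonicity step is to show that $x_{n+1}-x_n$ decreases and, in fact, crosses zero — i.e. the trajectory reaches a ``turning point'' where either $x_{n+1}=x_n$ (fixed set of $R_2=R_1\circ T$, giving the $i=1$ solution) or $x_{n+1}=x_{n-1}$ type symmetry on the diagonal $x=z$ (fixed set of $R_1$, giving the $i=2$ solution).

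The second step is the turning-point argument. Following $\Gamma_0^{\text{u}}$ forward under $T$, its image $T^n(\Gamma_0^{\text{u}})$ is a curve whose endpoint travels along the actual unstable orbit. I would track the point where this curve crosses the line $\{x=z\}$ (the symmetry line $\mathrm{Fix}(R_1)$) or the line $\{z=q(x)+x\}$-type image (the symmetry line $\mathrm{Fix}(R_2)$, which is $\{x=z\}$ pulled back by $T$, i.e. $T$ maps $\{x=z\}$ onto itself-image). Concretely: starting on the unstable manifold the orbit increases; by convexity/concavity as above there is a first index $n_0$ where the increments $d_n=x_n-x_{n-1}$ change from positive to nonpositive. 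If $d_{n_0}=0$ exactly we land on $\mathrm{Fix}(R_2)$ and set the orbit symmetric about the bond; otherwise $d_{n_0}<0<d_{n_0-1}$ and a continuity/intermediate-value argument in the initial parameter $z\in(0,\epsilon_1)$ produces a value for which the orbit lands exactly on a symmetry section. Reflecting the finite increasing piece $x_0,x_{-1},\dots$ by the appropriate reversor $R_i$ then yields a bi-infinite orbit with $\lim_{n\to\pm\infty}x_n=0$ (the backward tail converges to $0$ because it lies on $\Gamma_0^{\text{u}}\subset$ unstable manifold, and the forward tail is its mirror image, hence on the stable manifold), which is precisely the claimed homoclinic solution with $x_0>x^\ast$ and $x_n^i>x_{n-1}^i>0$ for $n\le 0$. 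The two sign choices $\pm$ come from the oddness of $q$.

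I would organize the proof as a sequence of lemmas: (a) any forward orbit starting on $\Gamma_0^{\text{u}}$ stays in the open first quadrant and is increasing and convex while below the line $x=x^\ast$, and therefore exits the strip $0<x<x^\ast$ in finitely many steps; (b) after exiting, the orbit is concave and its increments are strictly decreasing, so there is a well-defined ``last increasing step''; (c) a shooting/continuity argument: parametrize orbits by $z\in(0,\epsilon_1]$, let $n(z)$ be the first index with $x_{n}(z)\ge x_{n-1}(z)$ reversed, and show that as $z$ varies the ``overshoot'' $x_{n(z)}(z)-x_{n(z)-1}(z)$ takes both signs (or vanishes), so by the intermediate value theorem some $z$ gives an exact hit on $\mathrm{Fix}(R_1)$ and some $z$ an exact hit on $\mathrm{Fix}(R_2)$; (d) reflect and verify the decay, positivity, monotonicity, and symmetry relations $x_n^1=x_{-n+1}^1$, $x_n^2=x_{-n}^2$ directly from $R_i\circ T=T^{-1}\circ R_i$.

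The main obstacle I expect is step (c), the shooting argument — specifically, ensuring that as the initial point ranges over $\Gamma_0^{\text{u}}$ the corresponding orbits sweep continuously across \emph{both} symmetry sections, so that \emph{both} the even-parity ($i=1$) and odd-parity ($i=2$) homoclinics are captured, and that the turning index $n(z)$ does not blow up as $z\to 0^+$ in a way that destroys continuity. This requires care with the non-uniformity of the number of iterations needed to exit the strip $0<x<x^\ast$ (which tends to infinity as $z\to 0$, since the orbit creeps along the unstable manifold), and with continuous dependence of $T^n$ on initial data combined with the behavior of the curve $T^n(\Gamma_0^{\text{u}})$; one typically resolves this by working with the geometry of the curves $T^n(\Gamma_0^{\text{u}})$ rather than individual orbits, showing these curves are graphs that sweep monotonically across the relevant region and must eventually straddle the symmetry line. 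A secondary technical point is justifying the monotonicity/convexity lemmas under only the $W^{1,1}_{\text{loc}}$ regularity of $q$ rather than $C^1$, but this is handled since the sign conditions (\ref{condsigne}) are pointwise and $q$ is continuous.
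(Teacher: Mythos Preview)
Your approach is essentially the paper's: iterate $\Gamma_0^{\text{u}}$ forward under $T$, use the sign of $q$ to control second differences, show the resulting curve meets the symmetry sections of both reversors, and reflect. The paper sidesteps your step-(c) shooting difficulty by working from the outset with the full curve $\Gamma^{\text{u}}=\bigcup_{n\ge0}T^n(\Gamma^{\text{u}}_0)$ and the scalar quantity $\mathrm{dist}(X,\mathcal{D}_1)=(z-x)/\sqrt{2}$ (equivalently your increment $d_n$), which obeys $\mathrm{dist}(TX,\mathcal{D}_1)=\mathrm{dist}(X,\mathcal{D}_1)+q(z)/\sqrt{2}$. In $\Omega_0=\{0<x<z<x^\ast\}$ this distance increases, so $\Gamma^{\text{u}}$ must exit through the segment $\{z=x^\ast\}$ into $\Omega_1=\{z>x^\ast,\,z>x>0\}$; there a uniform-decrease estimate (using $q'<0$ on $(x^\ast,\infty)$ to get $q(z_n)<q(z_1)<0$) forces $\Gamma^{\text{u}}$ to cross the diagonal $\mathcal{D}_1=\mathrm{Fix}(R_1)$, and since $\mathrm{Fix}(R_2)=\{x=z+q(z)/2\}$ separates the entry segment $\{z=x^\ast\}$ from $\mathcal{D}_1\cap(x^\ast,\infty)^2$ within $\Omega_1$, the curve also crosses $\mathrm{Fix}(R_2)$. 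This is precisely the curve-geometric resolution you anticipate in your final paragraph, so there is no substantive gap --- the paper simply adopts it from the start and never parametrizes by the initial point, which spares it the non-uniformity of the turning index $n(z)$. One correction: your identification of the reversors is swapped. The diagonal $\{x=z\}$, i.e.\ $x_{n+1}=x_n$, is $\mathrm{Fix}(R_1)$ (not $R_2$) and yields the $i=1$ solution with symmetry $x^1_n=x^1_{-n+1}$; the curve $\mathrm{Fix}(R_2)=\{x=z+q(z)/2\}$ yields $i=2$ with $x^2_n=x^2_{-n}$.
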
 

\begin{Remark}
Under the above assumptions,
$q$ has exactly three real zeros, at $x=\pm x^\ast$ and $x=0$,
and $T$ has three fixed points $X=\pm (x^\ast, x^\ast)^T$ and $X=0$.
\end{Remark}

\begin{Remark}
Due to the reversibility of (\ref{mapt}), the assumption made on $\Gamma^{\text{u}}_0$
in theorem \ref{homocl} is equivalent to the existence of a curve
$$
R_1 \Gamma_0^{\text{u}}=\{ \, (x,z)\in  \mathbb{R}^2 , z=\gamma_{\text{u}}(x), \ 0<x<\epsilon_1   \, \} 
$$
positively invariant by $T$.
\end{Remark}

\begin{Remark}
In the work \cite{qin}, the authors have considered 
the smooth situation when condition (\ref{cond1}) is satisfied. In that case,
the assumptions made on the curve 
$\Gamma^{\text{u}}_0$ are automatically satisfied.
Indeed, condition (\ref{cond1}) implies that the origin is a hyperbolic
fixed point of $T$, since $DT(0)$ has a pair of
real eigenvalues $\Lambda , \Lambda^{-1}$ given by (\ref{vp}).
This yields the existence of  
a $C^1$ local unstable manifold of the origin, $W^{\text{u}}_{\text{loc}}(0)=-\Gamma^{\text{u}}_0 \cup \Gamma^{\text{u}}_0$,
with $\Gamma^{\text{u}}_0$ taking the form (\ref{eqga0u})
(at the origin, $W^{\text{u}}_{\text{loc}}(0)$ is tangent to the unstable subspace corresponding
to the line $x = \Lambda^{-1}\, z$).
\end{Remark}

\begin{proof}
The proof given below can be more easily followed using figure \ref{figproof}.
Since $q \in W^{1,1}_{\text{loc}}(\mathbb{R})$ and $\Gamma^{\text{u}}_0$ is negatively invariant by $T$,
$\Gamma^{\text{u}}_n = T^n (\Gamma^{\text{u}}_0) \subset \Gamma^{\text{u}}_{n+1}$
defines an increasing sequence of continuous rectifiable curves. 
Let us consider $\Gamma^{\text{u}} = \cup_{n\geq 0}\Gamma^{\text{u}}_n$.
Since $\Gamma^{\text{u}}_0$ lies in the sector
$$
\Sigma =\{ \, (x,z)\in  (0,+\infty )^2 , \, z>x\, \} ,
$$
and $T^{-1}$ maps the $z$-axis outside $\Sigma$
(onto the $x$-axis), 
the curve $\Gamma^{\text{u}}$ cannot exit $\Sigma$ by crossing the $z$-axis.
In what follows we show that
$\Gamma^{\text{u}}$ exits $\Sigma$ crossing the
line $\mathcal{D}_1=\mbox{Fix}(R_1 )= \mbox{Span}((1,1)^T)$. 

\vspace{1ex}

For this purpose, we first observe that for all $X=(x,z)^T$, one has
$$\mbox{dist}(X , \mathcal{D}_1)=(z-x)/\sqrt{2}$$ 
and thus
\begin{equation}
\label{distance}
\mbox{dist}(T(X) , \mathcal{D}_1) =\frac{q(z)}{\sqrt{2}} + \mbox{dist}(X , \mathcal{D}_1).
\end{equation}
Let us denote by $\Omega_0$ the triangular domain
$$
\Omega_{0}=\{ \, (x,z)\in  (0,+\infty ) \times (0,x^\ast ), \, z>x\, \} 
$$
containing $\Gamma^{\text{u}}_0$.
For all $X=(x,z)^T \in \Omega_0$, $X^\prime =T(X)=(x^\prime ,z^\prime )^T$ satisfies
\begin{equation}
\label{ineg1}
x^\prime =z, \ \ \
x^\ast > x^\prime > x,
\end{equation}
\begin{equation}
\label{ineg2}
\mbox{dist}(X^\prime , \mathcal{D}_1) 
> \mbox{dist}(X , \mathcal{D}_1),
\end{equation}
where we used (\ref{distance}) and the fact that 
$q>0$ on $(0,x^\ast )$.  
By using inequalities (\ref{ineg1})-(\ref{ineg2}), one can rule out the two following
situations (for details see \cite{qin}, theorem 2.3),
\begin{itemize}
\item[i)] 
$\Gamma^{\text{u}}_n \subset \Omega_0$ for all $n\geq 0$,
\item[ii)] 
$\Gamma^{\text{u}}$ exits $\Omega_0$ by intersecting $\mathcal{D}_1$.
\end{itemize}
Consequently, $\Gamma^{\text{u}}$ exits 
$\Omega_0$ by intersecting the segment $\mathcal{S}_0=(0,x^\ast )\times \{ x^\ast \}$ at some point
$X_0^\ast = (x_0 , x^\ast )^T$, thereby entering the domain 
$$
\Omega_1=\{ \, (x,z)\in  (0,+\infty ) \times  ( x^\ast , +\infty ), \, z>x\, \} .
$$
Defining $X_n = T^n (X_0^\ast )=(x_n , z_n)^T = (x_n , x_{n+1})^T$, one can check that 
$X_1 = (x^\ast , z_1 ) \in \Omega_1$. If $X_n \in \Omega_1$ for
$1\leq n \leq n_0$, we find by induction
for all $n =2,\ldots , n_0 $,
\begin{equation}
\label{ineg4}
z_n > z_1 > x^\ast , \ \ \ 
\mbox{dist}(X_{n+1} , \mathcal{D}_1) 
< \frac{q(z_1)}{\sqrt{2}} + \mbox{dist}(X_n , \mathcal{D}_1),
\end{equation}
thanks to equality (\ref{distance}) with
$q^\prime<0$ on $(x^\ast ,+\infty)$. Since $q(z_1)<0$, (\ref{ineg4}) 
provides a nonzero minimal decrease of $\mbox{dist}(X_n , \mathcal{D}_1)$ at each step,
hence $X_n$ must leave $\Omega_1$ for large enough $n$.
Consequently,
$\Gamma^{\text{u}}$ connects two iterates
$X_{n_0}\in ([x^\ast , +\infty ) \times \mathbb{R}) \cap \Omega_1 $ 
and $X_{n_0 +1}\in ([x^\ast , +\infty ) \times \mathbb{R}) \setminus \Omega_1$,
hence $\Gamma^{\text{u}}$ intersects the boundary $\mathcal{D}_1 \cap (x^\ast,+\infty)^2$
of $\Omega_1$. Consequently, 
we have shown that
$\Gamma^{\text{u}}$ exits $\Sigma$ by crossing the
line $\mathcal{D}_1=\mbox{Fix}(R_1 )$.

\vspace{1ex}

Now let us show that for all initial condition $X_k \in \Gamma^{\text{u}}$, the corresponding solution 
$X_n = (x_n , z_n)^T=(x_n , x_{n+1})^T$ of (\ref{mappingt}) satisfies $\lim_{n\rightarrow -\infty}X_n=0$.
Let $X_k \in \Gamma^{\text{u}}_k$ and $X_0 = T^{-k}(X_k)\in \Gamma^{\text{u}}_0$.
We have $x_1 = z_0 \in (0, \epsilon_1 ]$ and
$x_{-n-1}=\gamma_{\text{u}}(x_{-n})\in (0,x_{-n})$ for all $n \geq -1$, 
since $\Gamma^{\text{u}}_0$ is negatively invariant by $T$. Consequently,
there exists $\ell\in [0,\epsilon_1 )$ such that $\lim_{n\rightarrow -\infty}{x_n}=\ell$.
It follows that $\ell =0$, since $0$ is
the only fixed point of $\gamma_{\text{u}}$ in $[0,\epsilon_1 )$.

\vspace{1ex}

Let us consider an intersection $X_0$ of $\Gamma^{\text{u}}$ and $\mathcal{D}_1$
and denote by $X_n =(x_n^1 , x_{n+1}^1)^T$ the corresponding solution
of (\ref{mappingt}). The above result implies that $\lim_{n\rightarrow -\infty}X_n=0$.
Moreover, since $(R_1 X_{-n})_{n\in \mathbb{Z}}$ and $(X_n)_{n\in \mathbb{Z}}$
define two solutions of (\ref{mappingt}) equal at $n=0$, both solutions
coincide for all $n\in \mathbb{Z}$. We have then $R_1 X_{-n}=X_n$, i.e. 
$x_n^1 = x_{-n+1}^1$ for all $n \in \mathbb{Z}$.
This implies $\lim_{n\rightarrow +\infty}X_n=0$, i.e. $(X_n)_{n\in \mathbb{Z}}$
is a solution homoclinic to $0$.
Now we consider the intersection
$X_0^1$ of $\Gamma^{\text{u}}$ and $\mathcal{D}_1 \cap (x^\ast ,+\infty)^2$
which is the closest to the origin along $\Gamma^{\text{u}}$
(i.e. the part of $\Gamma^{\text{u}}$ joining $0$ and $X_0^1$ has minimal arclength). 
Since the arc joining $0$ and $X_0^1$ does not exit $\Sigma$, we have $x_n^1 >0$ and
$x_n^1 > x_{n-1}^1$ for all $n\leq 0$.

\vspace{1ex}

There remains to check that $\Gamma^{\text{u}}$ intersects
$$
\mathcal{D}_2 = \mbox{Fix}(R_2)=\{\, (x,z)\in \mathbb{R}^2, \, x=z+\frac{q(z)}{2}   \, \} .
$$
Since $q<0$ on $(x^\ast ,+\infty)$, the curve $\mathcal{D}_2$ divides
${\Omega_1}$ into two connected components. The boundary of the first
component contains the segment $\mathcal{S}_0$, and the boundary of the second 
the half line $\mathcal{D}_1 \cap (x^\ast ,+\infty)^2$.
Since $\Gamma^{\text{u}}$ intersects $\mathcal{S}_0$ at $X_0^\ast$
and $\mathcal{D}_1 \cap (x^\ast ,+\infty)^2$ at $X_0^1$, there exists
an intersection $X_{-1}^2=(x_{-1}^2,x_{0}^2)^T$ between $\Gamma^{\text{u}}$ and $\mathcal{D}_2$
(with $x_{0}^2 > x^\ast$).
As above, since $R_2 X_{-1}^2 = X_{-1}^2$, the solution of (\ref{mappingt}) given by
$X_n =T^{n+1}\, (X_{-1}^2 )= (x_n^2 , x_{n+1}^2)^T$ 
satisfies $R_2 X_{-n-2}=X_n$, i.e. $x_n^2=x_{-n}^2$
for all $n \in \mathbb{Z}$. 
Consequently, $(X_n)_{n\in \mathbb{Z}}$
is a solution homoclinic to $0$.
Moreover,
since the arc joining $0$ and $X_{-1}^2$ does not exit $\Sigma$, we have $x_n^2 >0$ and
$x_n^2 > x_{n-1}^2$ for all $n\leq 0$.
{\hfill $\Box$}\end{proof}

\begin{figure}[h]
\begin{center}
\includegraphics[scale=0.35]{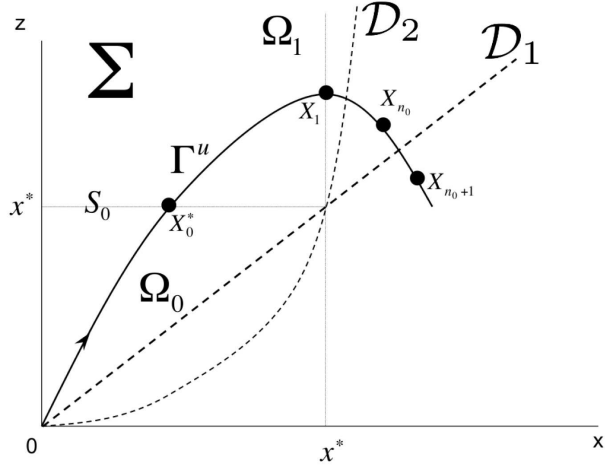}
\end{center}
\caption{\label{figproof}
Sketch of the different subsets of the plane used in
the proof of theorem \ref{homocl}.}
\end{figure}

\subsection{Application to the DpS equation}

Theorem \ref{homocl} can be applied to system (\ref{dnlsstat2})
with nonlinearity $q$ defined by (\ref{defq}), which is equivalent to the
stationary DpS equation. Indeed, $q^\prime$ is locally integrable and
the assumptions made on $q$
in theorem \ref{homocl} are satisfied with $x^\ast = 4^{\frac{\alpha}{1-\alpha}}$.
Moreover, the existence of the invariant curve $\Gamma^{\text{u}}_0$ is
established in theorem \ref{unstablemt}, where 
$\tilde{W}^{\text{u}}_{\text{loc}}(0)=-\Gamma^{\text{u}}_0 \cup \Gamma^{\text{u}}_0$.
This leads to the following result.

\begin{theorem}
\label{homocx}
There exist solutions $x_n^i$ ($i=1,2$)
of (\ref{dnlsstat2})-(\ref{defq}) satisfying 
$$
\lim_{n\rightarrow \pm\infty}x_n^i=0,
$$ 
$x_n^i > x_{n-1}^i >0$ for all $n\leq 0$, $x_0^i > 4^{\frac{\alpha}{1-\alpha}}$ and
$$
x_n^1 = x_{-n+1}^1, \ \ \  
x_n^2 =x_{-n}^2, \ \ \  
\mbox{ for all } n \in \mathbb{Z}.
$$
\end{theorem}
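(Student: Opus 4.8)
The plan is to obtain Theorem~\ref{homocx} as a direct consequence of Theorem~\ref{homocl}, applied to system (\ref{dnlsstat2}) with the nonlinearity $q$ defined by (\ref{defq}); the only work is to check that the hypotheses of Theorem~\ref{homocl} hold, using Theorem~\ref{unstablemt} to supply the invariant curve $\Gamma_0^{\text{u}}$.

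First I would verify the regularity and sign hypotheses on $q$. The function $q(x)=P_{1/\alpha}(x)-4x=x|x|^{1/\alpha-1}-4x$ is odd, and for $x\neq 0$ one has $q^\prime(x)=\frac{1}{\alpha}|x|^{1/\alpha-1}-4$; since $\alpha>1$ the exponent $1/\alpha-1$ lies in $(-1,0)$, so $x\mapsto|x|^{1/\alpha-1}$ is locally integrable and $q\in W^{1,1}_{\text{loc}}(\mathbb{R})$. Solving $q(x^\ast)=0$ for $x^\ast>0$ gives $(x^\ast)^{1/\alpha-1}=4$, i.e. $x^\ast=4^{\frac{\alpha}{1-\alpha}}\in(0,1)$. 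On $(0,x^\ast)$ we have $q(x)=x\,(x^{1/\alpha-1}-4)$, and since $1/\alpha-1<0$ the map $x\mapsto x^{1/\alpha-1}$ is decreasing and exceeds $4$ there, so $q>0$ on $(0,x^\ast)$. For $x>x^\ast$ we get $x^{1/\alpha-1}<(x^\ast)^{1/\alpha-1}=4$, hence $q^\prime(x)=\frac{1}{\alpha}x^{1/\alpha-1}-4<\frac{4}{\alpha}-4<0$ because $\alpha>1$; thus (\ref{condsigne}) holds.

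Next I would extract the curve $\Gamma_0^{\text{u}}$ from the local unstable manifold constructed in Theorem~\ref{unstablemt}, where $\tilde{W}^{\text{u}}_{\text{loc}}(0)=\{(\gamma(z),z):|z|<\varepsilon_1\}$ with $\gamma\in C^1((-\varepsilon_1,\varepsilon_1),\mathbb{R})$ odd, increasing, and $\gamma(z)=z^\alpha+o(z^\alpha)$ as $z\to 0^+$. Choosing $\epsilon_1\in(0,\varepsilon_1)$ small enough that $0<\gamma(z)<z<x^\ast$ for all $z\in(0,\epsilon_1]$ (possible since $\alpha>1$), and setting $\gamma_{\text{u}}=\gamma|_{[0,\epsilon_1]}$ and $\Gamma_0^{\text{u}}=\{(\gamma_{\text{u}}(z),z):0<z<\epsilon_1\}$, the required estimate $0<\gamma_{\text{u}}(z)<z$ and the regularity $\gamma_{\text{u}}\in C^1([0,\epsilon_1],\mathbb{R})$ both hold. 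Since $q$ is odd, $T$ commutes with $-\mathrm{Id}$, so $\Gamma_0^{\text{u}}$ is exactly the branch of $\tilde{W}^{\text{u}}_{\text{loc}}(0)=-\Gamma_0^{\text{u}}\cup\{0\}\cup\Gamma_0^{\text{u}}$ lying in the open first quadrant, and its negative invariance by $T$ follows from that of $\tilde{W}^{\text{u}}_{\text{loc}}(0)$ (Theorem~\ref{unstablemt}, item~i) together with the fact that $T^{-1}$ contracts along the unstable manifold, keeping $T^{-1}(\Gamma_0^{\text{u}})$ inside $\Gamma_0^{\text{u}}$. Theorem~\ref{homocl} then applies and produces symmetric homoclinic solutions $X_n=\pm(x_n^i,x_{n+1}^i)^T$ ($i=1,2$) of (\ref{mappingt}) with the stated monotonicity, the symmetries $x_n^1=x_{-n+1}^1$, $x_n^2=x_{-n}^2$, and $x_0^i>x^\ast=4^{\frac{\alpha}{1-\alpha}}$; reading off first components yields the desired solutions of (\ref{dnlsstat2})--(\ref{defq}).

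Since this is essentially a corollary, there is no serious obstacle: the only point requiring care is matching the geometric hypotheses on $\Gamma_0^{\text{u}}$ in Theorem~\ref{homocl} with the output of Theorem~\ref{unstablemt}, namely the inequality $0<\gamma(z)<z$ and the $C^1$-regularity of $\gamma$ at $0$, both of which rest on the asymptotics $\gamma(z)=z^\alpha+o(z^\alpha)$ already established (via Theorems~\ref{unstm} and \ref{stablemt}).
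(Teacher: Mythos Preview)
Your proposal is correct and follows exactly the paper's approach: the paper derives Theorem~\ref{homocx} by applying Theorem~\ref{homocl} to (\ref{dnlsstat2})--(\ref{defq}), noting that $q^\prime$ is locally integrable, that condition~(\ref{condsigne}) holds with $x^\ast=4^{\alpha/(1-\alpha)}$, and that Theorem~\ref{unstablemt} supplies the curve $\Gamma_0^{\text{u}}$ via $\tilde{W}^{\text{u}}_{\text{loc}}(0)=-\Gamma_0^{\text{u}}\cup\Gamma_0^{\text{u}}$. Your write-up is in fact more detailed than the paper's (which states these verifications without carrying them out), and your check that the restricted branch $\Gamma_0^{\text{u}}$ inherits negative invariance from $\tilde{W}^{\text{u}}_{\text{loc}}(0)$ is a welcome clarification.
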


Returning to the DpS equation in its original form (\ref{dpsstat}), we get the following result, 
which establishes theorem \ref{mainthm}
in conjunction with lemma \ref{nonexist} proved in the appendix.

\begin{theorem}
\label{homocsolu}
The stationary DpS equation (\ref{dpsstat}) with $s=1$ admits
solutions $\tilde{a}_n^1$, $a_n^2$ satisfying the properties
$$
\lim_{n\rightarrow \pm\infty}a_n=0, \ \ \
(-1)^n\, a_{n} >0, \ \ \
|a_n | > |a_{n-1} | \ \forall n\leq 0,
$$ 
and
\begin{equation}
\label{syma}
\tilde{a}_n^1 =\tilde{a}_{-n}^1,  \ \ \ 
a_n^2 = -a_{-n+1}^2, 
 \ \ \  \mbox{ for all } n \in \mathbb{Z}.
\end{equation}
\end{theorem}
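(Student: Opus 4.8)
The plan is to do nothing more than transport the solutions $x_n^i$ produced by Theorem~\ref{homocx} backwards through the chain of changes of variables that links the stationary DpS equation (\ref{dpsstat}) to the generalized DNLS equation (\ref{dnlsstat2})--(\ref{defq}), namely the force-variable substitution $x_n=P_\alpha(u_{n+1}+u_n)$ and the staggering $a_n=(-1)^n u_n$, and then to read off the qualitative properties of the $a_n^i$ from those of the $x_n^i$. Concretely, given a solution $x=(x_n)_{n\in\mathbb{Z}}$ of (\ref{dnlsstat2})--(\ref{defq}), I would set $u_n=x_{n-1}+x_n$ (this is exactly the second component of $h^{-1}$ in (\ref{maphm1}) applied to the orbit $(x_n,x_{n+1})$ of $T$, so one may equivalently invoke Lemma~\ref{conju} and the conjugacy $M=h^{-1}\circ T\circ h$). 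Then $u_{n+1}+u_n=(\Delta x)_n+4x_n=q(x_n)+4x_n=P_{1/\alpha}(x_n)$, hence $P_\alpha(u_{n+1}+u_n)=x_n$ since $P_\alpha\circ P_{1/\alpha}=\mbox{Id}$, and likewise $P_\alpha(u_n+u_{n-1})=x_{n-1}$; adding these gives $P_\alpha(u_{n+1}+u_n)+P_\alpha(u_n+u_{n-1})=x_n+x_{n-1}=u_n$, i.e. $u$ solves (\ref{dpsstatbis}) with $s=1$, so $a_n=(-1)^n u_n$ solves (\ref{dpsstat}) with $s=1$.

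Next I would translate the properties listed in Theorem~\ref{homocx}. For $i=2$: the reflection $x_n^2=x_{-n}^2$ together with $x_n^2>x_{n-1}^2>0$ for $n\le0$ yields $x_n^2>0$ for all $n\in\mathbb{Z}$, hence $u_n^2=x_{n-1}^2+x_n^2>0$ and $(-1)^n a_n^2=u_n^2>0$; monotonicity of $x^2$ gives $u_n^2=x_n^2+x_{n-1}^2>x_{n-1}^2+x_{n-2}^2=u_{n-1}^2$, i.e. $|a_n^2|>|a_{n-1}^2|$ for $n\le0$; $\lim_{n\to\pm\infty}x_n^2=0$ gives $\lim_{n\to\pm\infty}a_n^2=0$; and $x_n^2=x_{-n}^2$ forces $u_{1-n}^2=u_n^2$, whence $a_{-n+1}^2=(-1)^{1-n}u_n^2=-(-1)^n u_n^2=-a_n^2$. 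For $i=1$ the only difference is a one-site shift: from $x_n^1=x_{-n+1}^1$ one obtains $u_{2-n}^1=u_n^1$, so $u^1$ is symmetric about $n=1$; defining the translated sequence $\tilde u_n^1=u_{n+1}^1$ (still a solution of (\ref{dpsstatbis}) by shift invariance) and $\tilde a_n^1=(-1)^n\tilde u_n^1$, one gets $\tilde u_n^1=\tilde u_{-n}^1$, hence $\tilde a_n^1=\tilde a_{-n}^1$, together with $(-1)^n\tilde a_n^1=\tilde u_n^1>0$, decay to $0$, and $|\tilde a_n^1|>|\tilde a_{n-1}^1|$ for all $n\le0$ (the edge index $n=1$ being handled by $x_1^1=x_0^1>x_{-1}^1$). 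Combined with Lemma~\ref{nonexist} for the case $s=-1$, this yields Theorem~\ref{mainthm}.

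The only point requiring care — the main (if modest) obstacle — is the bookkeeping of index shifts and of the parity of the staggering factor $(-1)^n$: one must check that the reconstructed even-parity mode becomes genuinely symmetric about $n=0$ after translation by one site and that this translation preserves monotonicity at the boundary index (which uses the specific identity $x_1^1=x_0^1$), and that the sign conventions are consistent with $(-1)^n a_n>0$. Everything else is an immediate consequence of $P_\alpha\circ P_{1/\alpha}=\mbox{Id}$ and of the properties already established in Theorem~\ref{homocx}.
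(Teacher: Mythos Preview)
Your proposal is correct and follows essentially the same route as the paper: defining $u_n=x_{n-1}+x_n$ (with a one-site shift for $i=1$) and verifying directly that this inverts the force-variable change, then reading off sign, monotonicity, symmetry and decay from Theorem~\ref{homocx}. The only cosmetic difference is that the paper writes down the shifted definition $\tilde u^1_n=x^1_n+x^1_{n+1}$ from the outset rather than applying a uniform formula and translating afterwards; your extra care with the boundary index $n=0$ via $x^1_1=x^1_0>x^1_{-1}$ is exactly the check needed.
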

\begin{proof}
From the localized solutions $x_n^i$ of theorem \ref{homocx},
let us define 
\begin{equation}
\label{defu1}
\tilde{u}^1_n=x^1_n + x^1_{n+1}, \ \ \ \tilde{a}^1_n=(-1)^n\, \tilde{u}^1_n,
\end{equation}
\begin{equation}
\label{defu2}
u^2_n=x^2_n + x^2_{n-1}, \ \ \ a^2_n=(-1)^n\, u^2_n.
\end{equation}
Since $x_n^1$ and $x_n^2$
are solution of (\ref{dnlsstat2})-(\ref{defq}), we get
$$
\tilde{u}^1_n + \tilde{u}^1_{n-1}=P_{1/\alpha}(x^1_n), \ \ \
u^2_n + u^2_{n+1}=P_{1/\alpha}(x^2_n),
$$
hence
$$
x^1_n=P_{\alpha}(\tilde{u}^1_n + \tilde{u}^1_{n-1}), \ \ \
x^2_n=P_{\alpha}(u^2_n + u^2_{n+1}).
$$
Reporting the above identities in definitions (\ref{defu1}) and (\ref{defu2}),
one finds that $\tilde{u}^1_n , u^2_n$ define solutions of (\ref{dpsstatbis}) for $s=1$
(note that ${u}^1_n:=\tilde{u}^1_{n-1}$ and $x^1_n$ are linked by
equality (\ref{defforvar})), and thus
$\tilde{a}^1_n , a^2_n$ are solutions of (\ref{dpsstat}).
The remaining properties of $\tilde{a}^1_n , a^2_n$ directly follow from 
those of $x_n^1$ and $x_n^2$.
{\hfill $\Box$}\end{proof}

Figures \ref{homocorb} and \ref{homocorb2} 
illustrate the profiles of the homoclinic solutions $\tilde{a}_n^1$, $a_n^2$ 
for different values of $\alpha$. The solutions of the stationary DpS equation (\ref{dpsstat}) with $s=1$
are computed with a Newton-type method (we use the MATLAB function fsolve), for
a finite lattice of $21$ particles with zero boundary conditions. When $\alpha $ converges towards unity, 
the homoclinic solutions become more extended and their amplitude goes to $0$. This phenomenon
will be explained in section \ref{sectapprox2}
by introducing a suitable continuum limit of the stationary DpS equation.

\begin{figure}[h]
\begin{center}
\includegraphics[scale=0.2]{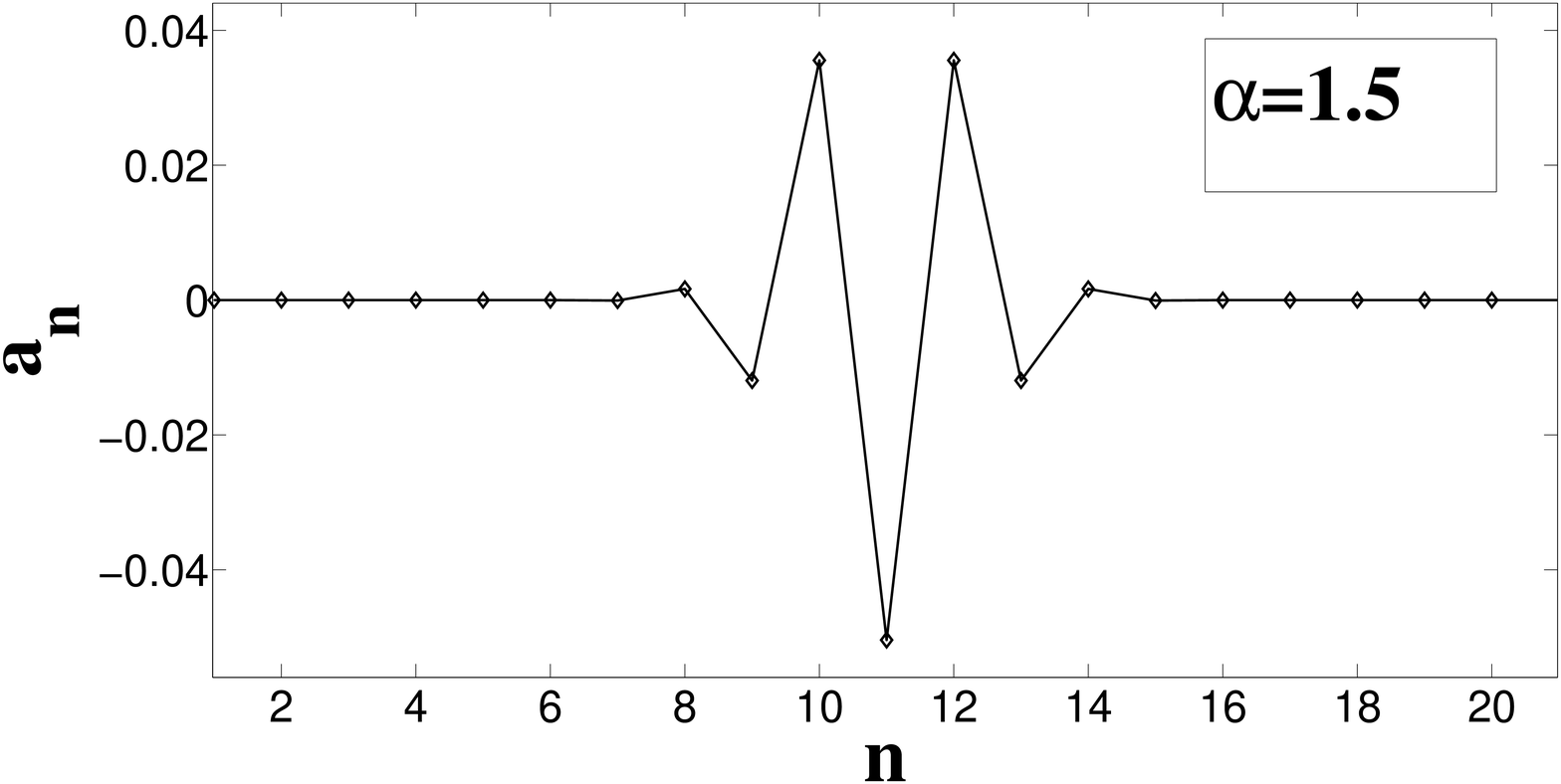}
\includegraphics[scale=0.2]{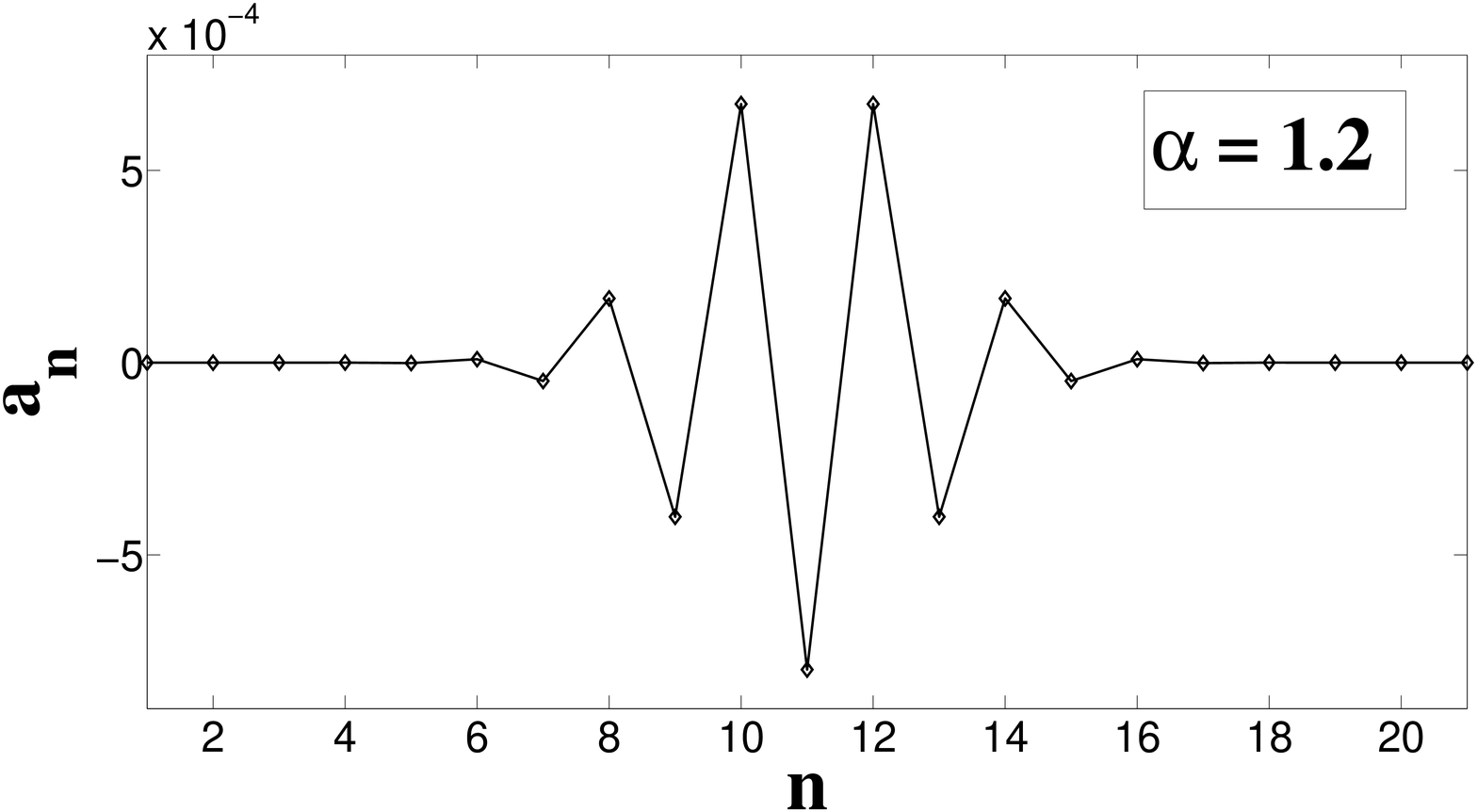}
\end{center}
\caption{\label{homocorb}
Homoclinic solution $\tilde{a}_n^1$
of the stationary DpS equation (\ref{dpsstat}) with $s=1$.
The solution is computed numerically 
for $\alpha =3/2$ (top panel) and $\alpha = 1.2$ (bottom panel).
}
\end{figure}

\begin{figure}[h]
\begin{center}
\includegraphics[scale=0.2]{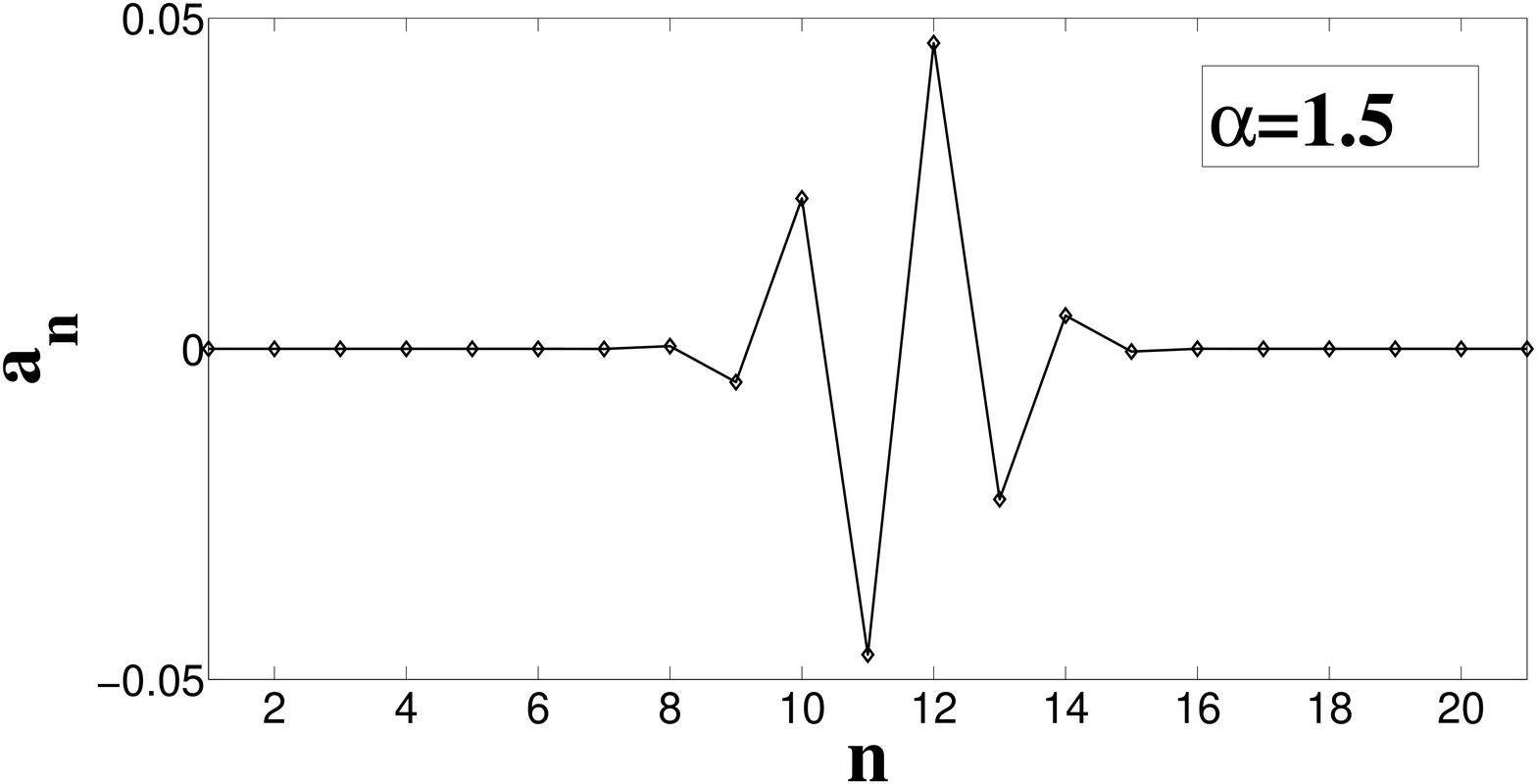}
\includegraphics[scale=0.2]{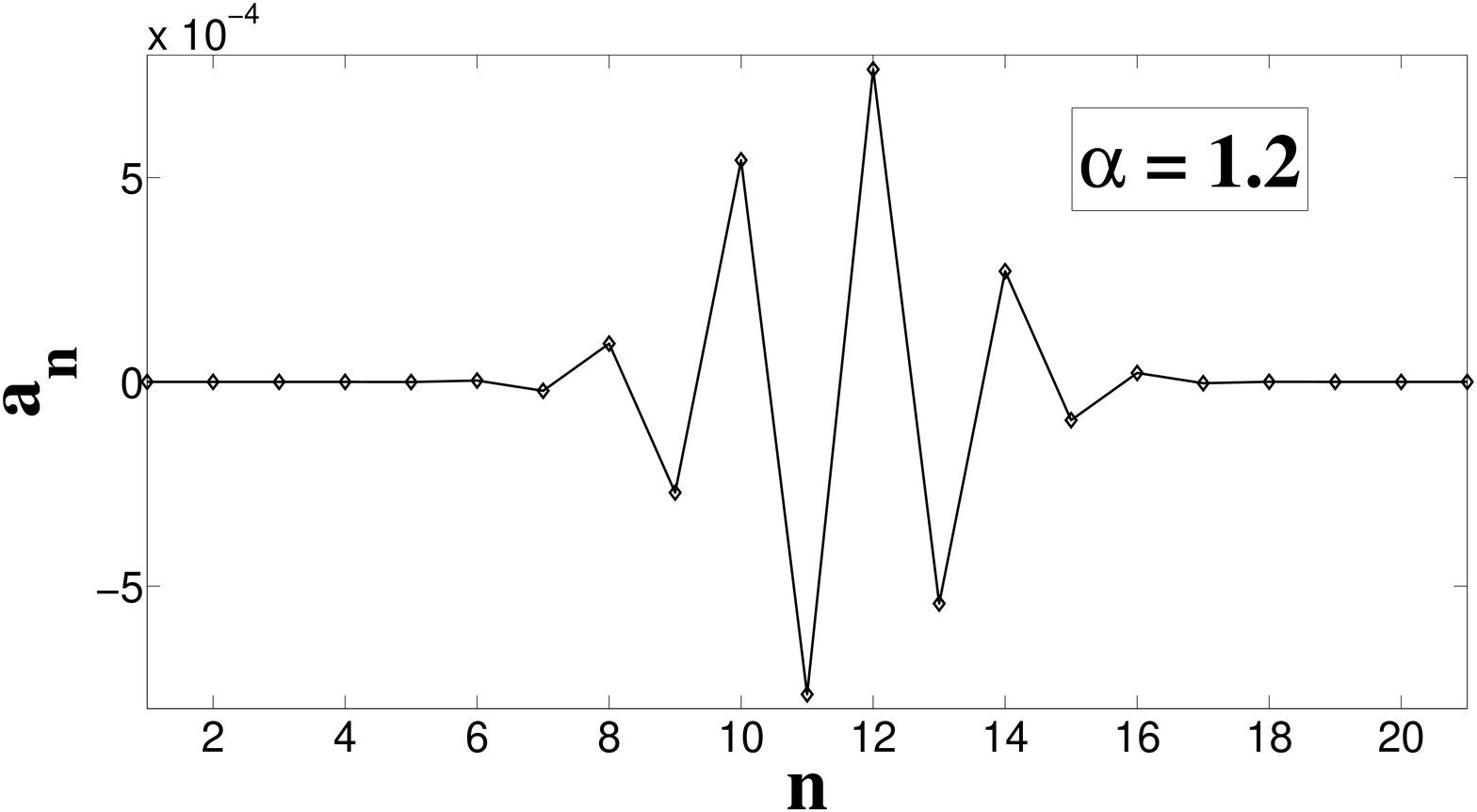}
\end{center}
\caption{\label{homocorb2}
Homoclinic solution $a_n^2$ 
of the stationary DpS equation (\ref{dpsstat}) with $s=1$.
The solution is computed numerically 
for $\alpha =3/2$ (top panel) and $\alpha = 1.2$ (bottom panel).
}
\end{figure}

One can notice that the localized solutions provided by theorem \ref{homocsolu}
are ``staggered", i.e. $a_{n+1}$ and $a_n$ have opposite signs. As shown in the following lemma,
this remains true for all localized solutions $a_n$
of (\ref{dpsstat}) provided $|n|$ is sufficiently large.
Moreover, we prove below that all localized solutions of (\ref{dpsstat})
(in particular $\tilde{a}_n^1$, $a_n^2$) decay at least doubly exponentially at infinity.

\begin{lemma}
\label{homocbehav}
Let $(a_n)_{n\in \mathbb{Z}}$ denote a solution of (\ref{dpsstat}) such that $\lim_{n\rightarrow \pm \infty}{a_n}=0$
and fix $q\in (0,1)$. There exists $n_0 \geq 0$ such that for all $n\in \mathbb{Z}$ with $|n| \geq n_0$,
$a_{n+1}$ and $a_n$ have opposite signs and 
\begin{equation}
\label{decayloc}
|a_n | \leq q^{1+\alpha^{|n|-n_0}}.
\end{equation}
\end{lemma}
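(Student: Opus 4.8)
The plan is to exploit the staggering transformation $a_n=(-1)^nu_n$ used throughout the paper: the localized solution $a_n$ becomes a solution $u_n\to 0$ of (\ref{dpsstatbis}) with $s=1$, namely $u_n=P_\alpha(u_{n+1}+u_n)+P_\alpha(u_n+u_{n-1})$. Claiming that $a_{n+1}$ and $a_n$ have opposite signs for $|n|\geq n_0$ is equivalent to claiming that $u_n$ is eventually one-signed, and establishing the decay (\ref{decayloc}) for $a_n$ is the same as establishing it for $u_n$. First I would extract from $u_n\to 0$ the following: choose $N$ so large that $|u_n|<\delta$ for all $|n|\geq N$, where $\delta$ is a small threshold to be fixed. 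Since (\ref{dpsstatbis}) restricted to indices $\geq N$ (or $\leq -N$) is exactly the system (\ref{dpsstatbis2}) studied in Theorem \ref{unstm}, and the tail $(u_N,(u_{N+n})_{n\geq 1})$ lies in $B(\varepsilon)\times B_\infty(R)$ once $\delta$ is small enough, uniqueness in Theorem \ref{unstm} forces $u_{N+k}=\psi_k(u_N)$ for all $k\geq 1$ (and symmetrically on the left, using the invariance $n\mapsto -n$ of (\ref{dpsstatbis})). This is the key reduction: the tail of any localized solution is automatically a point of the local stable manifold.

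Next, the one-sign property. By (\ref{estpsi1}) with $s=1$, $\psi_1(u_0)=P_\alpha(u_0)+\mathrm{O}(|u_0|^{2\alpha-1})$; since $\alpha>1$, for $|u_0|$ small enough $\psi_1(u_0)$ has the same sign as $u_0$ (and is nonzero when $u_0\neq 0$). If the localized solution is nontrivial, then $u_N\neq 0$ for $N$ large — indeed if $u_N=0$ for some large $N$ then $\psi_k(0)=0$ gives $u_n=0$ for all $n\geq N$, and running the recurrence (\ref{dpsstatbis}) backwards (the map $M$ is a homeomorphism, or directly: $P_\alpha(u_{N-1}+u_N)=u_N-P_\alpha(u_N+u_{N+1})=0$ forces $u_{N-1}=-u_N=0$, then inductively $u_n\equiv 0$) yields the trivial solution. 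So for nontrivial solutions $u_N\neq 0$, hence $\psi_k(u_N)$ is nonzero and sign-constant in $k$ by iterating the sign-preservation of $\psi_1$; this gives the opposite-sign property of $a_n$ for $n\geq n_0:=N$, and the left tail is handled identically. (The trivial solution satisfies the conclusion vacuously in the appropriate reading, or one simply assumes $a\not\equiv 0$.)

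Finally, the decay estimate. Fix $q\in(0,1)$. Apply property iii) of Theorem \ref{unstm} with a constant $C>1$ chosen so that $C^{\frac{1}{1-\alpha}}\leq 1$ (possible since $\frac{1}{1-\alpha}<0$) and $C^{\frac{1}{\alpha-1}}|u_0|\leq q$ for all $|u_0|<\varepsilon_C$ — i.e. shrink the neighborhood so that $\varepsilon_C\leq qC^{-\frac{1}{\alpha-1}}$. Then (\ref{bounddec}) gives $|\psi_k(u_0)|\leq q^{\alpha^k}$. Enlarging $n_0$ so that $|u_{n_0}|<\varepsilon_C$ and $|u_{-n_0}|<\varepsilon_C$, we get for $n\geq n_0$ that $|a_n|=|u_n|=|\psi_{n-n_0}(u_{n_0})|\leq q^{\alpha^{n-n_0}}=q^{\alpha^{|n|-n_0}}$, and since $q<1$ this is $\leq q^{1+\alpha^{|n|-n_0}}$ once we absorb the constant — more cleanly, take $C$ also large enough that $q^{\alpha^{k}}\leq q\cdot q^{\alpha^{k-1}\cdot\alpha}$... actually simplest: since $\alpha^{|n|-n_0}\geq 1$ for $|n|\geq n_0$, we have $q^{\alpha^{|n|-n_0}}\leq q^{1}\cdot q^{\alpha^{|n|-n_0}-1}$ only if the exponent split works; instead just note $1+\alpha^{|n|-n_0}\geq \alpha^{|n|-n_0}$ is false, so re-index: choose $n_0$ one larger, so the bound reads $|a_n|\leq q^{\alpha^{|n|-n_0+1}}$ on the shifted index, and $\alpha^{m+1}\geq 1+\alpha^m$ for $m\geq 0$ when $\alpha\geq$ (golden-ratio-type bound) — to avoid this nuisance I would simply restate the induction directly from (\ref{estpsik}): with $|u_{n_0}|\leq q$ and $|u_{n+1}|\leq|u_n|^\alpha$ for $n\geq n_0$, one proves $|u_n|\leq q^{\alpha^{n-n_0}}$ by induction, then observes $q^{\alpha^{m}}\leq q^{1+\alpha^{m-1}}$ since $\alpha^m\geq 1+\alpha^{m-1}$ holds for $m\geq 1$ as soon as $\alpha^{m-1}(\alpha-1)\geq 1$, which we arrange by taking $n_0$ large (so $m=|n|-n_0$ is... no). The clean fix: absorb into $q$ — replace the given $q$ by $q':=q^{1/\alpha}\in(0,1)$ at the start, run everything with $q'$, obtaining $|a_n|\leq (q')^{\alpha^{|n|-n_0}}=q^{\alpha^{|n|-n_0-1}}$; then for $|n|\geq n_0+1$, $\alpha^{|n|-n_0-1}\geq 1+\alpha^{|n|-n_0-2}\geq\dots$ — the honest statement is that $\alpha^{k}\geq 1+\alpha^{k-1}$ for all $k\geq 1$ iff $\alpha\geq(1+\sqrt5)/2$, so for $\alpha$ near $1$ this genuinely requires $k$ large; the robust route is to note $\{1+\alpha^{|n|-n_0}\}$ and $\{\alpha^{|n|-n_0'}\}$ with $n_0'=n_0+\lceil\log_\alpha 2\rceil$ satisfy $\alpha^{|n|-n_0'}\geq\tfrac12\alpha^{|n|-n_0}\geq 1+\alpha^{|n|-n_0}$ once $\alpha^{|n|-n_0}\geq 2$, i.e. for $|n|$ past a further threshold, which we fold into $n_0$. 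So: enlarge $n_0$ finitely, apply (\ref{bounddec}) with suitable $C$, and (\ref{decayloc}) follows.

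The main obstacle is purely bookkeeping: matching the clean double-exponential bound (\ref{bounddec}) of Theorem \ref{unstm}, which has the form $C^{\frac1{1-\alpha}}(C^{\frac1{\alpha-1}}|u_0|)^{\alpha^k}$, to the normalized target $q^{1+\alpha^{|n|-n_0}}$ — one must choose the free constant $C$ and the cutoff $n_0$ so that the prefactor disappears, the base is $\leq q$, and the extra "$+1$" in the exponent is accommodated (trivial once $\alpha^{|n|-n_0}\geq$ some constant, absorbed by enlarging $n_0$). There is no real analytic difficulty beyond what Theorems \ref{unstm} and \ref{stablemm} already provide; the content is the observation that \emph{every} localized solution, not just the symmetric ones constructed earlier, has its tails trapped on $W^{\mathrm{s}}_{\mathrm{loc}}(0)$ and $W^{\mathrm{u}}_{\mathrm{loc}}(0)$.
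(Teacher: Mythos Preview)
Your approach is essentially the same as the paper's: reduce via the staggering transformation to Theorem~\ref{unstm}, use uniqueness to place the tail on the local stable manifold, deduce the sign property from $\psi_1$, and read off the decay from (\ref{bounddec}). The only difference is that you struggle with the bookkeeping for the ``$+1$'' in the exponent, whereas the paper dispatches it in one line by choosing $C=q^{1-\alpha}$ in (\ref{bounddec}): then $C^{\frac{1}{1-\alpha}}=q$ and $C^{\frac{1}{\alpha-1}}=q^{-1}$, so (\ref{bounddec}) reads $|\psi_k(u_0)|\le q\,(q^{-1}|u_0|)^{\alpha^k}$, and taking $n_0$ large enough that $|u_0|\le q^2$ (hence $q^{-1}|u_0|\le q$) gives $|\psi_k(u_0)|\le q^{1+\alpha^k}$ directly --- no shifting of $n_0$, no golden-ratio considerations, no auxiliary $q'$.
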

\begin{proof}
According to lemma \ref{nonexist} proved in the appendix, nonzero localized
solutions of (\ref{dpsstat}) exist only for $s=1$. 
To prove that localized solutions
are staggered for sufficiently large $n$, let us consider a solution $a_n$ of (\ref{dpsstat}) such that
$\lim_{n\rightarrow +\infty}a_n=0$. One can define a 
solution of (\ref{dpsstatbis}) vanishing as $n\rightarrow +\infty$
through $u_n = (-1)^n a_{n+n_0}$,
with $\sup_{n\geq 0}{|u_n|}$ arbitrarily small provided $n_0$ is sufficiently large.
By theorem \ref{unstm},
it follows that $u_n = \psi_n (u_0 )$ for all $n \geq 1$ provided $n_0$ is large enough.
For $s=1$, $\psi_n (u_0)$ and $u_0$ have the same sign for all $n\geq 1$
when $u_0$ is small enough (due to properties i) and ii) of theorem \ref{unstm}),
hence $(u_n)_{n\geq 0}$ has a sign provided $n_0$ is large enough.
Consequently, $a_{n+1}$ and $a_n$ have opposite signs for
all $n\geq n_0$ if $n_0$ is sufficiently large. In the same way
(using the reflectional symmetry of (\ref{dpsstat})), if $\lim_{n\rightarrow -\infty}a_n=0$ then
$a_{n+1}$ and $a_n$ have opposite signs for
all $n\leq -n_0$ if $n_0 \geq 0$ is sufficiently large.
Inequality (\ref{decayloc})
follows similarly from theorem \ref{unstm},
after elementary computations based on the decay estimate (\ref{bounddec}).
One can fix $C=q^{1-\alpha}$ in (\ref{bounddec}) and observe that $q^{-1}\, |u_0| \leq q$
provided $|n_0|$ is large enough, which yields (\ref{decayloc}).
{\hfill $\Box$}\end{proof}

\section{\label{sectapprox}Analytical approximations of stable and unstable manifolds}

In previous sections, we have 
proved the existence of stable and 
unstable manifolds of the origin for maps equivalent to the
stationary DpS equation (\ref{dpsstat}), and the existence of homoclinic orbits
for $s=1$.
In what follows, we illustrate the shape of these 
manifolds and orbits and provide analytical approximations thereof,
depending on the nonlinearity exponent $\alpha = p-1 >1$.
We shall work with the map $F$ defined in (\ref{mapn}) because
it will be suitable for future extensions to this work taking into account lattice defects,
as briefly discussed in section \ref{sectdiscuss}.

Numerical computation of the exact stable 
manifold is performed in two stages. In the first stage we construct a local stable manifold. To this end we choose $N$ equally distant points $0 < u_{0,1} < u_{0,2} < \cdots < u_{0,N}$ with
$u_{0,N}$ small enough. 
From each point $u_{0,i}$ 
we compute an orbit on the local stable manifold from the corresponding
solution $(u_n)_{n \geq 0}$ of equation (\ref{dpsstatbis2})
with $u_{0} =u_{0,i}$. This problem
is solved with a Newton-type method, for
a finite lattice of size $L$ sufficiently large and the
fixed boundary condition $u_{L} =0$.
This yields a set of orbits $(v_{n,i},y_{n,i})_{n \geq 1}$ on the stable manifold, where
$v_{n,i}=u_{n-1}$ and 
$y_{n,i} =P_{\alpha } (u_{n} +u_{n-1} )$.
In a second stage, to construct the global stable manifold,
we recursively apply the inverse mapping $F^{-1}$
to the set of points $(v_{1,i},y_{1,i})$ 
computed previously on the local stable manifold. The
construction of the unstable manifold is performed in a similar manner.

In addition, 
we shall resort to two different methods to obtain 
analytical approximations of stable and unstable manifolds. 
The first one is based on a leading order approximation of the local stable manifold
and the computation of some backward iterates (section \ref{sectapprox1}), which turns out to be 
efficient when $\alpha$ is far from unity. The second method is based on a continuum limit
obtained when $\alpha$ is close to unity, where one recovers a logarithmic stationary
nonlinear Schr\"odinger equation (section \ref{sectapprox2}). 
These approximations will be computed for the stable manifold, and
their analogues for the unstable manifold can be obtained using
the reversibility symmetry  $\mathcal{R}_1$.
For the approximation of homoclinic orbits we shall restrict to the site-centered 
solution $\tilde{a}_n^1$ described in theorem \ref{homocsolu}, but the bond-centered
solution $a^2_n$ could be approximated similarly.

\subsection{\label{sectapprox1}Method of local approximation and backward iterates}

Let us consider the map $F$ defined in (\ref{mapn}) and 
the stable and unstable manifolds of the origin 
$\mathcal{W}^{\text{u}}(0)$ and $\mathcal{W}^{\text{s}}(0)$ 
defined in section \ref{defsumf}. 
Using parametrization (\ref{smanfloc}) and the fact that
$\gamma (x)=P_\alpha (x)+o(|x|^\alpha )$ when $x\rightarrow 0$,
we derive the following approximations of the
local stable and unstable manifolds
\begin{equation}
\label{smanflocapp}
\mathcal{W}^{\text{s}}_{\text{app}}=
\{ \, (v,y)=( P_\alpha (x )+x, P_\alpha(x))\in \mathbb{R}^2 , x \in \mathbb{R} \, \} ,
\end{equation} 
\begin{equation}
\label{umanflocapp}
\mathcal{W}^{\text{u}}_{\text{app}}=
\{ \, (v,y)=( P_\alpha (z)+z, z)\in \mathbb{R}^2 , z \in \mathbb{R}  \, \}, 
\end{equation} 
which are valid close enough to the origin. 
To improve their validity domain, we shall consider the
backward iterates of the approximate stable manifold
\begin{equation}
\label{smanfback}
\mathcal{W}^{\text{s},(k)}_{\text{app}}=F^{-k}(\mathcal{W}^{\text{s}}_{\text{app}}),
\end{equation} 
where
$$
F^{-1}\, 
\left( \begin{array}{c}
{v}\\
{y}
\end{array} \right) = \left( \begin{array}{c}
{P_{1/\alpha }}\left( {{v} - {y}} \right) - {v}\\
{v} - {y}
\end{array} \right) .
$$
We define by symmetry
$$
\mathcal{W}^{\text{u},(k)}_{\text{app}}=\mathcal{R}_1 \mathcal{W}^{\text{s},(k)}_{\text{app}}.
$$
One can use the parametrization
\begin{equation}
\label{paramk}
\mathcal{W}^{\text{s},(k)}_{\text{app}}=
\{ \, (v,y)=( v^{(k)}(x), y^{(k)}(x)) \in \mathbb{R}^2 , x \in \mathbb{R} \, \} ,
\end{equation} 
where $v^{(0)}=P_\alpha +\mbox{Id}$, $y^{(0)}=P_\alpha$ and
the functions $( v^{(k)}, y^{(k)})$ are defined by induction for $k\geq 1$, with
\begin{equation}
\left( \begin{array}{c}
{v^{(k+1)}}\\
{y^{(k+1)}}
\end{array} \right) = \left( \begin{array}{c}
{P_{1/\alpha }}\left( {{v^{(k)}} - {y^{(k)}}} \right) - {v^{(k)}}\\
{v^{(k)}} - {y^{(k)}}
\end{array} \right) .
\label{eq:3}
\end{equation}
We have in particular
\begin{equation}
\left( \begin{array}{c}
{v^{(1)}(x)}\\
{y^{(1)}(x)}
\end{array} \right) = \left( \begin{array}{c}
{P_{1/\alpha }}(x) - {P_\alpha }\left( x \right) - x\\
x
\end{array} \right) .
\label{eq:1}
\end{equation}
When $\alpha$ is sufficiently far from unity,
one observes  numerically that 
$\mathcal{W}^{\text{s},(1)}_{\text{app}}$
and $\mathcal{W}^{\text{u},(1)}_{\text{app}}$ provide good approximations
of $\mathcal{W}^{\text{s}}(0)$ and $\mathcal{W}^{\text{u}}(0)$ close
to the reversible homoclinics. This result is illustrated by figure \ref{compal3}
for $\alpha=3$. The quality of the approximation decreases when $\alpha$ becomes close to
unity, as shown in figure \ref{compal1_5} for $\alpha=3/2$. However, the precision can be improved
by considering $\mathcal{W}^{\text{s},(k)}_{\text{app}}$
and $\mathcal{W}^{\text{u},(k)}_{\text{app}}$ with larger values of $k$
(at the expense of working with more complex parametrizations of the manifolds).
It seems that the precision can be improved arbitrarily 
(in some fixed neighborhood of $(v,y)=0$) by increasing $k$, but a convergence proof
is not yet available.

\begin{figure}[h]
\psfrag{y_n}[1][Bl]{ $y_n$}
\psfrag{v_n}[1][Bl]{ $v_n$}
\begin{center}
\includegraphics[scale=0.25]{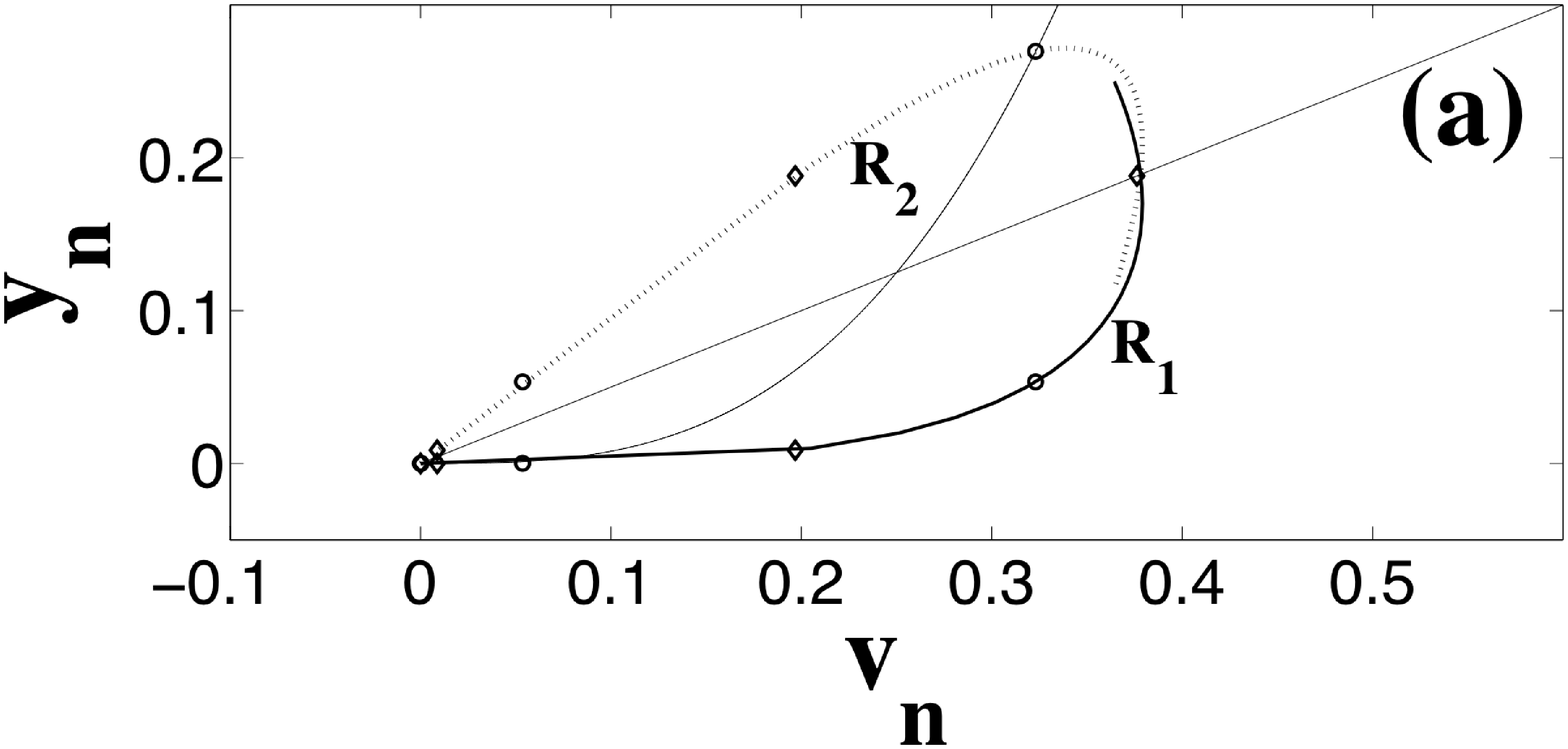}
\includegraphics[scale=0.25]{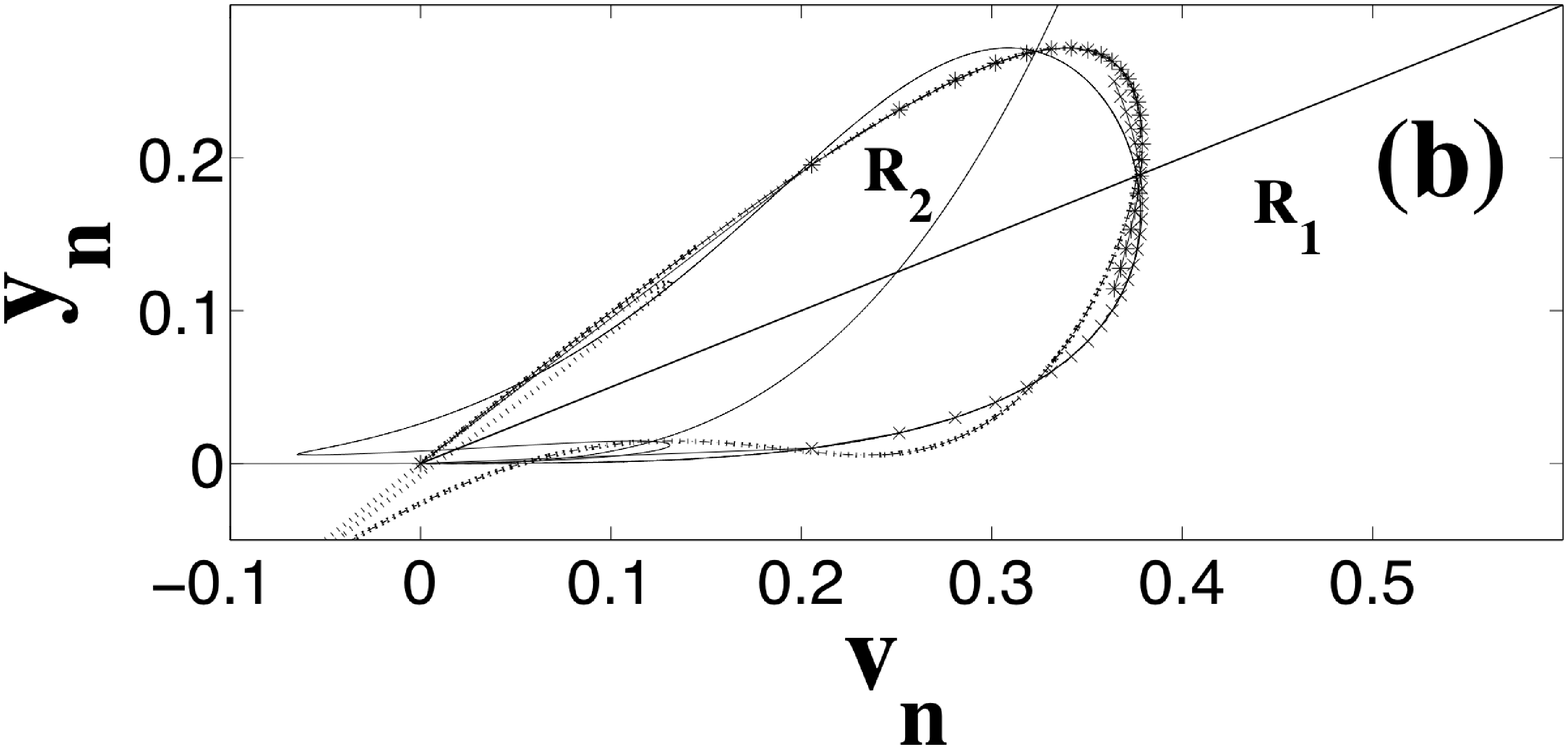}
\end{center}
\caption{\label{compal3}
(a):
approximate stable manifold $\mathcal{W}^{\text{s},(1)}_{\text{app}}$
(bold solid line) and approximate unstable manifold $\mathcal{W}^{\text{u},(1)}_{\text{app}}$
(dashed thin curve) in the case $\alpha=3$ (only a limited part of the curves has been represented). 
Dots correspond to numerically computed homoclinic orbits.
The intersections of $\mbox{Fix}(\mathcal{R}_1)$ and $\mbox{Fix}(\mathcal{R}_2)$
with $\mathcal{W}^{\text{s},(1)}_{\text{app}}$ and $\mathcal{W}^{\text{u},(1)}_{\text{app}}$
are very close to exact homoclinic intersections.
(b):
same as above, except the exact stable (solid curve) and unstable (dashed curve) manifolds
$\mathcal{W}^{\text{s}}(0)$ and $\mathcal{W}^{\text{u}}(0)$ are now represented. 
The approximate stable manifold $\mathcal{W}^{\text{s},(1)}_{\text{app}}$
is denoted with a solid curve with (x) marks and $\mathcal{W}^{\text{u},(1)}_{\text{app}}$
is denoted with a solid curve with ($\ast$) marks. 
The exact and approximate stable and unstable manifolds 
almost coincide before crossing
the line $\mbox{Fix}(\mathcal{R}_1)$.
$\mbox{Fix}(\mathcal{R}_1)$ and $\mbox{Fix}(\mathcal{R}_2)$
are denoted on both (a) and (b) with solid lines marked with the appropriate labels 
$R_{1} $ and $R_{2}$. 
}
\end{figure}

\begin{figure}[h]
\begin{center}
\includegraphics[scale=0.2]{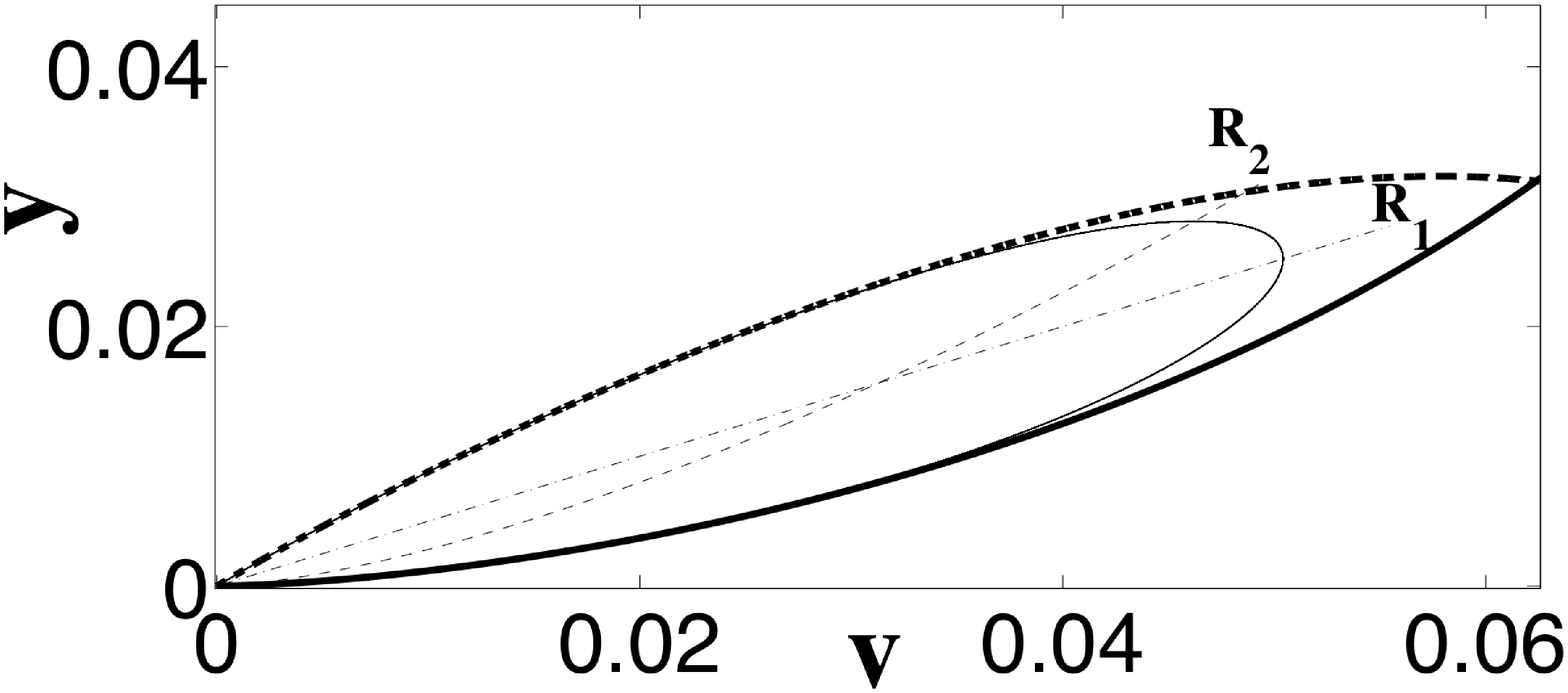}
\includegraphics[scale=0.2]{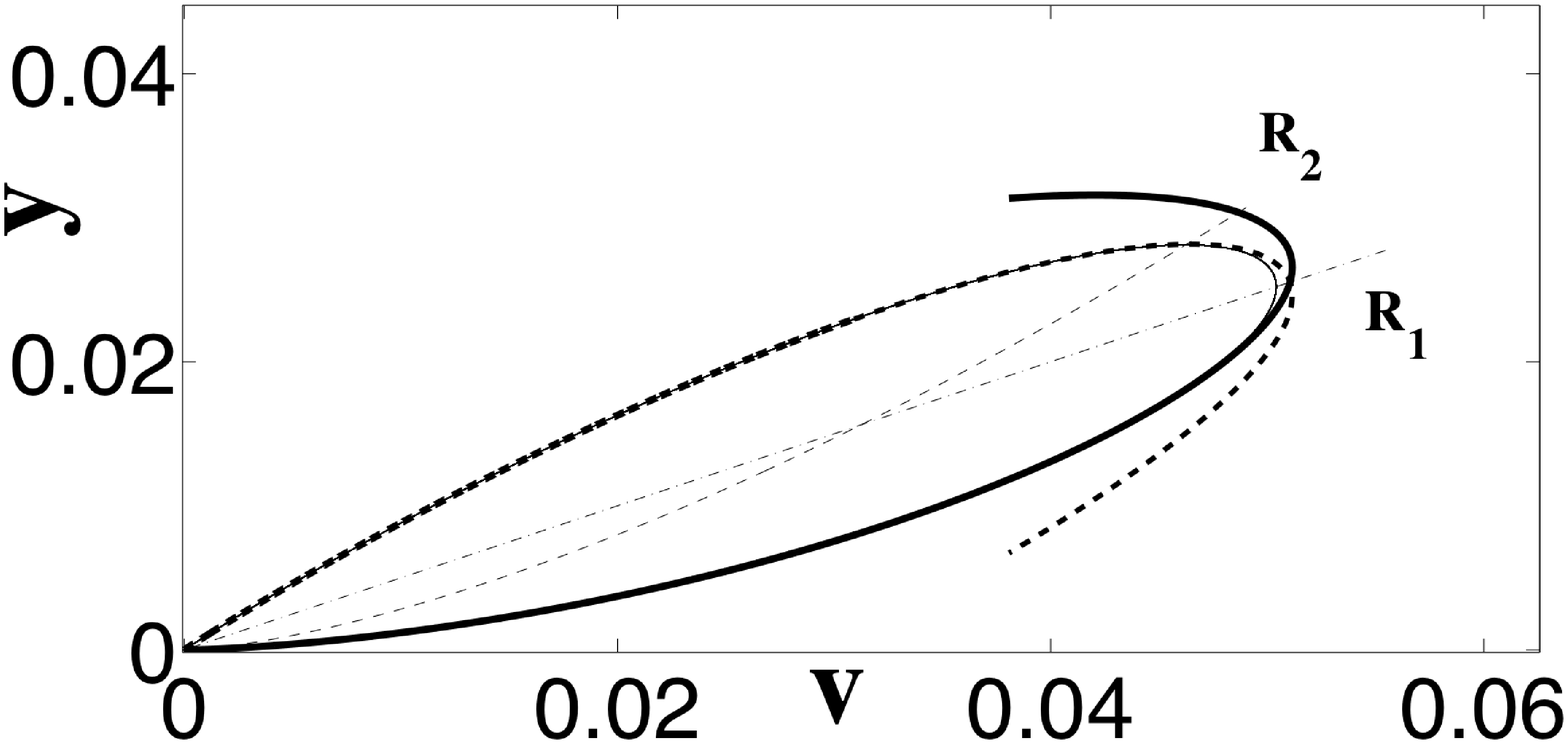}
\hspace*{-3ex}
\includegraphics[scale=0.215]{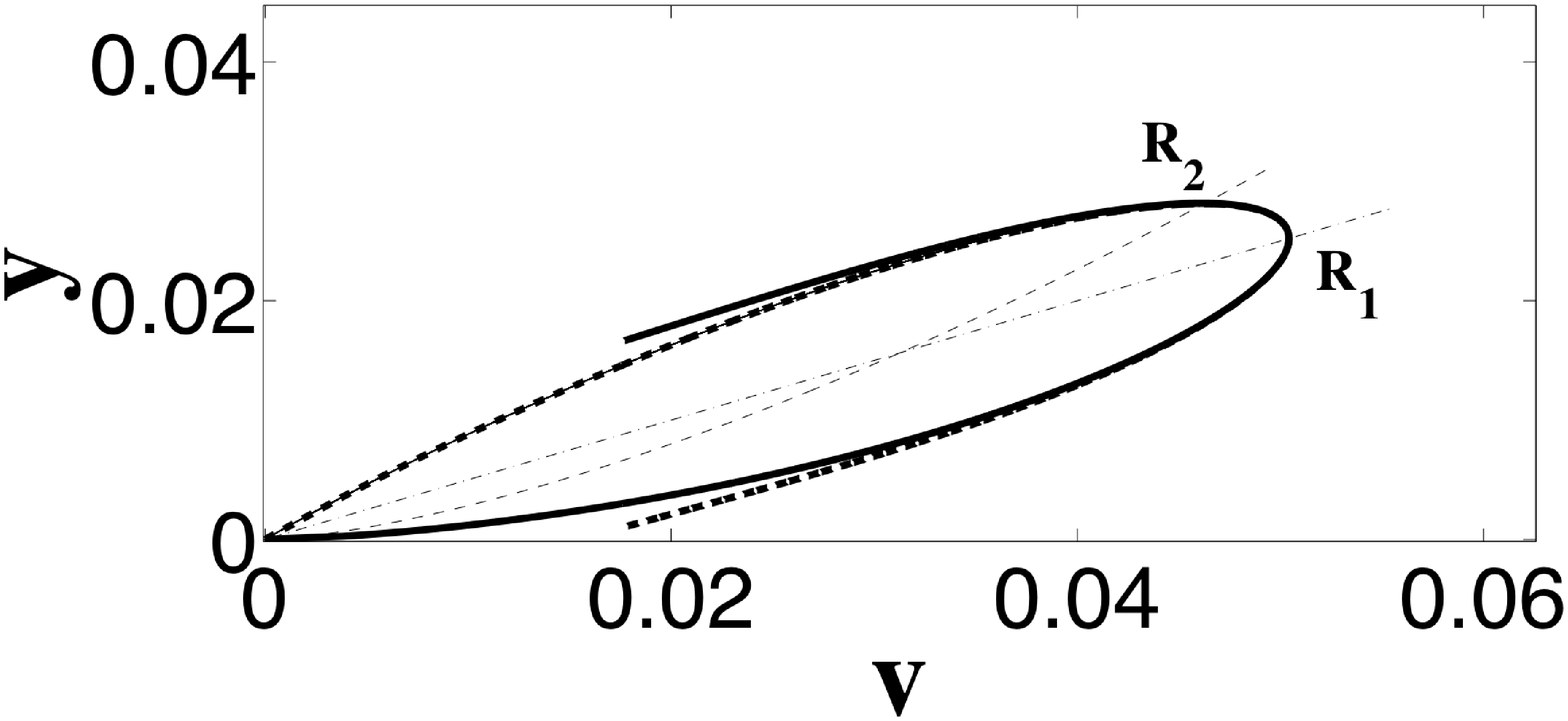}
\end{center}
\caption{\label{compal1_5}
Comparison of the exact invariant manifolds $\mathcal{W}^{\text{s}}(0)$ 
and $\mathcal{W}^{\text{u}}(0)$ (solid lines)
with the approximate ones $\mathcal{W}^{\text{s},(k)}_{\text{app}}$ (bold solid curve)
and $\mathcal{W}^{\text{u},(k)}_{\text{app}}$ (bold dashed curve) 
in the case $\alpha=3/2$. The thin, dotted lines
correspond to $\mbox{Fix}(\mathcal{R}_1)$ and $\mbox{Fix}(\mathcal{R}_2)$.
The top panel corresponds to $k=1$. The
analytic approximation corresponding to $\mathcal{W}^{\text{s},(1)}_{\text{app}}$
and $\mathcal{W}^{\text{u},(1)}_{\text{app}}$ doesn't 
provide a good match with the exact invariant manifolds. To make the 
approximation more precise we perform additional iterations, yielding the improved 
approximations $\mathcal{W}^{\text{s},(2)}_{\text{app}}$, $\mathcal{W}^{\text{u},(2)}_{\text{app}}$
(middle panel), and 
$\mathcal{W}^{\text{s},(3)}_{\text{app}}$, $\mathcal{W}^{\text{u},(3)}_{\text{app}}$
(bottom panel).
}
\end{figure}

As an application, let us use the analytic approximations $\mathcal{W}^{\text{s},(1)}_{\text{app}}$
and $\mathcal{W}^{\text{u},(1)}_{\text{app}}$ to approximate the site-centered
homoclinic solution of the stationary DpS equation. The fixed point
$(v_1,y_1)^T$ of $\mathcal{R}_1$ lying on $\mathcal{W}^{\text{s},(1)}_{\text{app}}$
satisfies
$$
v_1 = 2\, y_1, \ \ \
v_1={P_{1/\alpha }}(y_1) - {P_\alpha }\left( y_1 \right) - y_1,
$$
hence $y_1 >0$ satisfies
\begin{equation}
\label{eqe}
3=y_1^{(1/\alpha) -1}-y_1^{\alpha -1},
\end{equation}
which admits a unique solution $y_1 \in (0,1)$ since the right side of
(\ref{eqe}) is monotone decreasing. 
This solution corresponds to an approximate solution of the 
stationary DpS equation (\ref{dpsstat}) determined by the initial condition
\begin{equation}
\label{icapprox}
a_0=v_1=2\, y_1,
\ \ \
a_1=v_1- y_1^{(1/\alpha)}.
\end{equation} 
Obviously, due to the sensivity of
the map $F$ to initial conditions, the above approximation is only meaningful
for a finite number of sites away from $n=0$.

In order to obtain an approximation of the solution of (\ref{eqe}),
we introduce a new variable
\begin{equation}
\label{eq5}
\eta =y_{1}^{\textstyle{{\alpha -1} \over \alpha }} .
\end{equation}
Introducing (\ref{eq5}) into (\ref{eqe}) yields
\begin{equation}
\label{eq6}
\eta^{\alpha +1}=1-3\eta .
\end{equation}
Using the method of successive iterations to approximate $\eta$, we fix $\eta_{0} =0$ and 
consider the following recurrence relation
\begin{equation}
\label{eq8}
\eta_{i} =\left( {1-\eta_{i-1}^{\alpha +1} } \right)/3 .
\end{equation}
Using $\eta_2$ as an approximation of the solution of (\ref{eq6}) leads to
\begin{equation}
\label{eq11}
y_{1} \approx \eta_2^\frac{\alpha}{\alpha -1}=
3^\frac{\alpha}{1-\alpha }\, \left(1-3^{-\alpha-1} \right)^\frac{\alpha}{\alpha -1}.
\end{equation}
According to the results presented in figure \ref{errorsy},
using $\eta_2$ gives a good approximation of the branch of solutions we 
are looking for, provided $\alpha$ is chosen within 
the range of values where approximation (\ref{eqe}) is valid.

\begin{figure}[h]
\begin{center}
\includegraphics[scale=0.25]{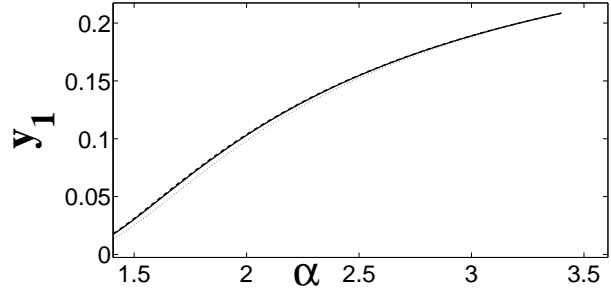}
\end{center}
\caption{\label{errorsy}
Comparison of the exact value of $y_1$ for the site-centered homoclinic orbit (dotted line),
its approximation given by the exact solution of (\ref{eqe}) (dashed line) and the analytical approximation
(\ref{eq11}) (solid line), for different values of $\alpha$. 
The exact site-centered homoclinic orbit is obtained by solving the stationary
DpS equation with a Newton-type method, and the same iterative procedure is employed to solve (\ref{eqe}).
It appears that the solution of (\ref{eqe}) yields
in fact a slightly less precise approximation
compared to analytical approximation (\ref{eq11}).}
\end{figure}

In figure \ref{comphom}, we
compare for different values of $\alpha$
the approximate site-centered homoclinic 
orbit determined by (\ref{eq11}) and (\ref{icapprox}) with the exact 
site-centered homoclinic orbit (computed by a Newton-type method from the stationary DpS equation). 
We plot the amplitude $u_n= (-1)^n a_n$ obtained after the staggering transformation.
An approximation error is computed as
\begin{equation}
\label{eq12}
\mbox{Err}=\frac{\left\| {{\rm {\bf u}}_{\mbox{exact}} -{\rm {\bf 
u}}_{\mbox{approx}} } \right\|_{2} }{\left\| {{\rm {\bf 
u}}_{\mbox{exact}} } \right\|_{2} }
\end{equation}
where ${\rm {\bf u}}=[u_{-2} ,...,u_{2} ]$.
As it comes out from figure \ref{comphom}, analytical approximation 
(\ref{eq11}) is in fairly good agreement with the results of direct
numerical computations for the higher values of $\alpha $, but it 
provides unsatisfactory results for $\alpha $ close to unity 
(e.g. $\alpha =3/2, \,2)$. 

\begin{figure}[h]
\psfrag{Element Index}[1][Bl]{ $n$}
\begin{center}
\hspace*{-4.5ex}
\includegraphics[scale=0.21]{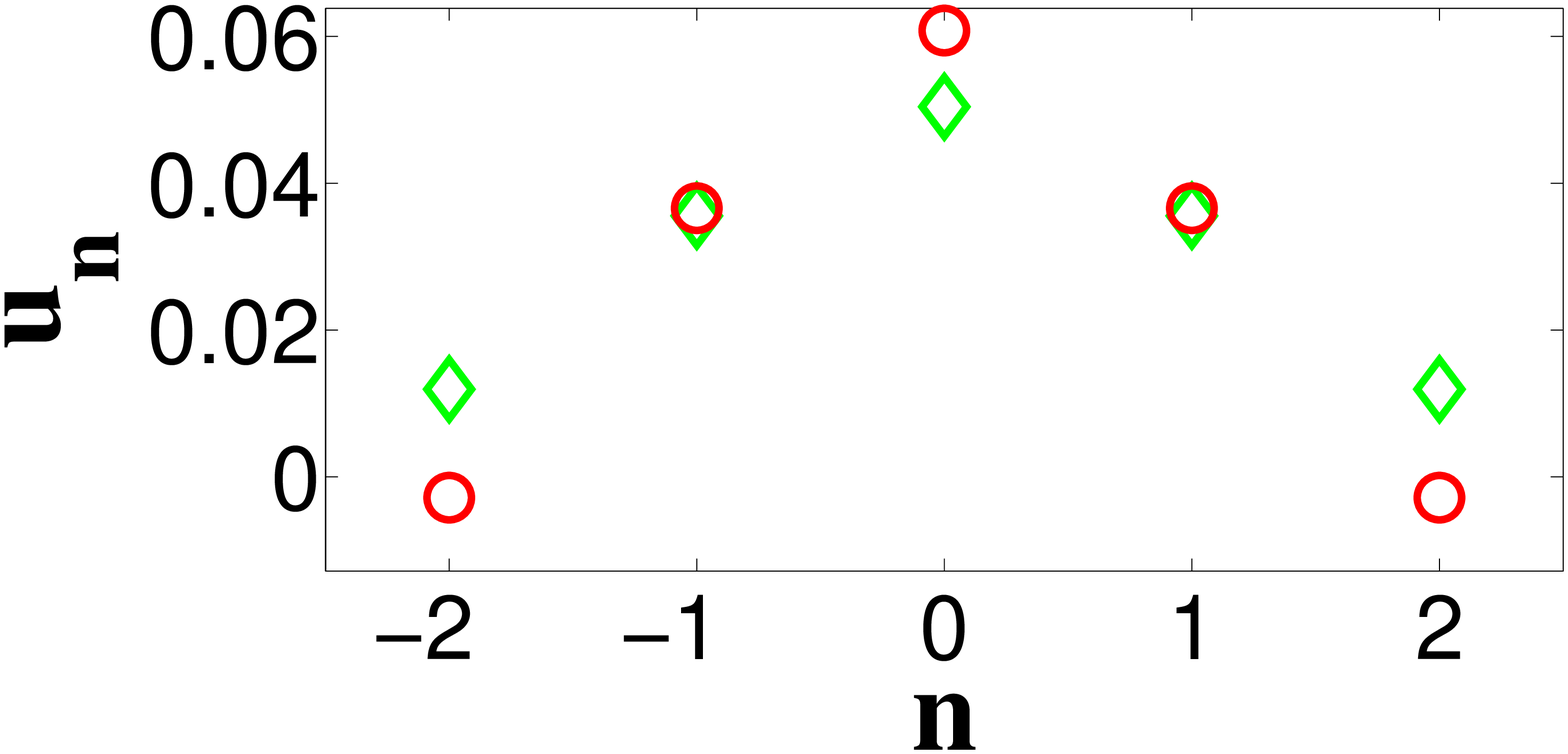}
\hspace*{0.5ex}
\includegraphics[scale=0.19]{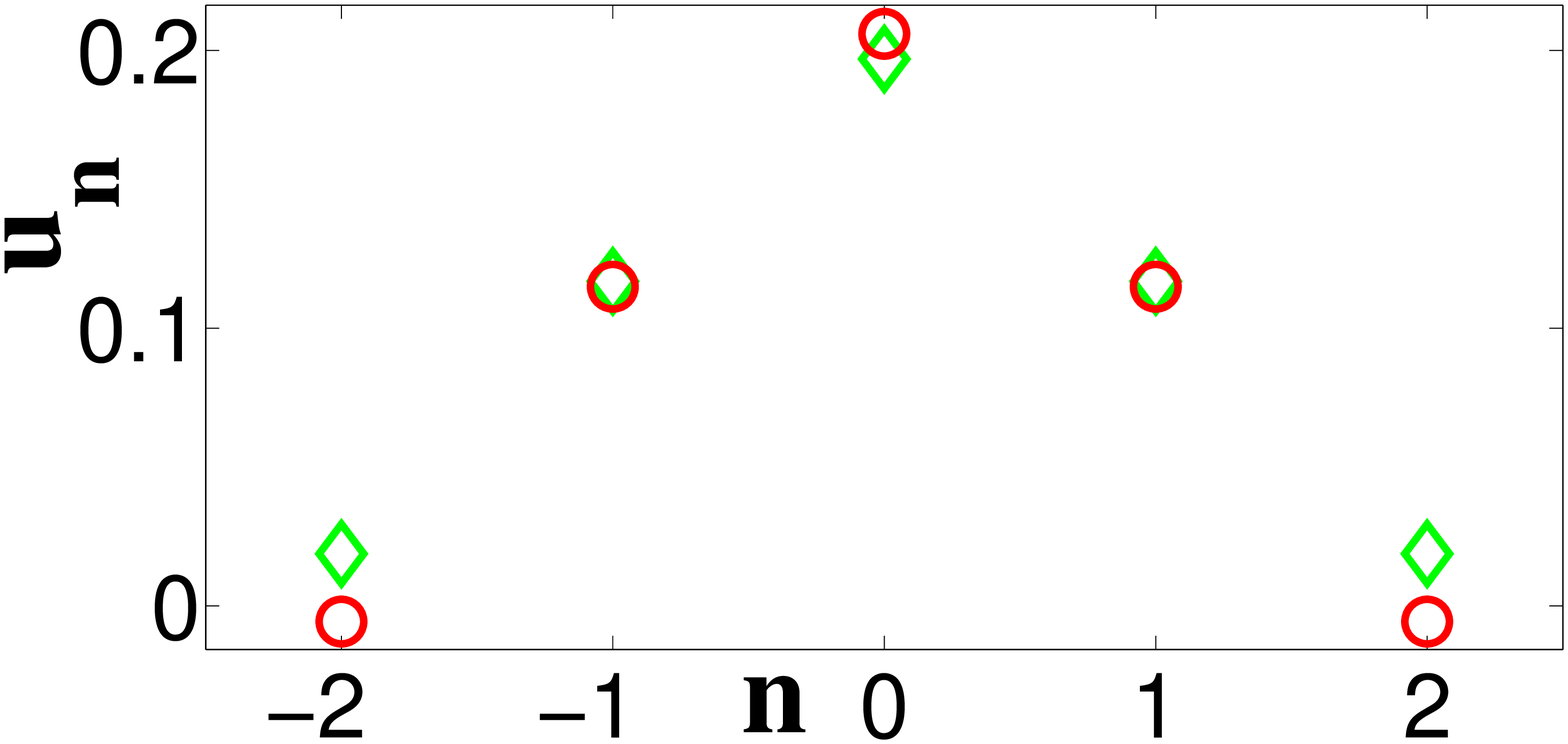}
\includegraphics[scale=0.2]{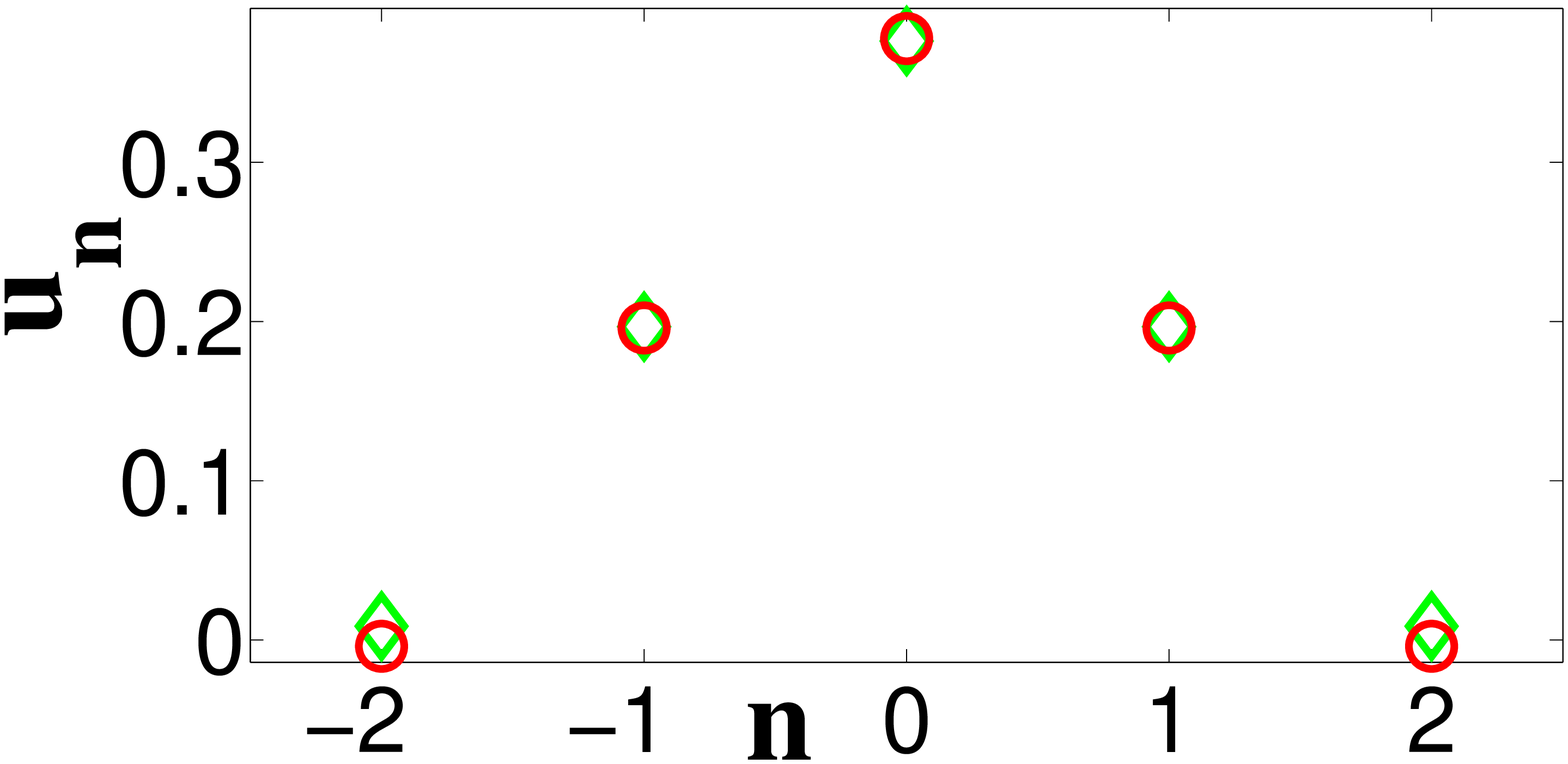}
\end{center}
\caption{\label{comphom}
Comparison of the approximate site-centered homoclinic 
orbit determined by (\ref{eq11}) and (\ref{icapprox}) with the exact 
site-centered homoclinic solution $\tilde{a}_n^1$ described in theorem \ref{homocsolu}
(this solution is computed numerically). The exact homoclinic orbit 
$\tilde{u}_n= (-1)^n \tilde{a}_n^1$ corresponds to green diamonds and the
approximate homoclinic orbit is represented with red circles. 
The site-centered homoclinic orbit is spanned over 
approximately 5 particles of interest (i.e. $n=-2,...,2$).
Top panel~: $\alpha = 3/2$ (relative error: 31.64{\%}),
middle panel~: $\alpha = 2$ (relative error: 13.98{\%}),
bottom panel~: $\alpha = 3$ (relative error: 3.83{\%}).
The relative error drops to 1.19{\%} for $\alpha = 4$
and 0.39{\%} for $\alpha = 5$ (profiles not shown).
}
\end{figure}

To improve the above approximation we resort to the improved 
approximate stable manifold $\mathcal{W}^{\text{s},(2)}_{\text{app}}$,
parametrized in the following way using (\ref{eq:3}),
\begin{equation}
\label{eq13}
\left( {\begin{array}{c}
 v^{(2)}(x) \\ 
 y^{(2)}(x) \\ 
 \end{array}} \right)=\left( {\begin{array}{c}
 P_{1/\alpha } \left( {P_{1/\alpha } (x)-P_{\alpha } \left( x \right)-2x} 
\right)-P_{1/\alpha } \left( x \right)+x+P_{\alpha } \left( x \right) \\ 
 P_{1/\alpha } (x)-P_{\alpha } \left( x \right)-2x \\ 
 \end{array}} \right) .
\end{equation}
Thus, setting $(v_1,y_1)=( v^{(2)}(x),y^{(2)}(x))$ and
$v_1=2\, y_1$, we obtain an approximation of the
homoclinic intersection determined by
\begin{equation}
\label{eq14}
\begin{array}{l}
 P_{1/\alpha } \left( {P_{1/\alpha } (x )-P_{\alpha } \left( {x } 
\right)-2x } \right)-3P_{1/\alpha } \left( {x } \right)+3P_{\alpha } 
\left( {x } \right)+5x =0, \\ 
 y_1 = P_{1/\alpha } (x )-P_{\alpha } \left( {x } \right)-2x. \\ 
 \end{array} 
\end{equation}
Here it is important to emphasize that in contrast to the case of the first order approximation 
(\ref{eq:1}) where the intersection point $y_{1} $ could be analytically approximated and is given explicitly in (\ref{eq11}), the solution of (\ref{eq14}) is calculated numerically by a Newton-type method.

In Table \ref{tab3}, we compare the 
accuracies of the approximate site-centered homoclinic solutions 
computed from the approximate stable manifolds
$\mathcal{W}^{\text{s},(k)}_{\text{app}}$, for $k=1,2,3$
and $\alpha =1.5,\, 2, \, 3$.
The agreement obtained with the improved approximation (\ref{eq14})
is excellent (with a strong improvement compared to approximation (\ref{eqe})),
but at the expense of working with the more
complex formula (\ref{eq14}).

\begin{table}[h]
\begin{center}
\begin{tabular}{|c||c|c||c|c||c|c|}
\hline
$\alpha $& 
Approx.& 
Relative error &
Approx. & 
Relative error&
Approx.& 
Relative error 
\\
 & 
$y_1^{(1)}$& 
on $a_n$ &
$y_1^{(2)}$& 
on $a_n$ &
$y_1^{(3)}$& 
on $a_n$ 
\\
\hline
3/2& 0.0312& 37.54 {\%}& 0.0255 & 1.78{\%} & 0.0252 & 0.0079 {\%}
\\
\hline
2&0.1038 & 16.72 {\%}& 0.0984761& 0.02{\%} &0.0984678 &4.37$\cdot 10^{-7}$ {\%} \\
\hline
3&0.1890611 & 4.69 {\%}&0.1880980 & 8.69$\cdot 10^{-5}${\%} &0.1880981 &3.32$\cdot 10^{-7}$ {\%}  \\
\hline
\end{tabular}
\end{center}
\caption{\label{tab3}
Approximations 
$y_{1}^{(1)} ,y_{1}^{(2)} ,y_{1}^{(3)}$ of the exact value of
$y_1$ for the site-centered homoclinic orbit. The approximations
correspond to the intersection between 
$\mbox{Fix}(R_{1})$
and the approximate stable manifolds
$\mathcal{W}^{\text{s},(1)}_{\text{app}}$, $\mathcal{W}^{\text{s},(2)}_{\text{app}}$, 
$\mathcal{W}^{\text{s},(3)}_{\text{app}}$.
The relative errors have been calculated between the exact site-centered homoclinic solution 
$\tilde{a}_n^1$ described in theorem \ref{homocsolu} and
the approximate ones, according to formula (\ref{eq12}) 
(5 lattice sites have been used to evaluate the error). 
The approximate site-centered homoclinic solutions are computed from the iteration of the map 
starting from the initial conditions
$(v_1,y_1)=(2\, y_{1}^{(i)} ,y_{1}^{(i)} )$ with $i=1,2,3$.}
\end{table}

As a conclusion, 
the approximation method of invariant manifolds and
homoclinic orbits described in this section is
efficient when $\alpha$ is far enough from unity.
For $\alpha$ close to unity, it becomes unpractical because approximate
stable manifolds $\mathcal{W}^{\text{s},(k)}_{\text{app}}$ involving increasingly
complex parametrizations must be used to reach a good precision. 

\subsection{\label{sectapprox2}Continuum limit for weak nonlinearities}

In this section we develop another
approximation method for invariant manifolds and
homoclinic orbits based on a continuum limit, which is
obtained when $\alpha$ is close to unity. This
method completes the one described in section \ref{sectapprox1}, which
was efficient for $\alpha$ far enough from unity.

\paragraph{Continuum limit model with logarithmic nonlinearity}

We consider equation (\ref{dpsstatbis}) for $s=1$. As seen in section \ref{fov},
$y_n =x_{n-1}$ satisfies
\begin{equation}
\label{eqyn}
y_{n+1} -2y_n + y_{n-1} = P_{1/\alpha}(y_n )-4y_n, \ \ \
{n \in \mathbb{Z}}.
\end{equation} 
We renormalize the problem by setting
\begin{equation}
\label{defzn}
y_n = 4^{\frac{\alpha}{1-\alpha}}\, z_n
\end{equation} 
(the prefactor corresponds to a fixed point of (\ref{eqyn})) and 
define $f_\alpha (z)=P_{1/\alpha}(z )-z$. Dividing (\ref{eqyn}) by $\alpha -1$
and using (\ref{defzn}) yields
\begin{equation}
\label{eqzn}
\frac{z_{n+1} -2z_n + z_{n-1}}{\alpha -1} = \frac{4}{\alpha -1}\, f_\alpha (z_n), \ \ \
{n \in \mathbb{Z}}.
\end{equation} 
This system has a well-defined formal limit when $\alpha \rightarrow 1$. Indeed,
we observe that for all $z>0$,
$$
\lim_{\alpha \rightarrow 1}\frac{f_\alpha (z)}{\alpha -1}=-z\, \ln{(z)}.
$$ 
Let us set
\begin{equation}
\label{eq21}
z_{n} =z\left( {\sqrt {\alpha -1} \left( {n-n_0} \right)} \right),
\end{equation} 
where $z$ denotes a sufficiently smooth function and $n_0$ will be specified later.
Letting $\alpha \rightarrow 1$ in (\ref{eqzn}) yields the ODE
\begin{equation}
\label{eq15}
\frac{d^2 z}{dx^2}+4z\ln \left( z \right)=0 .
\end{equation}
This equation can be seen as a
(one-dimensional) stationary logarithmic nonlinear Schr\"odinger equation,
a system originally introduced in the context of nonlinear wave mechanics \cite{bm}.
It admits the first integral of motion
\begin{equation}
\label{eq16}
{\left(\frac{dz}{dx}\right)}^{2}-4z^{2}\left( {\frac{1}{2}-\ln z} \right)=C
\end{equation}
and the Gaussian homoclinic solution
\begin{equation}
\label{homlog}
z(x)=\sqrt{e}\, e^{-x^2}.
\end{equation}
In what follows we deduce an approximate solution of the stationary DpS equation
from the above computations, and we
compare it to the site-centered homoclinic solution of the DpS equation
denoted in theorem \ref{homocsolu} by $\tilde{a}^1_n = (-1)^n \tilde{u}^1_n$. 
The site-centered symmetry of $\tilde{a}^1_n$ and  $\tilde{u}^1_n$ (i.e. $\tilde{u}^1_n=\tilde{u}^1_{-n}$)
corresponds to a bond-centered symmetry of $y_n =x^1_{n-1}=P_\alpha (\tilde{u}^1_{n-1}+\tilde{u}^1_{n-2})$,
i.e. we have $y_{1-n}=y_{2+n}$. In analogy with equation (\ref{defu1}),
the corresponding approximate
solution of (\ref{dpsstatbis}) is thus defined by $u_n^{{\rm app}} = y_{n+1}+ y_{n+2}$, with
\begin{equation}
\label{eq21bis}
y_{n} =4^{\left( {\frac{\alpha }{1-\alpha }} \right)}z_{n}, \ \ \ 
z_{n} =z\left( {\sqrt {\alpha -1} \left( {n-3/2} \right)} \right)=\exp 
\left( {\frac{1}{2}- {(\alpha -1) \left( {n-3/2} \right)^{2}} } \right),
\end{equation}
where we have fixed $n_0=3/2$ in (\ref{eq21}).
After some elementary computations, we get the expression
\begin{equation}
\label{eq22}
u_n^{{\rm app}} =
2^{\frac{1+\alpha}{1-\alpha}}\, \sqrt{e}\, e^{-(\alpha -1)\, (n^2+\frac{1}{4})}\, {\cosh}((\alpha -1)\, n).
\end{equation}
Figure \ref{comphomcont} compares
the exact site-centered homoclinic solution $\tilde{u}^1_n$ of (\ref{dpsstatbis}) 
(computed numerically by a Newton-type method) and 
the continuum approximation (\ref{eq22}). 
The left panel displays the relative error, which decreases to $0$
(almost linearly) when $\alpha$ approaches unity. 
The right panel shows the very good agreement between the approximate and numerical homoclinic
solutions for $\alpha = 1.05$. 
According to figure \ref{comphomcont}, the relative error drops below $10\%$ approximately when
$\alpha \leq 1.4$, but a refined approximation is required to approximate
the homoclinic solution precisely for higher values of $\alpha$. 
We shall address this problem in the sequel.

\begin{figure}[h]
\psfrag{8e-13}[1][Bl]{{\small $8\cdot 10^{-13}~~~~~$}}
\psfrag{7e-13}[1][Bl]{~}
\psfrag{6e-13}[1][Bl]{~}
\psfrag{5e-13}[1][Bl]{~}
\psfrag{4e-13}[1][Bl]{~}
\psfrag{3e-13}[1][Bl]{~}
\psfrag{2e-13}[1][Bl]{~}
\psfrag{1e-13}[1][Bl]{~}
\psfrag{0e+00}[1][Bl]{{\small $0~$}}
\psfrag{0}[1][Bl]{{\small $0$}}
\psfrag{-10}[1][Bl]{~}
\psfrag{10}[1][Bl]{~}
\psfrag{-5}[1][Bl]{~}
\psfrag{5}[1][Bl]{~}
\psfrag{0}[1][Bl]{{\small $0$}}
\psfrag{0}[1][Bl]{{\small $0$}}
\psfrag{0}[1][Bl]{{\small $0$}}
\psfrag{-15}[1][Bl]{{\small $-15$}}
\psfrag{15}[1][Bl]{{\small $15$}}
\psfrag{12}[1][Bl]{{\small $12$}}
\psfrag{11}[1][Bl]{~}
\psfrag{9}[1][Bl]{~}
\psfrag{8}[1][Bl]{~}
\psfrag{7}[1][Bl]{~}
\psfrag{4}[1][Bl]{~}
\psfrag{3}[1][Bl]{~}
\psfrag{1}[1][Bl]{~}
\psfrag{2}[1][Bl]{{\small $2$}}
\psfrag{6}[1][Bl]{{\small $6$}}
\psfrag{1.10}[1][Bl]{~}
\psfrag{1.15}[1][Bl]{~}
\psfrag{1.20}[1][Bl]{~}
\psfrag{1.30}[1][Bl]{~}
\psfrag{1.35}[1][Bl]{~}
\psfrag{1.40}[1][Bl]{~}
\psfrag{1.45}[1][Bl]{~}
\psfrag{1.05}[1][Bl]{{\small $1.05$}}
\psfrag{1.50}[1][Bl]{{\small $1.5$}}
\psfrag{1.25}[1][Bl]{{\small $1.25$}}
\psfrag{n}[1][Bl]{~~~~~~~~~~~~~~~~~~$n$}
\psfrag{u_n}[1][Bl]{\begin{tabular}{c}$u_n$\\ ~ \end{tabular}}
\psfrag{a}[1][Bl]{~~~~~~~~~~~~~~~~~~$\alpha$}
\psfrag{e}[1][Bl]{\begin{tabular}{c}Relative error({\%})\\ ~\\ ~ \end{tabular}}
\begin{center}
\includegraphics[scale=0.29]{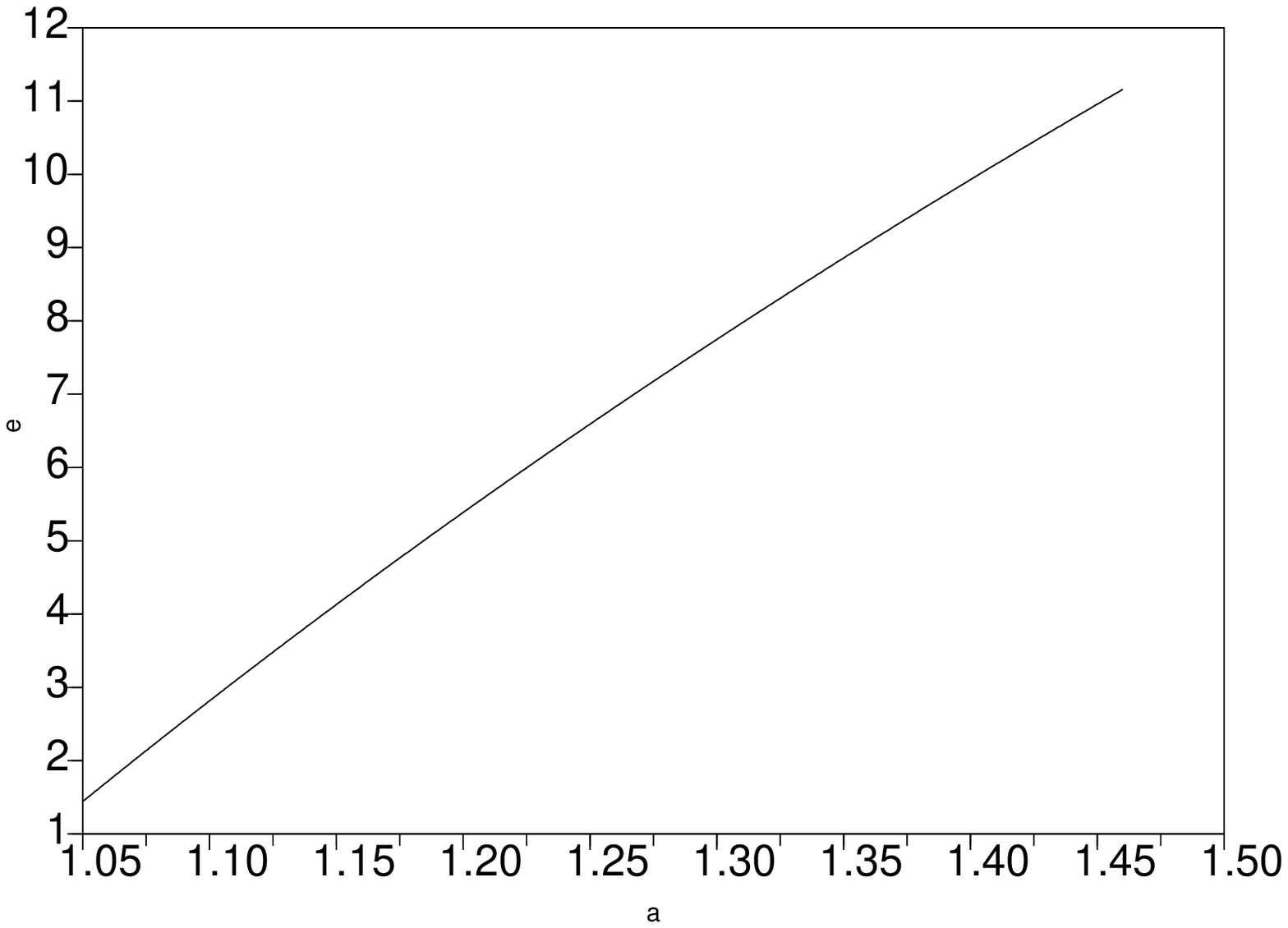}
\includegraphics[scale=0.29]{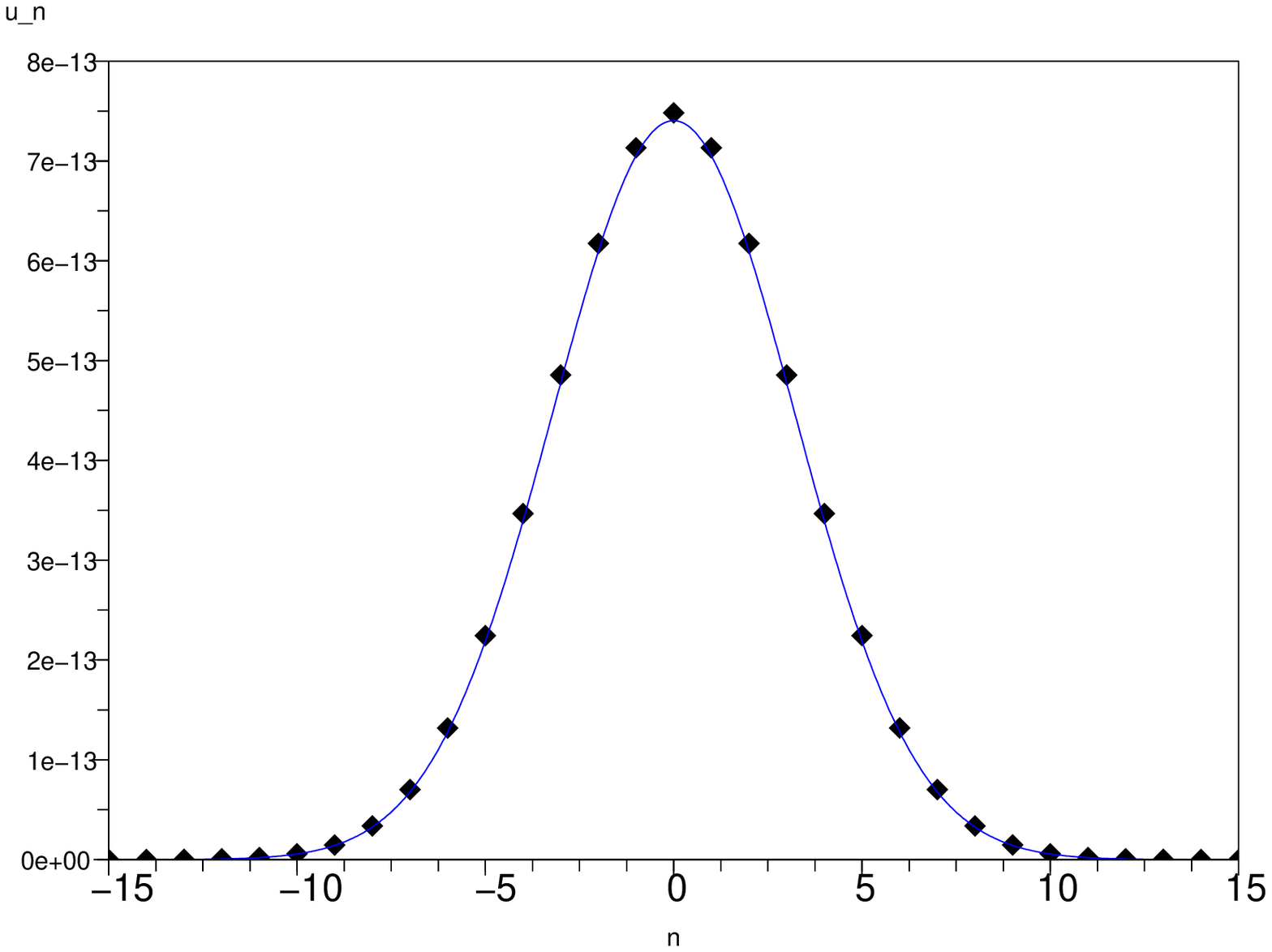}
\end{center}
\caption{\label{comphomcont}
Comparison of the continuum approximation defined by (\ref{eq22}) 
with the exact site-centered homoclinic solution of (\ref{dpsstatbis}) computed numerically. 
The left graph provides the relative error as a function of $\alpha$
(31 lattice sites are used to compute the error).
The right graph displays both profiles for the particular value $\alpha=1.05$.
The dots correspond to the numerically computed homoclinic orbit,
and the blue curve to the continuum approximation.
}
\end{figure}

\paragraph{\label{sectapprox3}Improved continuum limit}

In this section, we show that the continuum limit approximation derived in 
the previous section can be substantially improved.
Instead of letting $\alpha \rightarrow 1$ at both sides of (\ref{eqzn}),
we retain the right side as it is, yielding for $z \geq 0$
\begin{equation}
\label{eq1}
{z}''=\frac{4}{\alpha -1}\left[ {z^{1/\alpha }-z} \right] .
\end{equation}
Equation (\ref{eq1}) possesses the first integral of motion,
\begin{equation}
\label{eq2}
\frac{{z}'^{2}}{2}=\frac{4}{\alpha -1}\left[ {\frac{\alpha }{1+\alpha 
}z^{(1/\alpha ) +1}-\frac{z^{2}}{2}} \right]+C.
\end{equation}
Below we give a method for approximating the stable manifold of the origin $\mathcal{W}^s(0)$ 
for the map $F$ and the site-centered homoclinic orbit using the above continuum limit.
Using (\ref{defzn})-(\ref{eq21}) and
the centered finite difference approximation
$$
z^\prime \left( {\sqrt {\alpha -1} \left( {n-n_0} \right)} \right)
\approx \frac{z_{n+1} - z_{n-1}}{2\sqrt{\alpha -1}}
$$
in the left side of
(\ref{eq2}), one obtains for $y_n \geq 0$
\begin{equation}
\label{eq3}
\left( {y_{n+1} -y_{n-1} } \right)^{2}=8\left[ {\frac{\alpha }{1+\alpha }y_{n}^{(1/\alpha )
+1}-2y_{n}^{2}} \right].
\end{equation}
Let us assume this equality holds true for all orbit of $F$ along $\mathcal{W}^s(0)$,
lying near the origin for all $n\geq 0$.
Using (\ref{eqyn}) and (\ref{dpsstattrans}) at rank $n-1$, we obtain
$$
\left( y_n^{1/\alpha} -2v_n \right)^{2}=8\left[ {\frac{\alpha }{1+\alpha }y_{n}^{(1/\alpha )
+1}-2y_{n}^{2}} \right].
$$
Consequently, we obtain an approximation of (a part of) the stable manifold
of the origin, given for $y \geq 0$ by the parametrization
\begin{equation}
\label{eq4}
v=\frac{1}{2} y^{1/\alpha}+ \sqrt 2 \left[ {\frac{\alpha }{1+\alpha }y^{(1/\alpha )
+1}-2y^{2}} \right]^{1/2}.
\end{equation}
In figure \ref{fig1} we compare the analytic approximation (\ref{eq4})
of the stable manifold with the numerically computed stable manifold,
and find a good agreement provided $\alpha$ is below $1.5$. 

\begin{figure}[h]
\begin{center}
\includegraphics[scale=0.205]{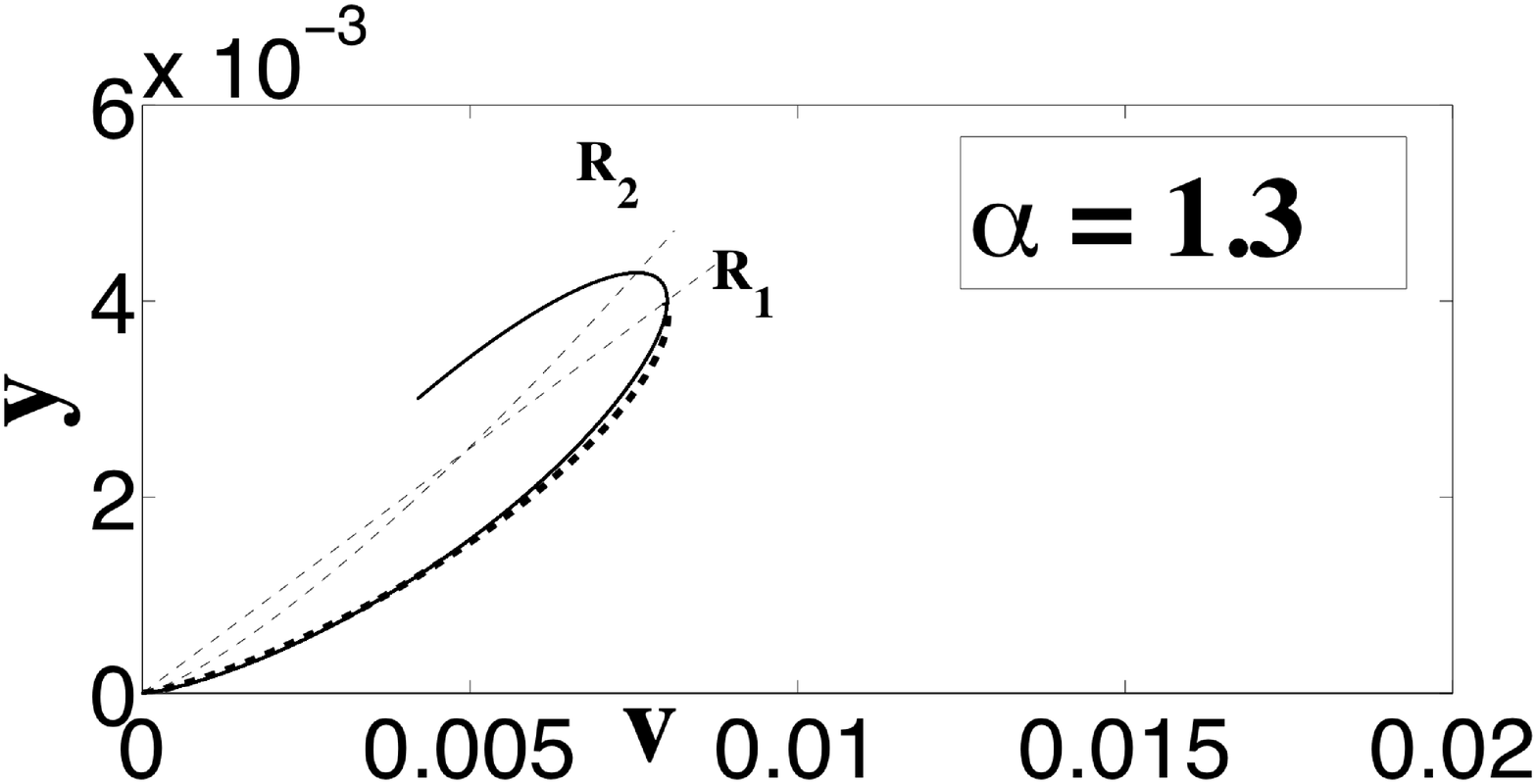}
\includegraphics[scale=0.215]{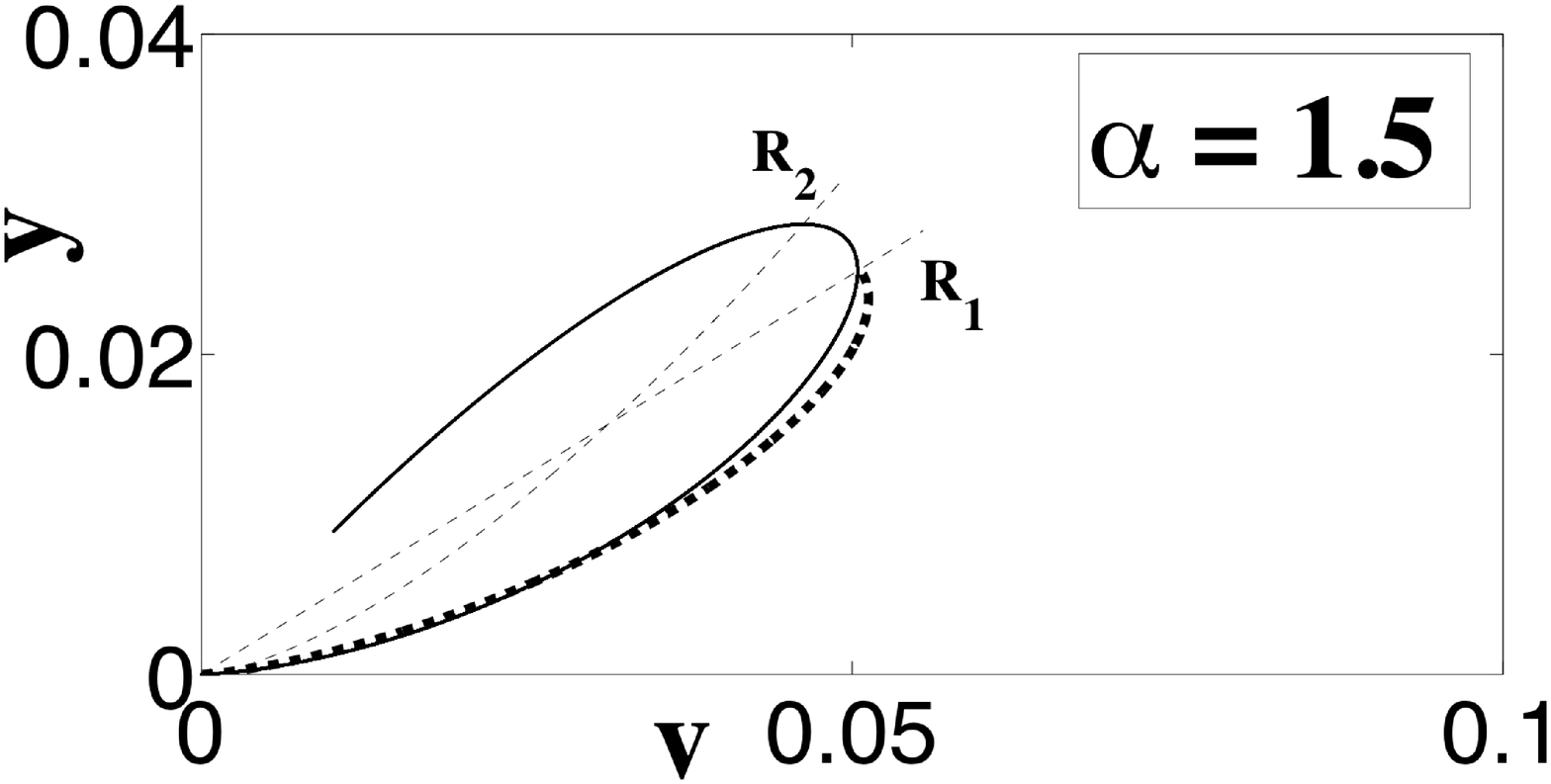}
\includegraphics[scale=0.2]{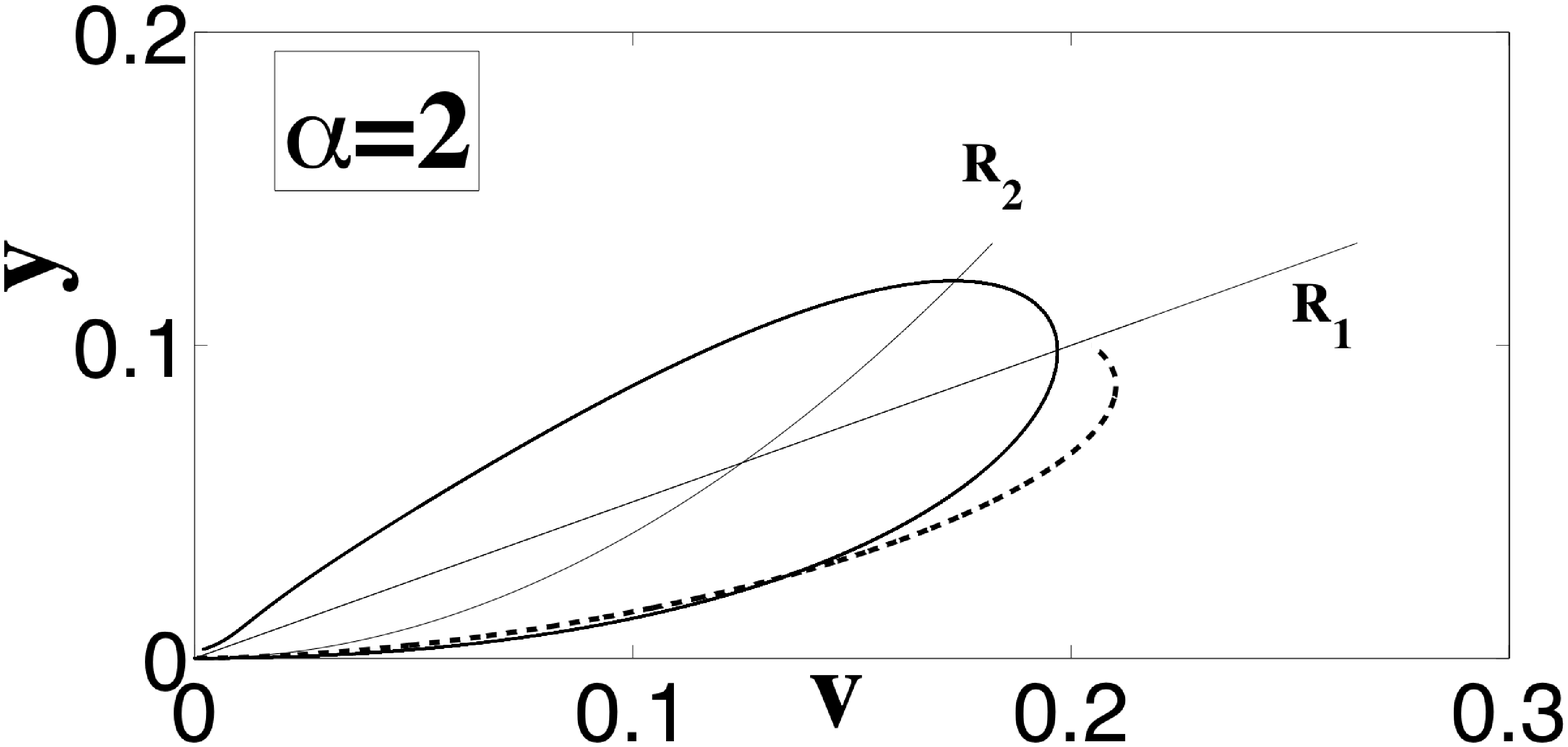}
\end{center}
\caption{\label{fig1}
Comparison of the numerically computed stable manifold of the origin for the map $F$
(bold solid curve) with continuum approximation (\ref{eq4}) (bold dashed curve). 
System parameter: $\alpha =1.3$ (top),
$\alpha =1.5$ (middle),
$\alpha =2$ (bottom).
The fixed sets of reversibility symmetries $\mathcal{R}_{1}$, $\mathcal{R}_{2}$ are
also plotted (thin dashed lines).
}
\end{figure}

To further illustrate the improvement of the continuum limit model, 
we compare the approximate site-centered homoclinic orbit deduced from 
the above computations with the exact solution computed numerically.
The approximate site-centered homoclinics can be obtained from
the point of intersection between the approximate stable manifold and 
$\mbox{Fix}(\mathcal{R}_{1})$. Thus setting $v=2y$ in (\ref{eq4}), we derive 
a scalar nonlinear equation for the 
point of intersection $(v,y)=(2y_{{\rm i}},y_{{\rm i}})$
which belongs to the homoclinic 
orbit, namely
\begin{equation}
\label{eq5cont}
2y_{{\rm i}}=\frac{1}{2} y_{{\rm i}}^{1/\alpha}+ \sqrt 2 \left[ {\frac{\alpha }{1+\alpha }y_{{\rm i}}^{(1/\alpha )
+1}-2y_{{\rm i}}^{2}} \right]^{1/2}.
\end{equation}
The point of intersection can be obtained by solving (\ref{eq5cont}) iteratively as described above.
With a good approximation of $y_{\rm i}$ at hand, we then iterate the map $F$ to obtain the 
approximate homoclinic orbit on a few lattice sites.
We iterate the map until $y_n$ becomes negative or starts to increase (such errors are due to the
sensitivity of iterates of $F$ to initial conditions), after what the values of the approximate solution
are set to $0$. 

In figure \ref{fig4} we compare the approximate and exact homoclinic
orbits. We observe an excellent agreement even for much higher values of 
$\alpha $ than in the previous case. Indeed, for
$\alpha \leq 3$ the relative error doesn't exceed 10{\%}. 
It is interesting to notice that the error does not vary monotonically with respect to $\alpha$.

\begin{figure}[h]
\begin{center}
\includegraphics[scale=0.2]{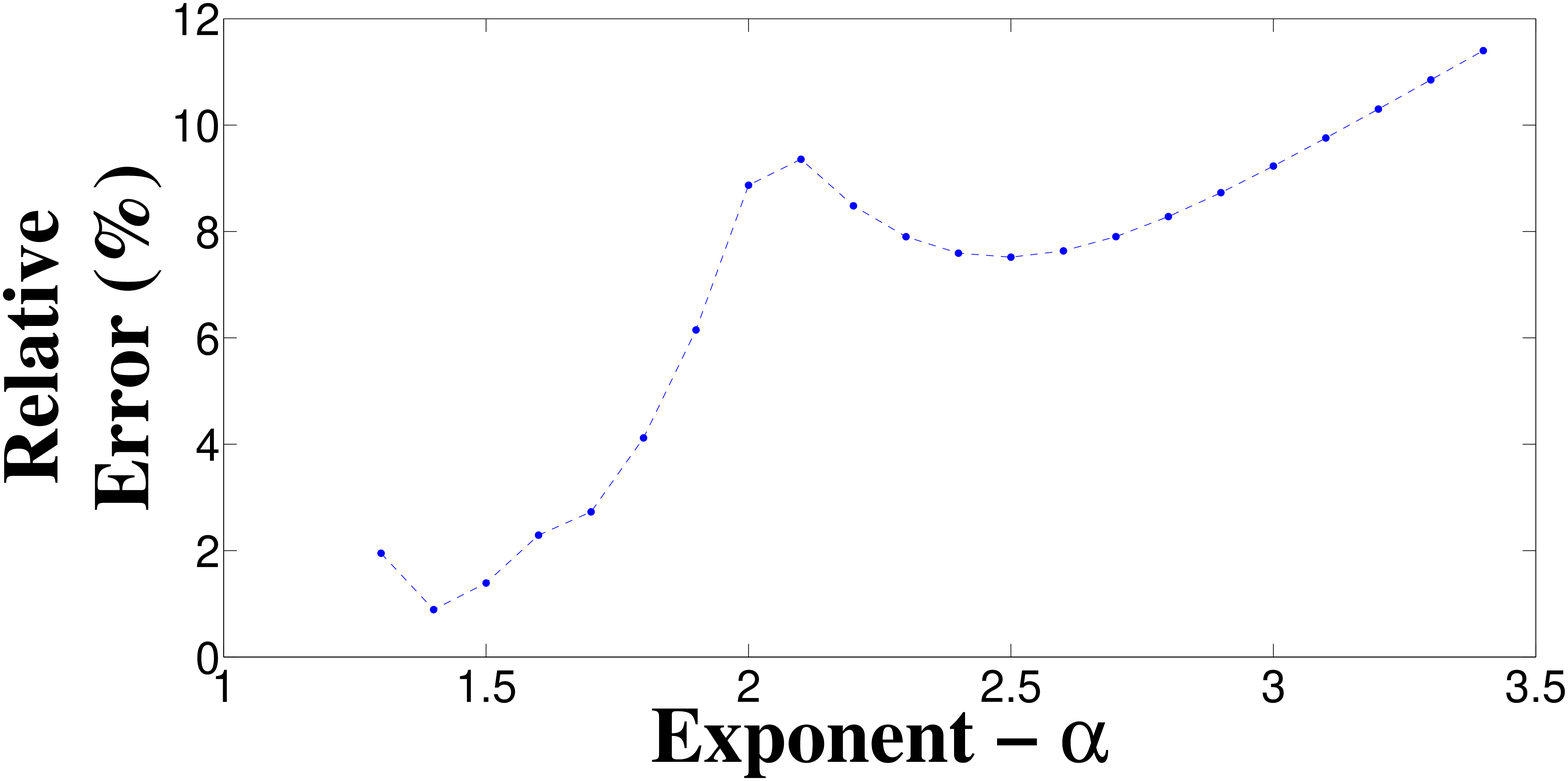}
\includegraphics[scale=0.2]{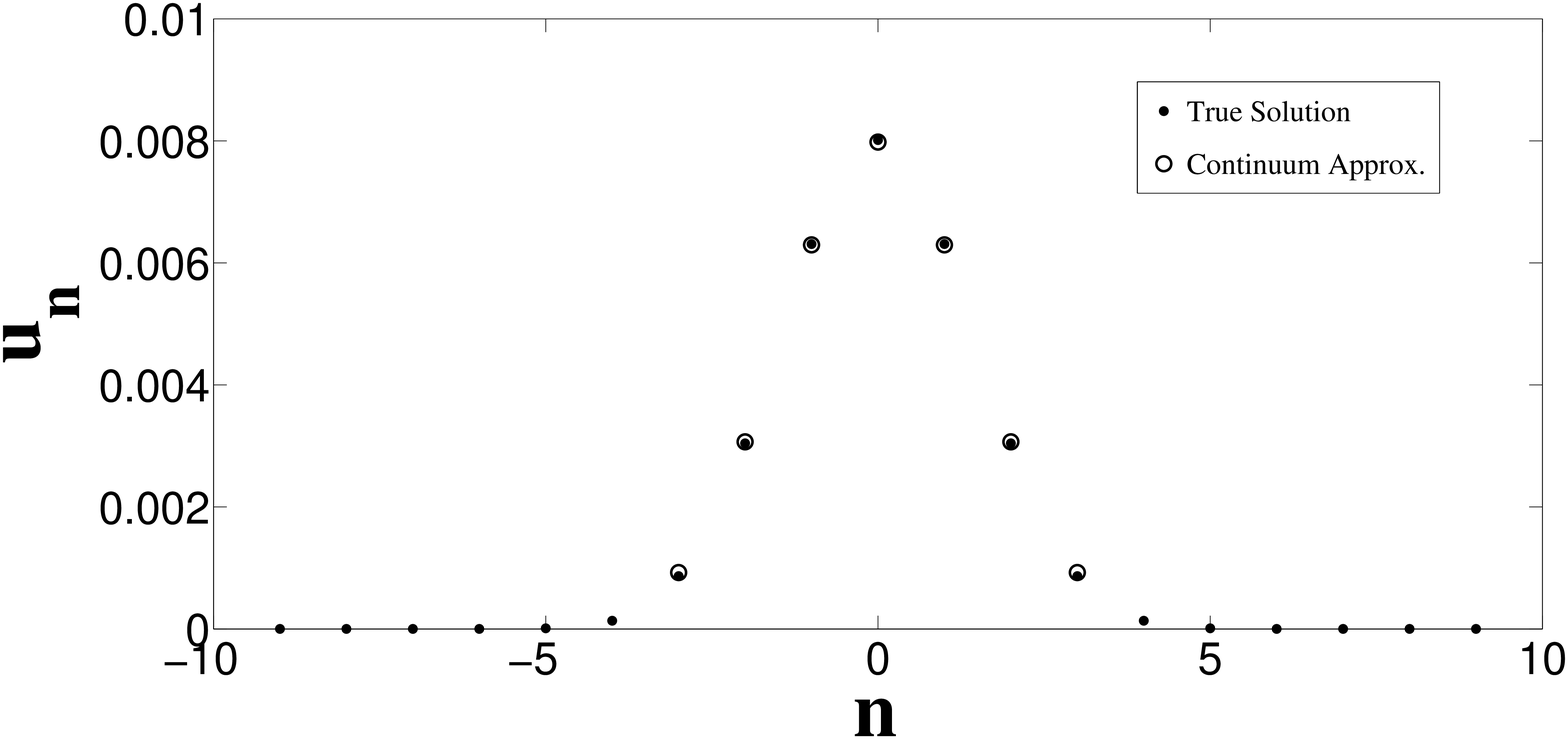}
\end{center}
\caption{\label{fig4}
Top plot~:
relative error between approximate and exact site-centered homoclinic orbits
vs. exponent $\alpha$.
The approximate solution is deduced from (\ref{eq5cont}) (see text) and the exact one
is computed numerically from the stationary DpS equation. The relative 
error is evaluated over the support of the approximate solution 
(the support being computed as indicated in the text). 
Bottom plot~:
comparison of the exact (bold solid dots) and approximate (circles)
homoclinic orbits for $\alpha =1.3$.
}
\end{figure}

\section{\label{stab}Breather stability and mobility}

In this section we numerically analyze the
spectral stability 
\rcgindex{\myidxeffect{S}!Spectral stability}
of site-centered 
and bond-centered breather solutions of (\ref{dpsa}) given by (\ref{solper})
(with $\Omega >0$),
in connection with their possible mobility (translational motion) under suitable
perturbations. These solutions 
correspond respectively to the site-centered 
homoclinic orbit $a^1_n$ and bond-centered homoclinic orbit $a^2_n$
described in theorem \ref{mainthm}. Their profiles
are illustrated on figures \ref{homocorb} and \ref{homocorb2} for $\alpha =3/2$ and $\alpha =1.2$.
This stability analysis was performed in the work \cite{JKC11} in the particular case
$\alpha=3/2$, and we shall extend this study to
some interval of physically meaningful values of $\alpha$.

The stability analysis can be simplified by taking advantage of the scale invariance of (\ref{dpsa}).
More precisely, if $\phi (t)= (\phi_n (t))_{n\in \mathbb{Z}}$ is a solution of (\ref{dpsa}) then so is 
${\Omega}^{\frac{1}{\alpha -1}}  \phi ({\Omega \, t})$ for all $\Omega >0$.
This scaling transformation allows one to generate the full family of breather solutions
(\ref{solper}) from the particular case $\Omega =1$. Consequently, 
the stable or unstable character
of these periodic solutions is independent of $\Omega$,
both for nonlinear, linear and spectral stability. For this reason
we shall restrict to the case $\Omega =1$ of (\ref{solper})
without loss of generality.

The bond- and site-centered
homoclinic solutions of the stationary DpS equation (\ref{dpsstat})
are computed numerically with a Newton-type method, for
a finite lattice of $N=21$ particles with zero boundary conditions.
Their spectral stability can be determined 
in complete analogy with the DNLS equation \cite{pgk},
by adopting a perturbation of the form 
\begin{equation}
\label{eq0stab}
\phi_{n} (t)=\exp (i\, t) \left( a_{n} + \varphi_{n} (t) \right) ,
\end{equation}
where $a=(a_{n} )_{1\leq n \leq N}$ denotes the (real-valued) localized solution of (\ref{dpsa})
corresponding to the unperturbed breather.
Substituting (\ref{eq0stab}) into (\ref{dpsa}) and linearizing with respect to
$\varphi = (\varphi_{n}(t) )_{1\leq n \leq N}$ yields a linear autonomous differential
equation for $(\varphi , \varphi^\star )$. Looking for solutions of the form  
\begin{equation}
\label{eq1stab}
\varphi_{n} (t)= \alpha_{n} \exp \left( {\lambda t} 
\right)+\beta_{n}^{\star } \exp \left( {\lambda^{\star }t} \right) 
\end{equation}
yields a linear eigenvalue problem $M\, V = i\, \lambda\, V$, where
$M\in M_{2N}(\mathbb{R})$ and $V=(\alpha_n ,\beta_n)^T_{1\leq n \leq N}$.
Due to the Hamiltonian character of (\ref{dpsa}), 
whenever $\lambda$ is an eigenvalue, so are $\lambda^{\star}$,
$-\lambda$ and $-\lambda^{\star}$.

Figure \ref{fig9b} displays the resulting eigenvalues in the complex plane, 
for the site- and bond-centered homoclinic solutions and $\alpha =3/2$
(the eigenvalue problem is solved by standard numerical linear algebra solvers). 
For all values of $\alpha$ in the interval $[1.2,3.4]$,
we find that the bond-centered breather is spectrally stable
(i.e. the eigenvalues $\lambda$ are purely imaginary), whereas
the site-centered breather is unstable through a simple real positive eigenvalue. 
In figure \ref{fig9c}
we plot $M^{\ast }= \mbox{Max}\{ \mbox{Re}(\lambda ) \}$,
i.e. the maximal real part 
reached by the eigenvalues 
for the site-centered breather, versus $\alpha \in [1.2,3.4]$.
These results show that the strength of the instability increases with $\alpha$,
and indicate that
the unstable eigenvalue goes to $0$ in the limit $\alpha \rightarrow 1^+$.

The above property can be interpreted intuitively in connection with the continuum limit
derived in section \ref{sectapprox2}, where one obtains asymptotically a
translationally invariant family of Gaussian breather solutions. Such 
families of solutions usually correspond to an extra double (non-semi-simple)
eigenvalue $0$ associated with a translation mode \cite{dmi},
and perturbations along the associated marginal mode (generalized eigenvector)
generate a translational motion \cite{aubryC}. 
However, this analogy is purely heuristic since breather solutions delocalize when $\alpha \rightarrow 1^+$.

\begin{figure}[h]
\begin{center}
\includegraphics[scale=0.25]{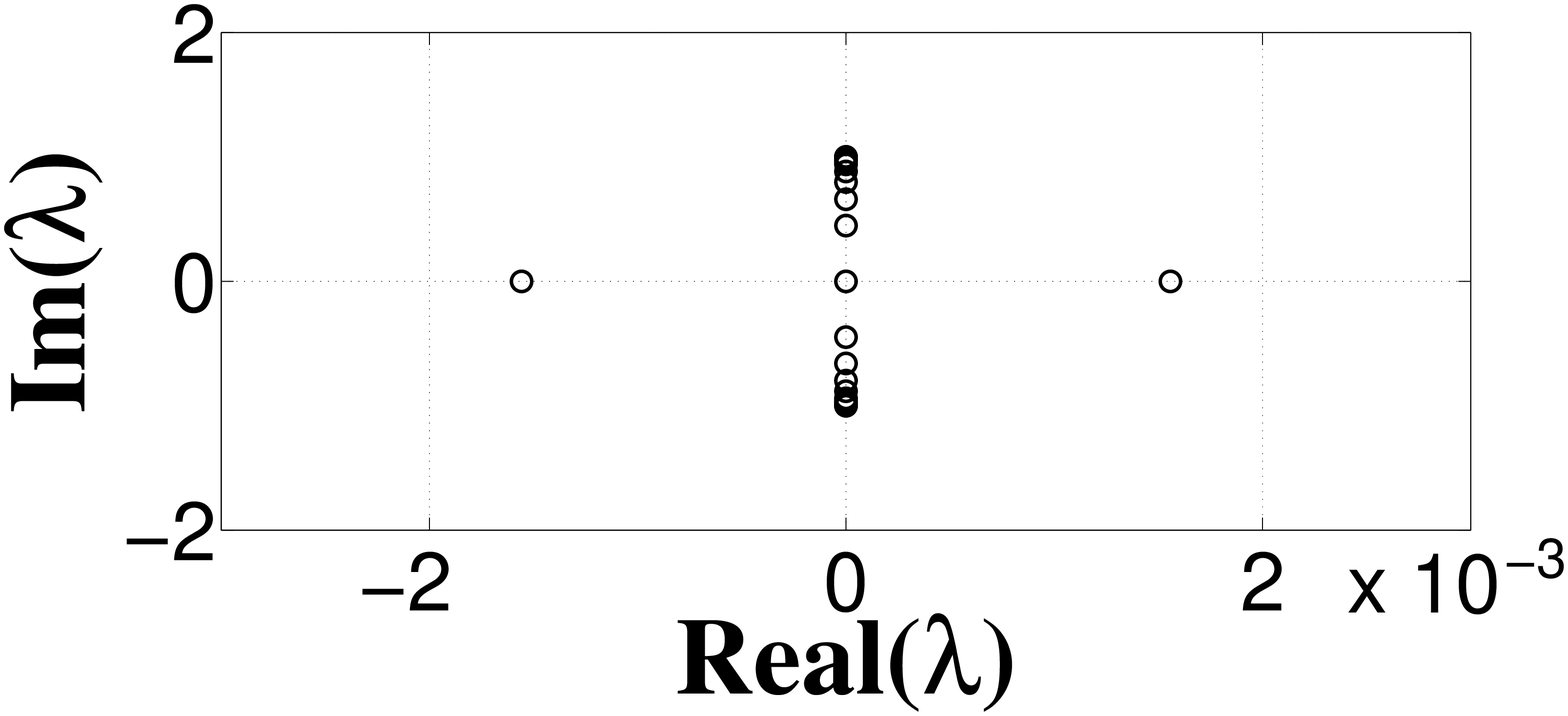}
\includegraphics[scale=0.25]{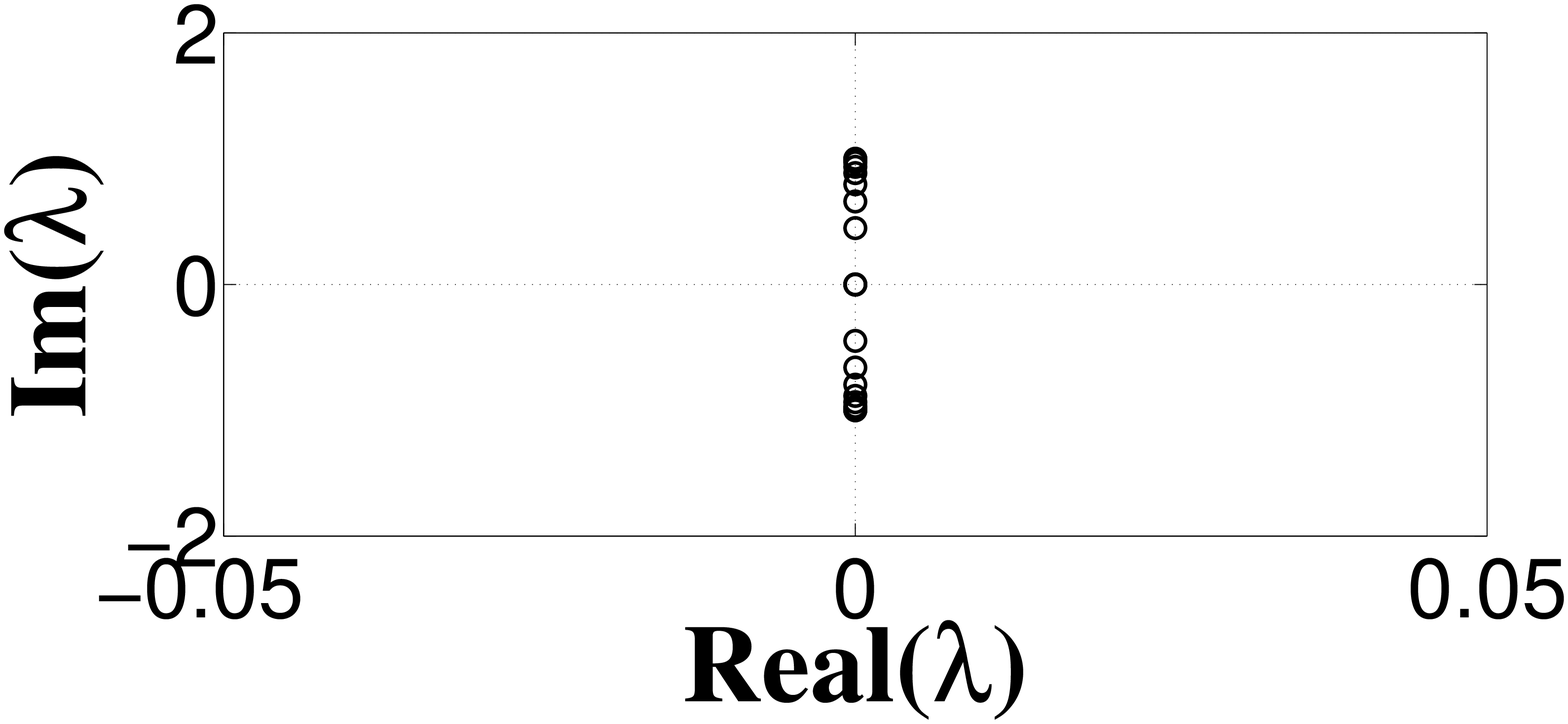}
\end{center}
\caption{\label{fig9b}
Eigenvalues of the linearization of the DpS equation
around the site-centered breather solution (top plot)
and the bond-centered breather solution (bottom plot),
computed for $\alpha =1.5$ and $\Omega =1$.}
\end{figure}

\begin{figure}[h]
\begin{center}
\includegraphics[scale=0.2]{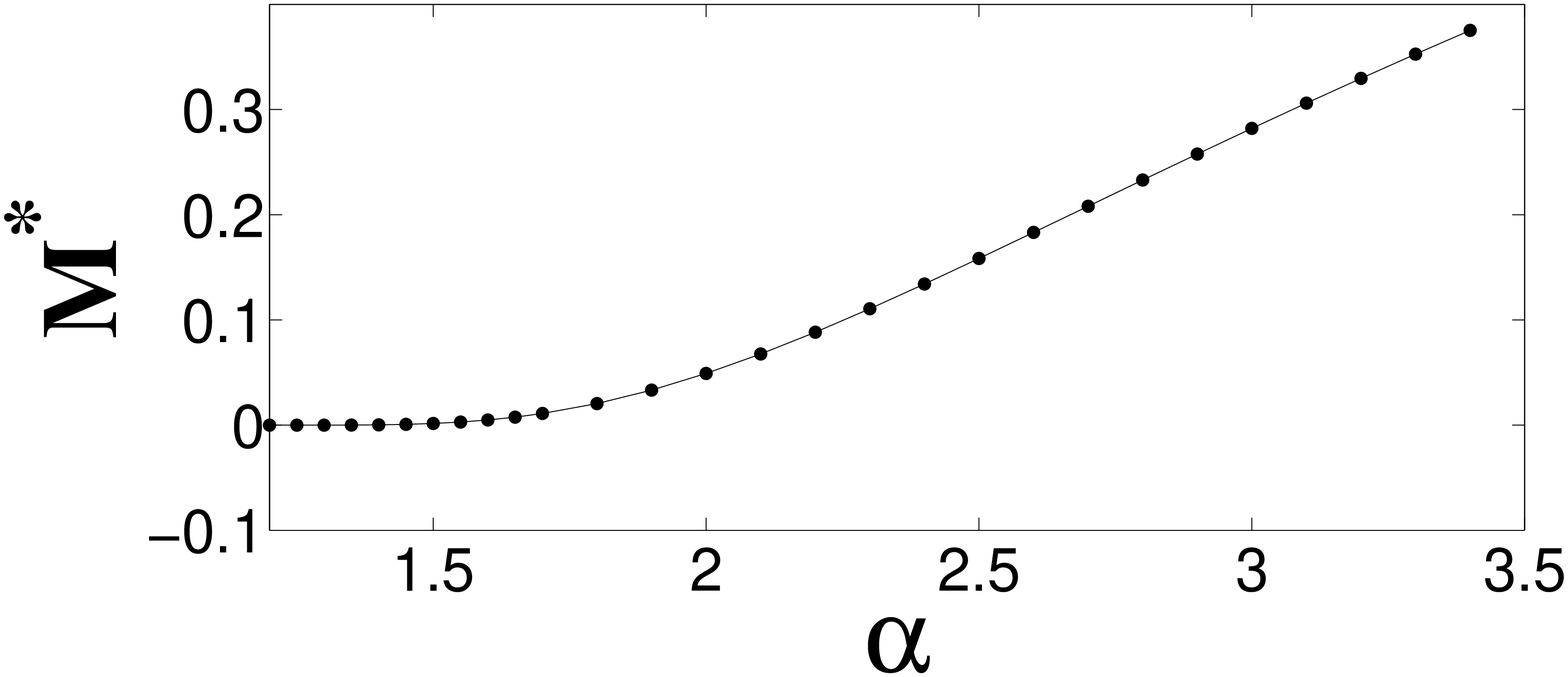}
\end{center}
\caption{\label{fig9c}
Maximal real part of the
eigenvalues of the linearization of the DpS equation
around the site-centered breather solution, plotted for
different values of $\alpha$ and $\Omega =1$.}
\end{figure}

To illustrate the dynamical effects of breather instability we attempt
the perturbation of the site-centered breather with $\Omega =1$
for different values of $\alpha$. We consider an eigenvector
$V^{\rm{u}}=(\alpha_n ,\alpha_n^\ast )_{1\leq n \leq N}$ associated with
the real unstable eigenvalue $\lambda >0$, and
the eigenvector
$V^{\rm{s}}=(\alpha_n^\ast , \alpha_n)_{1\leq n \leq N}$ associated with
the stable eigenvalue $-\lambda$. 
In formal analogy with the work \cite{aubryC},
we perturb the unstable breathers along the {\em approximate} marginal mode
$W=V^{\rm{u}}-V^{\rm{s}}$, which corresponds to fixing
\begin{equation}
\label{defperturb}
\phi_n (0)=a_n + c\, i\ \rm{Im}(\alpha_n),
\end{equation}
with $c \in \mathbb{R}$.
The parameter $c$ is tuned in order to achieve a desired level of
relative energy perturbation $\delta H_{\rm{rel}} = (H( \phi )- H( a )) / H( a )$,
where the energy $H$ is defined through (\ref{defh}).

Figure \ref{tbal1_5} illustrates the dynamics of a perturbed 
site-centered breather for $\alpha=3/2$. For
different strengths of the perturbation, the
instability of the site-centered breather leads to its 
translational motion. One can notice 
that additional breathers with smaller amplitudes are also emitted
when the initial perturbation is sufficiently large.
Note that random perturbations 
(which generally possess a nonzero
component along the marginal mode) also
typically generate a translational motion \cite{JKC11}, 
albeit the propagation velocity is generally smaller compared to a
``pure" perturbation (of similar strength) along the marginal mode. 

In figure \ref{tbal3}, the same type of computation is performed 
for $\alpha=3$. The situation is strikingly different compared to figure \ref{tbal1_5}.
The resulting dynamics is now mainly characterized by a pinning of the localized
excitation alternating with phases of irregular motion when the initial
perturbation is large enough. Such behavior may 
be induced by the interaction of the localized solution with
other localized or extended waves generated by the initial perturbation,
a situation reminiscent of numerical observations made in \cite{ssen}.

\begin{figure}[h]
\begin{center}
\includegraphics[scale=0.2]{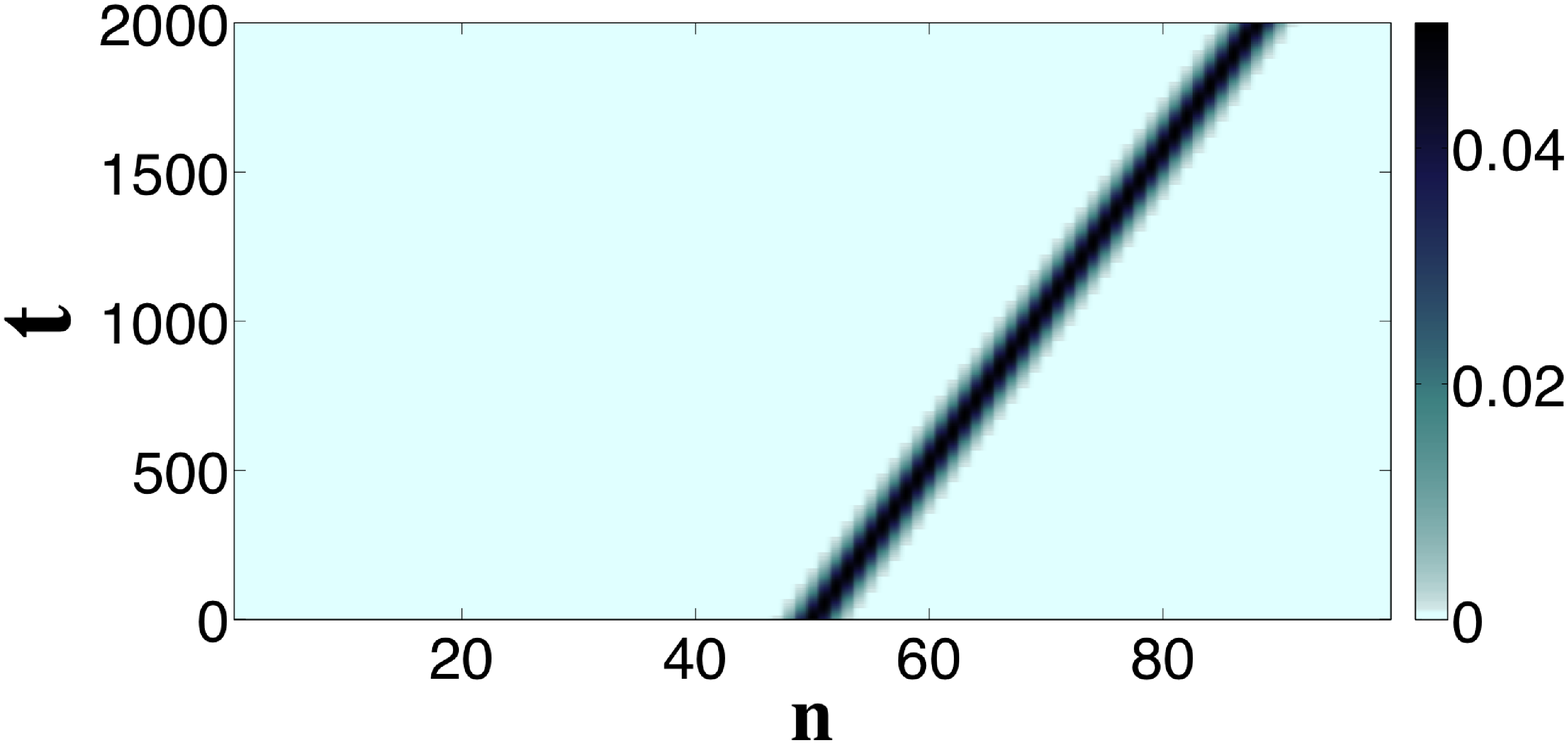}
\includegraphics[scale=0.2]{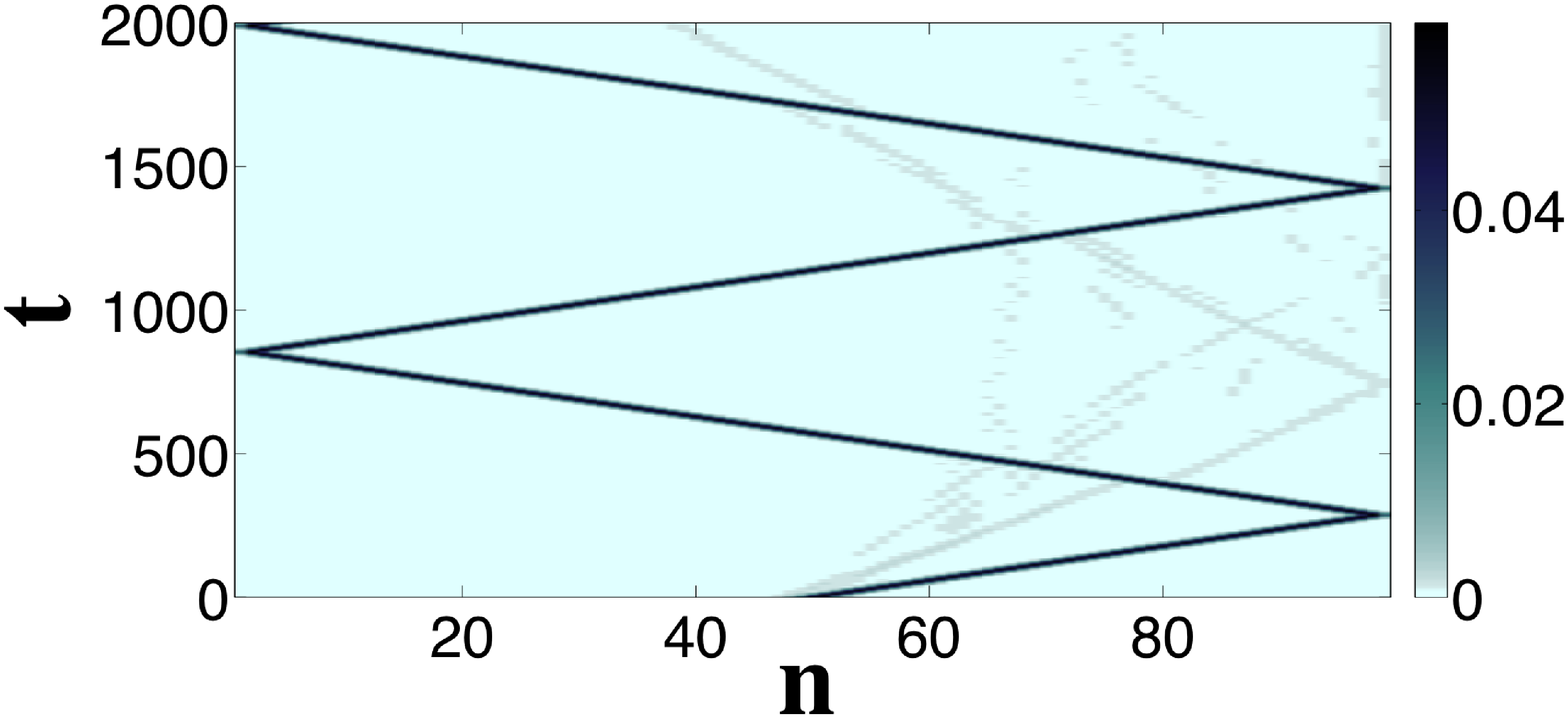}
\includegraphics[scale=0.2]{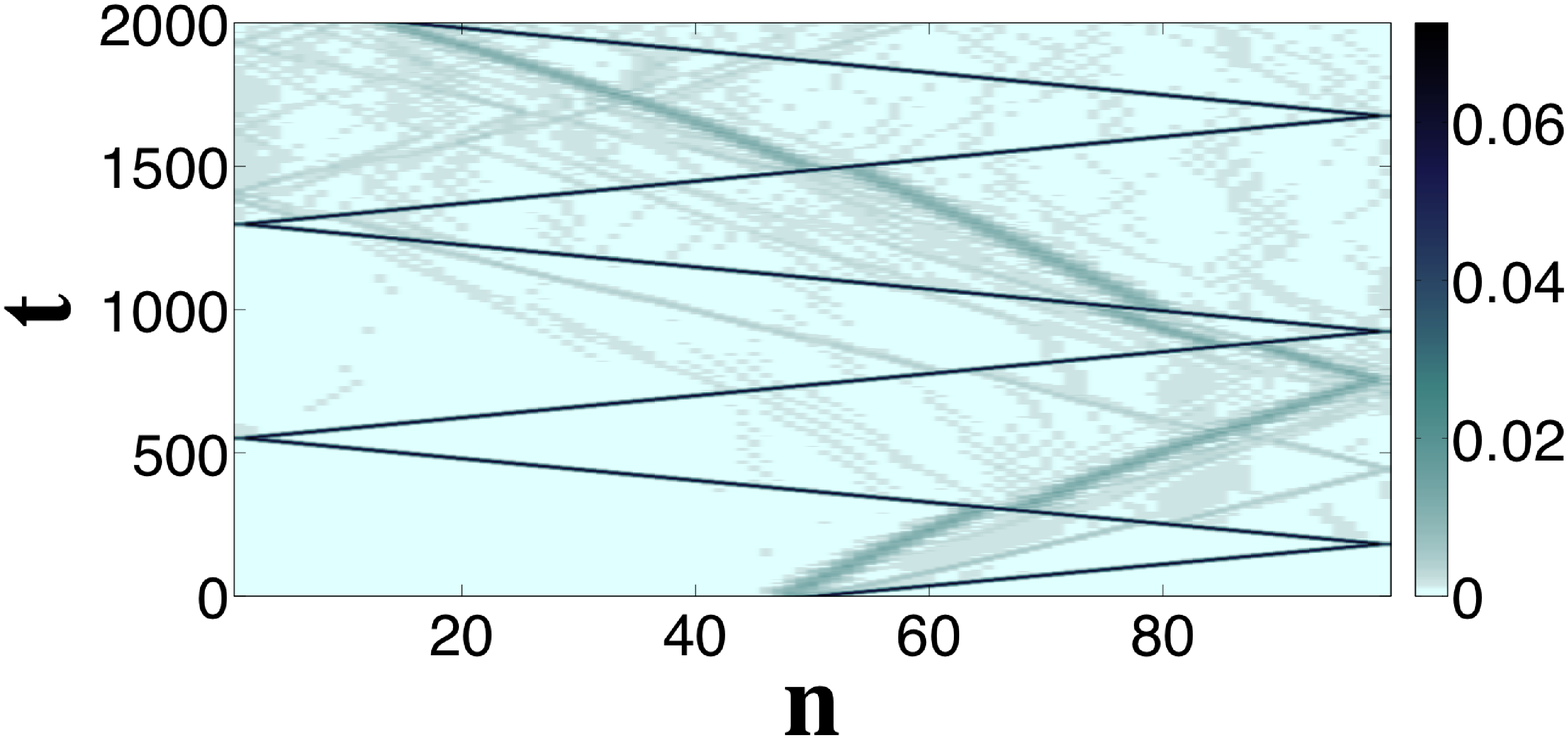}
\end{center}
\caption{\label{tbal1_5}
Space-time contour plot of the modulus of the field $|\phi_n(t)|$
for equation (\ref{dpsa}) with $\alpha=3/2$.
The initial condition is given by (\ref{defperturb}), and corresponds to
a site-centered breather (with frequency $\Omega=1$)
perturbed in the direction of the approximate marginal mode $W$.
Different levels of relative energy perturbation are considered~:
$\delta H_{\rm{rel}} \approx 0.001$ (top plot), 
$\delta H_{\rm{rel}} \approx 0.1$ (middle plot),
$\delta H_{\rm{rel}}\approx 0.5$ (bottom plot).
The perturbation leads to the manifestation of the
instability of the site-centered breather which, in turn, leads to its 
translational motion.
}
\end{figure}

\begin{figure}[h]
\begin{center}
\includegraphics[scale=0.2]{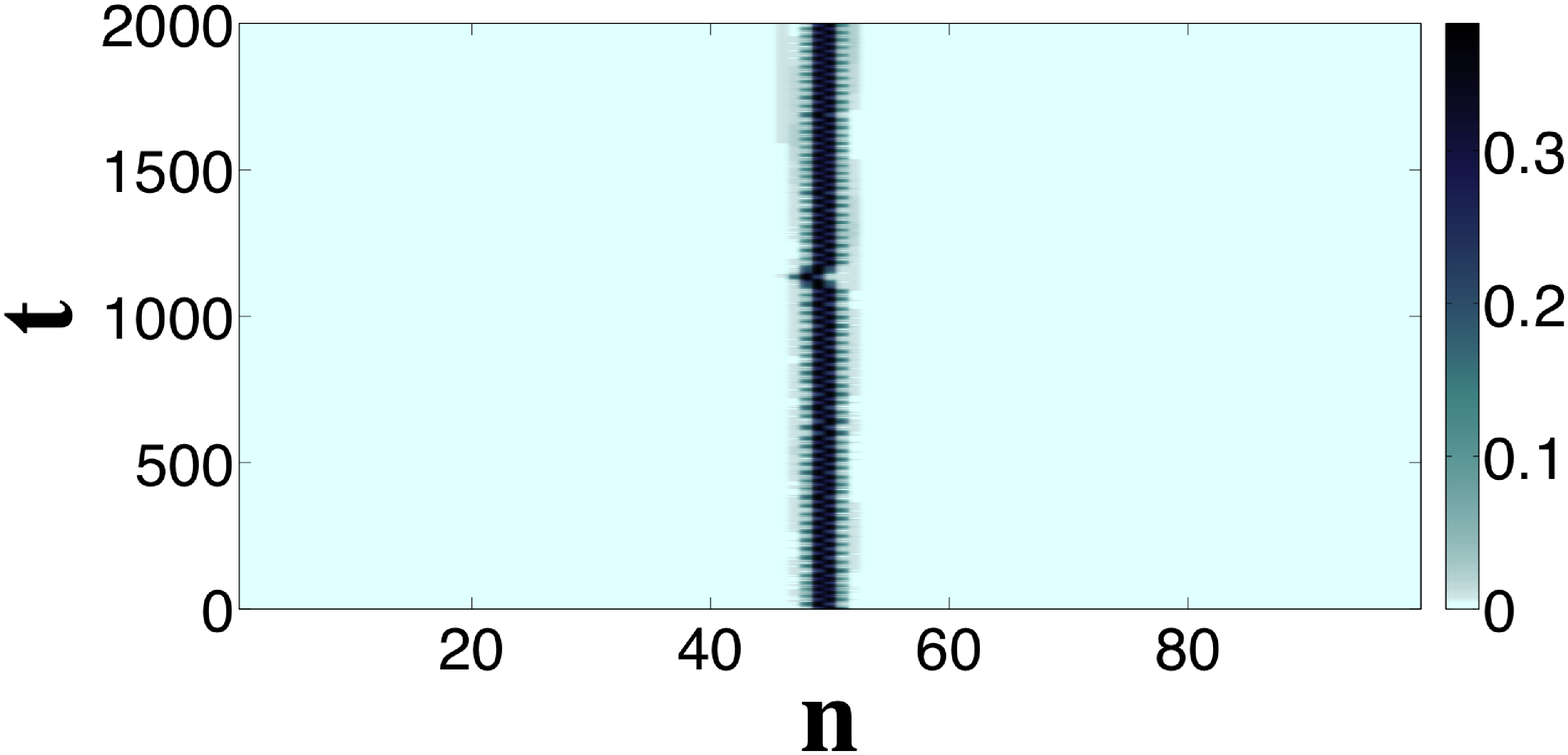}
\hspace*{1ex}
\includegraphics[scale=0.195]{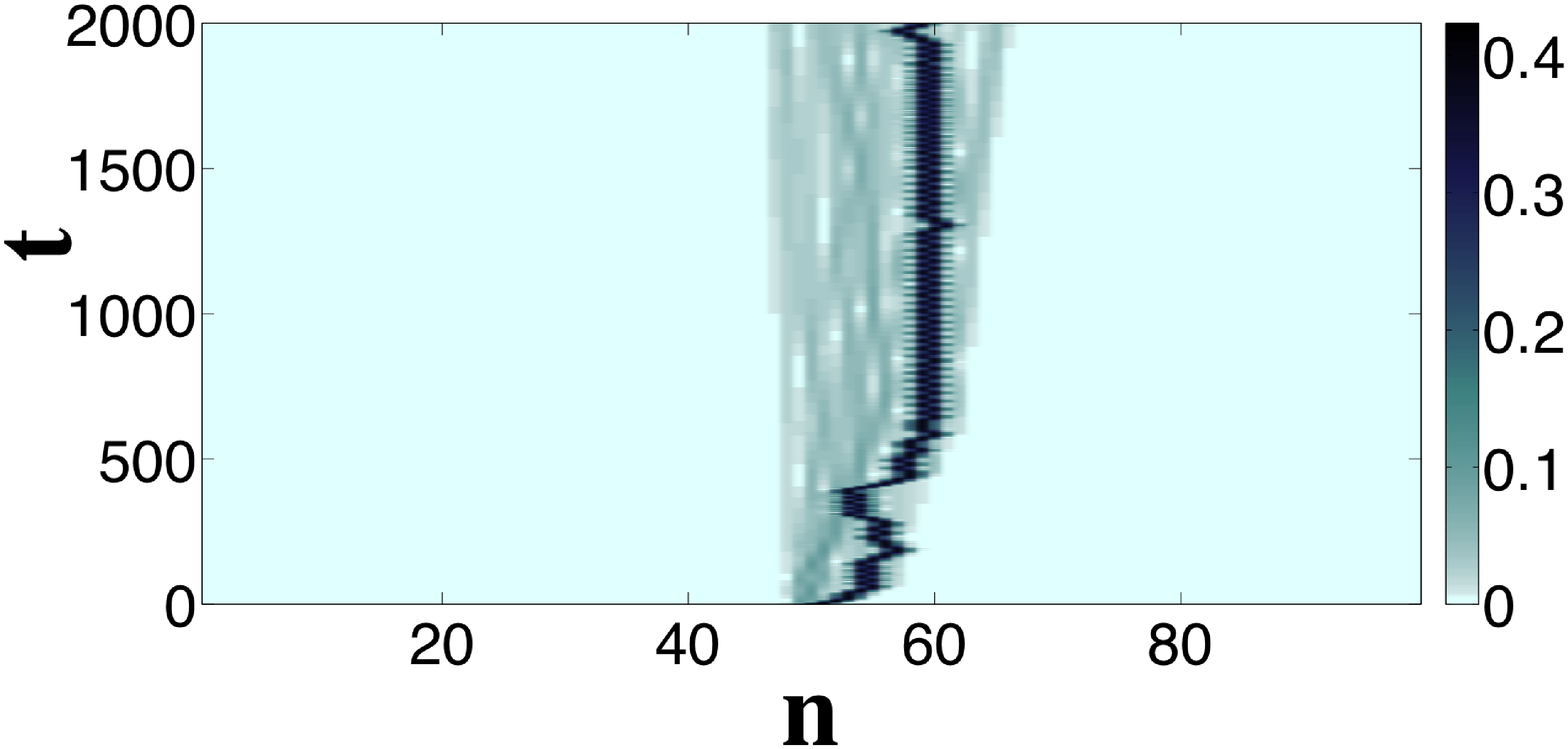}
\includegraphics[scale=0.2]{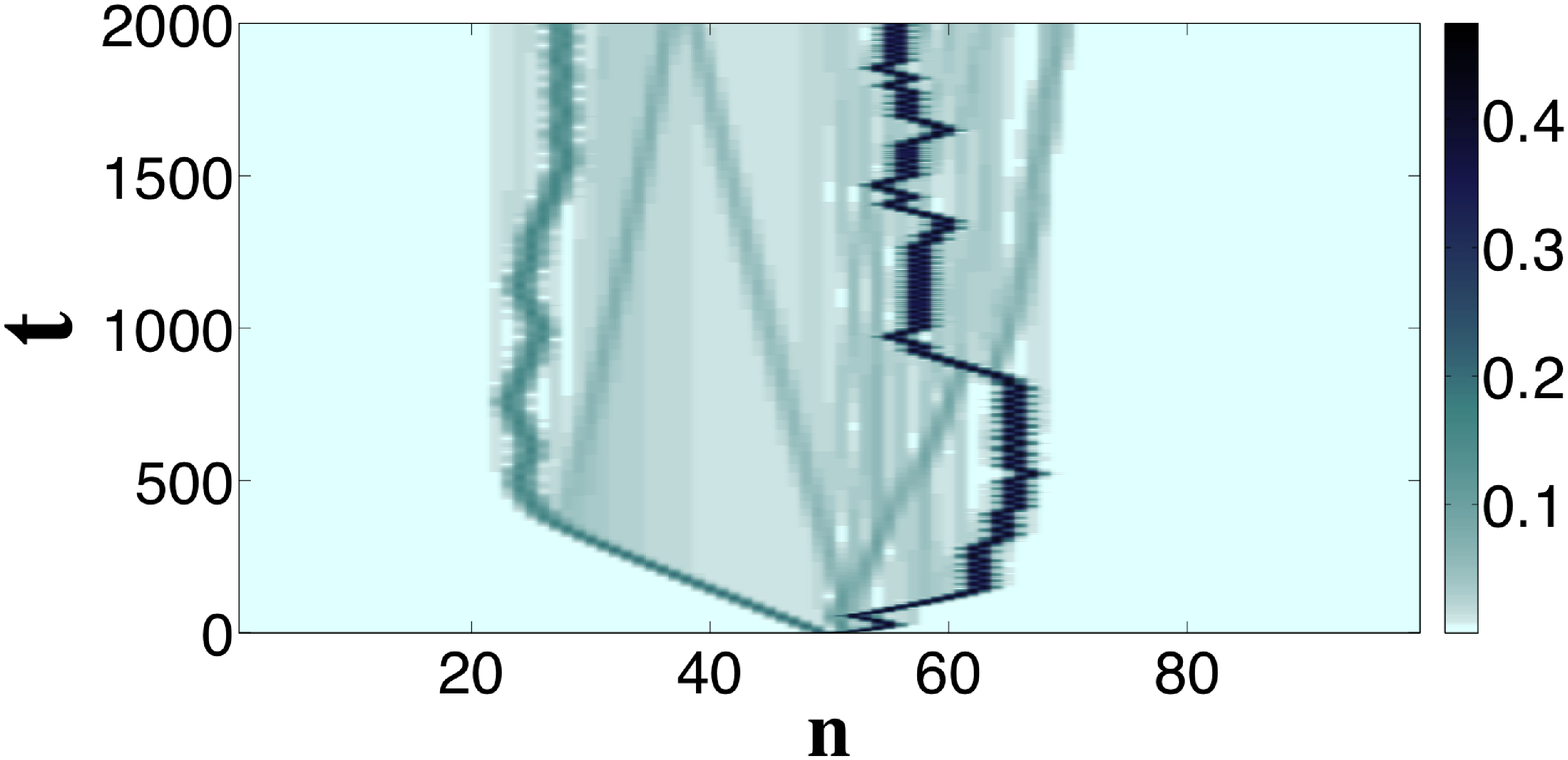}
\end{center}
\caption{\label{tbal3}
Same as in figure \ref{tbal1_5}, but in the case $\alpha=3$.
For $\delta H_{\rm{rel}} \approx 0.001$ (top plot),
the perturbed breather is wandering between two neighboring sites.
For $\delta H_{\rm{rel}} \approx 0.1$ (middle plot) and
$\delta H_{\rm{rel}}\approx 0.5$ (bottom plot), the breather
is able to propagate over a few sites but get pinned subsequently.
Breather trapping can be interrupted by 
phases of irregular motion which may 
originate from the interaction of the localized solution with
other waves. For the largest initial perturbation,
one can notice the emission of 
additional breathers with smaller amplitudes.}
\end{figure}

The above transition from breather mobility to 
pinning phenomena when $\alpha$ is increased
can be linked with a sharp increase of the Peierls-Nabarro (PN) 
energy barrier between site- and bond-centered breathers. 
The PN barrier \cite{kivc} approximates the amount of energy 
$H$ required
for the depinning of a stable bond-centered breather, resulting in a translational motion.
It is defined through $E_{\rm{PN}}=|H_{\rm{sc}}-H_{\rm{bc}}|$, where
$H_{\rm{sc}}$, $H_{\rm{bc}}$ denote the energies of site- and bond-centered breather
solutions (\ref{solper}) having the same squared $\ell_2$ norm $\sum_{n\in\mathbb{Z}}{|\phi_n|^2}$.
When $E_{\rm{PN}}$ is relatively high, a traveling breather can easily become
trapped after loosing some energy through dispersion or during interaction with other waves.
Figure \ref{pnb} displays $E_{\rm{PN}}$ for different values of $\alpha$, as well as
the relative energy difference $E_{\rm{rel}}=E_{\rm{PN}}/H_{\rm{sc}}$
(the frequency of the site-centered breather is set to unity in this computation).
One can observe a qualitative change in the energy curves around $\alpha =2.5$.
For $\alpha \leq 2.5$, the absolute and relative PN barriers are very small,
and they become much larger for $\alpha \in [3,12]$ 
(e.g. $E_{\rm{rel}}$ increases by $5$ orders of magnitude
between $\alpha = 3/2$ and $\alpha=3$). In this interval
the growth of $E_{\rm{PN}}$ becomes almost linear in $\alpha$.
The increase of $E_{\rm{PN}}$ explains why traveling breathers are easily
generated from static breathers when $\alpha$ is close to unity, 
whereas pinning dominates for sufficiently large values of $\alpha$.

\begin{figure}[h]
\psfrag{10.0}[1][Bl]{{\scriptsize $10$}}
\psfrag{0.1}[1][Bl]{{\scriptsize $10^{-1}$~~}}
\psfrag{1e-3}[1][Bl]{{\scriptsize $10^{-3}~~$}}
\psfrag{1e-5}[1][Bl]{{\scriptsize $10^{-5}~~$}}
\psfrag{1e-7}[1][Bl]{{\scriptsize $10^{-7}~~$}}
\psfrag{1e-9}[1][Bl]{{\scriptsize $10^{-9}~~$}}
\psfrag{1e-11}[1][Bl]{{\scriptsize $10^{-11}~~$}}
\psfrag{1e-13}[1][Bl]{{\scriptsize $10^{-13}~~$}}
\psfrag{1e-15}[1][Bl]{{\scriptsize $10^{-15}~~$}}
\psfrag{0.18}[1][Bl]{~}
\psfrag{0.16}[1][Bl]{~}
\psfrag{0.14}[1][Bl]{~}
\psfrag{0.12}[1][Bl]{~}
\psfrag{0.10}[1][Bl]{~}
\psfrag{0.08}[1][Bl]{~}
\psfrag{0.06}[1][Bl]{~}
\psfrag{0.04}[1][Bl]{~}
\psfrag{0.02}[1][Bl]{~}
\psfrag{0.00}[1][Bl]{{\small $0$}}
\psfrag{0.20}[1][Bl]{{\small $0.2$}}
\psfrag{0}[1][Bl]{{\small $0$}}
\psfrag{2}[1][Bl]{{\small $2$}}
\psfrag{4}[1][Bl]{{\small $4$}}
\psfrag{6}[1][Bl]{{\small $6$}}
\psfrag{8}[1][Bl]{{\small $8$}}
\psfrag{10}[1][Bl]{{\small $10$}}
\psfrag{12}[1][Bl]{{\small $12$}}
\psfrag{alpha}[1][Bl]{{~~~~~~~~~$\alpha$}}
\psfrag{E_PN}[1][Bl]{\begin{tabular}{c}$E_{\rm{PN}}$\\ ~\end{tabular}}
\psfrag{RE_PN}[1][Bl]{\begin{tabular}{c}$E_{\rm{rel}}$\end{tabular}}
\begin{center}
\includegraphics[scale=0.31]{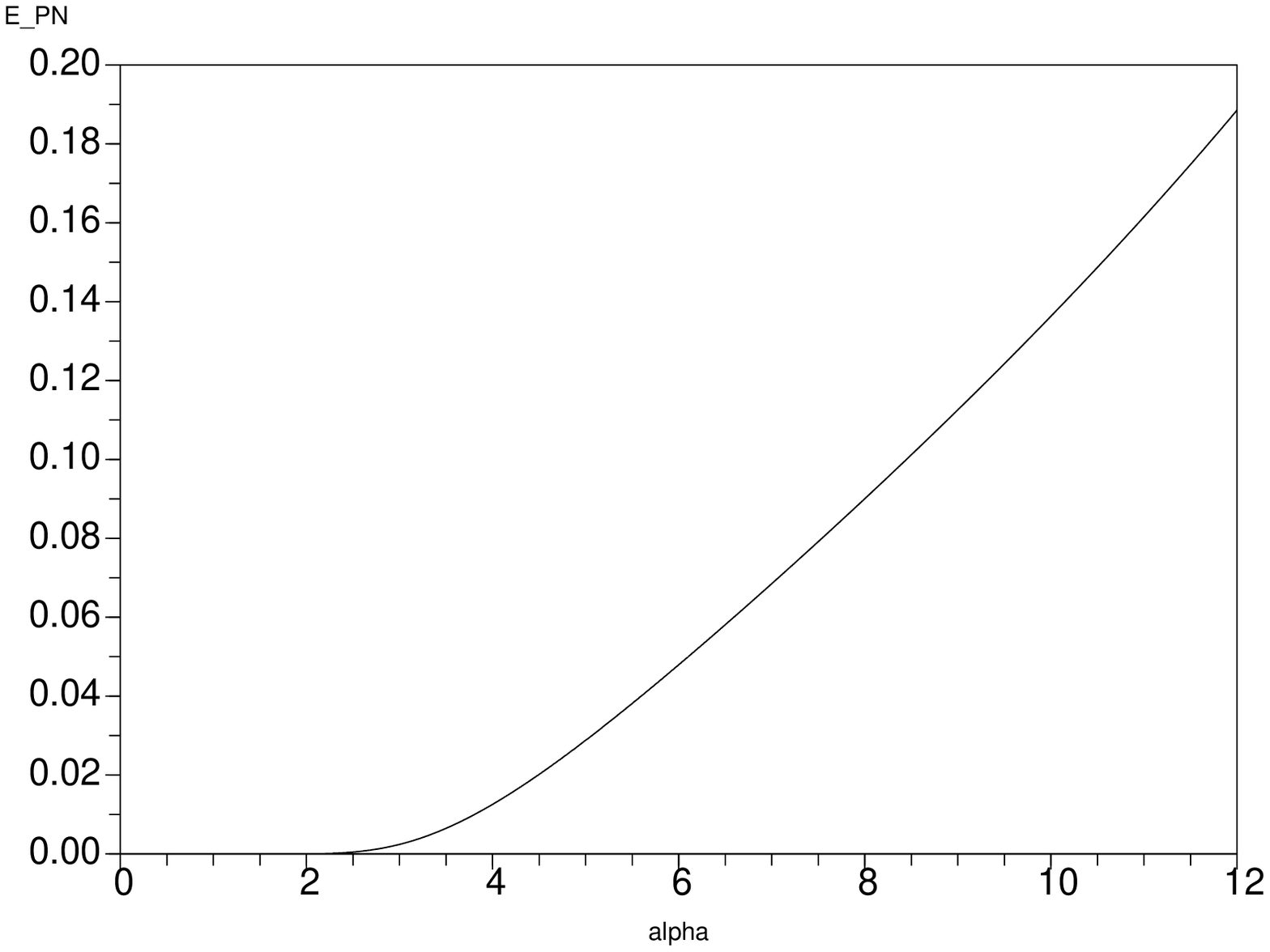}
\hspace*{-6ex}
\includegraphics[scale=0.31]{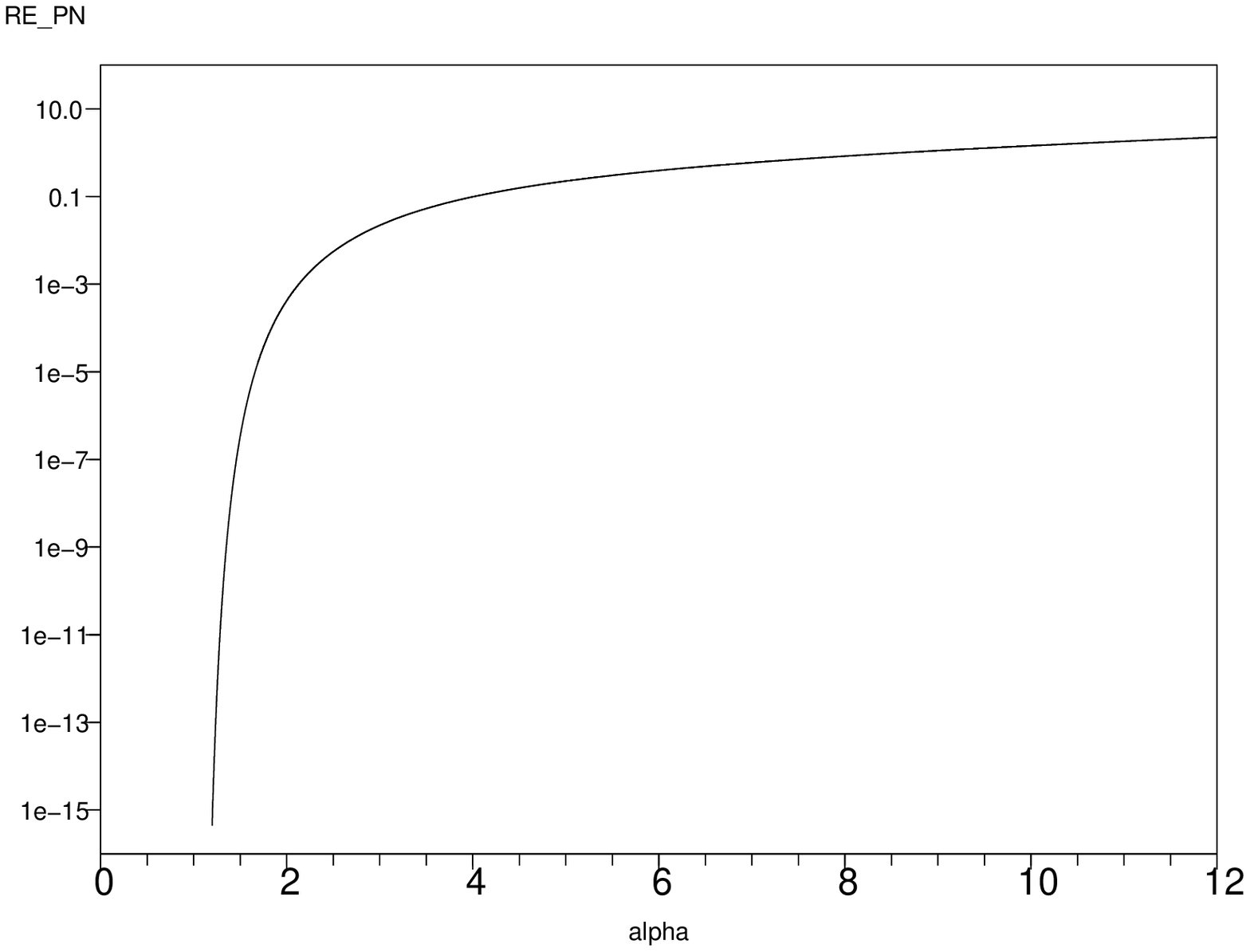}
\end{center}
\caption{\label{pnb}
Left~: Peierls-Nabarro energy barrier $E_{\rm{PN}}$ between the site-centered breather with frequency $\Omega=1$
and the bond-centered breather having the same $\ell_2$ norm. 
Right~: relative energy difference between the site- and bond-centered breathers
(note the semi-logarithmic scale).}
\end{figure}

\section{\label{sectdiscuss}Discussion}

In this paper, we have proved an existence theorem for
homoclinic orbits of the stationary DpS equation (\ref{dpsstat}), in agreement
with previous numerical results \cite{James11,sahvm,JKC11}.
These homoclinics correspond to breather solutions of the time-dependent
DpS equation (\ref{dpsa}). 
This result implies the existence of long-lived breather
solutions in infinite chains of oscillators with fully nonlinear nearest-neighbors interactions \cite{bdj},
in particular granular chains with Hertzian contact interactions \cite{James11,sahvm,JKC11}.
In addition, we have obtained analytical approximations of breather profiles and
of the corresponding intersecting stable and unstable manifolds. For $p\approx 2$,
this was achieved
by deriving a suitable continuum limit of the stationary DpS equation, consisting of
a logarithmic nonlinear Schr\"odinger equation. We have also numerically
determined the spectral stability of breather solutions depending on their
odd- or even-parity symmetry, and we have studied their mobility properties.

An interesting extension to this work would be to consider a DpS equation
including a localized defect
\begin{equation}
\label{dpsdefect}
i  \frac{d}{d t}\phi_n- \delta_{0,n}\, \nu\, \phi_n
=(\Delta_{p}\phi)_n, \ \
n\in \mathbb{Z},
\end{equation}
where $\nu$ is a real parameter defining the defect strength and
$\delta_{m,n}$ denotes the Kronecker delta.
For example,
system (\ref{dpsdefect}) can be derived in a similar way as in \cite{James11},
by considering a Newton's cradle including a small mass defect or
a small stiffness variation of one oscillator in the chain.
If $\nu \neq 0$, the Ansatz (\ref{solper}) defines a solution of (\ref{dpsdefect}) if
\begin{equation}
\label{dpsstatinhom}
-s\, {a_n}- \delta_{0,n}\, \lambda\, a_n= (\Delta_{p}a)_n, \ \
n\in \mathbb{Z},
\end{equation} 
where we have set $\nu = |\Omega |\, \lambda$ and $s=\mbox{Sign}(\Omega)$. 
An interesting problem is to analyze 
bifurcations of homoclinic
solutions of (\ref{dpsstatinhom}) when $\lambda$ is varied, or equivalently when
the defect strength and breather frequency are tuned. For example,
the associated solutions of (\ref{dpsdefect}) 
may correspond to localized modes excited by a traveling breather
reaching a defect \cite{JKC11}. In addition, it is also interesting to understand how the breathers
existing in a spatially homogeneous system are affected by spatial inhomogeneities,
since the latter are always present in real systems, or could be introduced 
dynamically to manipulate localized excitations.  

Similarly to what we have seen, one can 
reformulate (\ref{dpsstatinhom}) as a two-dimensional (nonautonomous) mapping.
The map possesses a particularly simple structure at the defect site when the mixed variables introduced
in section \ref{miv} are used. Indeed, the mapping (\ref{mappingn}) becomes
\begin{equation}
\label{mappinginhom}
\begin{pmatrix} v_{n+1} \\ y_{n+1}\end{pmatrix}
=\left(\mbox{Id}+ \delta_{0,n}\, \lambda \,
\begin{pmatrix} 0&0 \\ 1&0 \end{pmatrix} \, 
\right)
F
\begin{pmatrix} v_{n} \\ y_{n}\end{pmatrix} .
\end{equation}
This yields an interpretation of the defect as the composition of $F$ with 
the linear shear
\begin{equation}
\label{defal}
A(\lambda ) =\begin{pmatrix} 1&0 \\ \lambda&1 \end{pmatrix} ,
\end{equation}
which is quite convenient to analyze bifurcations of homoclinic solutions
(note that using the maps $M$ and $T$ introduced in section \ref{maps} would lead
to more complex perturbations induced by the defect). Indeed,
the mapping (\ref{mappinginhom}) admits 
an orbit homoclinic to $0$ if and only if 
$A(\lambda)\,  \mathcal{W}^{\text{u}}(0)$ and $\mathcal{W}^{\text{s}}(0)$ intersect
(see section \ref{defsumf} for the definition of the stable and unstable manifolds).
In that case, each intersection $(v_{1} , y_{1})$ determines an homoclinic orbit
of (\ref{mappinginhom}), hence 
an homoclinic solution of (\ref{dpsstatinhom}) with $s=1$.
Consequently,
the approximations of $\mathcal{W}^{\text{u}}(0)$ and $\mathcal{W}^{\text{s}}(0)$
introduced in section \ref{sectapprox} provide a theoretical tool to predict
defect-induced breather bifurcations, and approximate the shape and energy
of bifurcating solutions. This problem will be treated in detail in a forthcoming paper. 

\appendix
\begin{center}
{\Large\bf Appendix}
\end{center}

\section*{Analysis of the stationary DpS equation for $s=-1$}

In this appendix, we consider the stationary DpS equation with $s=-1$ and
prove the non-existence of nontrivial bounded solutions
stated in theorem \ref{mainthm}.

\begin{lemma}
\label{nonexist}
For $s=-1$, the only bounded solution of (\ref{dpsstat}) is $a_n=0$.
\end{lemma}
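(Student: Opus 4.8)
The plan is to test equation (\ref{dpsstat}) against $a_n$ itself, sum over a symmetric finite window, and sum by parts in the discrete sense; here the sign $s=-1$ will be crucial, since it makes the resulting identity produce an a priori bound that forces $a\equiv 0$.

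Concretely, for $s=-1$ equation (\ref{dpsstat}) reads $a_n=(\Delta_p a)_n=P_\alpha(d_n)-P_\alpha(d_{n-1})$, where $d_n=a_{n+1}-a_n$, $P_\alpha(x)=x\,|x|^{\alpha-1}$ and $\alpha=p-1$. First I would multiply this identity by $a_n$ and sum for $-N\le n\le N$. Shifting the summation index in the term containing $P_\alpha(d_{n-1})$ and using the elementary identity $(a_n-a_{n+1})\,P_\alpha(d_n)=-d_n\,P_\alpha(d_n)=-|d_n|^{\alpha+1}$, the discrete summation by parts yields
\[
\sum_{n=-N}^{N} a_n^{2}\;+\;\sum_{n=-N}^{N-1}|d_n|^{\alpha+1}\;=\;a_N\,P_\alpha(d_N)\;-\;a_{-N}\,P_\alpha(d_{-N-1}).
\]
The next step is to note that, since $\|a\|_\infty<\infty$, one has $|d_n|\le 2\|a\|_\infty$, so the right-hand side is bounded by $2^{\alpha+1}\|a\|_\infty^{\alpha+1}$ uniformly in $N$. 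Both sums on the left are nonnegative and nondecreasing in $N$, so letting $N\to\infty$ forces $a\in\ell_2(\mathbb{Z})$ together with $\sum_{n\in\mathbb{Z}}|d_n|^{\alpha+1}<\infty$; in particular $a_n\to 0$ and $d_n\to 0$ as $|n|\to\infty$.

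To conclude, once $a_n\to 0$ and $d_n\to 0$ the right-hand side of the displayed identity tends to $0$ as $N\to\infty$, while the left-hand side increases to $\sum_{n\in\mathbb{Z}}a_n^{2}+\sum_{n\in\mathbb{Z}}|d_n|^{\alpha+1}\ge 0$; hence this nonnegative limit must be $0$, which gives $a_n=0$ for all $n$. I do not anticipate any genuine obstacle: the computation is elementary, and the only point needing care is that the sign $s=-1$ is what makes the left-hand side of the identity (a sum of nonnegative terms, rather than a difference) — for $s=1$ the same manipulation gives $\sum|d_n|^{\alpha+1}-\sum a_n^{2}=(\text{boundary terms})$, which carries no sign information, consistent with the fact that nontrivial homoclinic solutions do exist in that case (Theorem~\ref{mainthm}).
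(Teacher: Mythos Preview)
Your argument is correct: the summation-by-parts identity is verified exactly as you wrote it, the uniform bound on the boundary terms is immediate from $\|a\|_\infty<\infty$, and the two-step passage to the limit (first to get $a\in\ell_2$, hence $a_n\to 0$, then to kill the boundary terms) is clean.

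However, your approach is genuinely different from the paper's. The paper argues by a direct monotonicity-and-growth analysis: assuming (up to sign and reflection symmetries) that $a_1>0$, $a_0\ge 0$ and $a_1\ge a_0$, it shows by induction from the recursion that $(a_n)_{n\ge 0}$ is positive and nondecreasing, and then extracts from (\ref{dpsstat}) the lower bound $(a_{n+1}-a_n)^\alpha\ge a_n\ge a_1$, which forces $a_n\to+\infty$; the remaining initial configurations are reduced to this one by the symmetries $a_n\mapsto -a_n$ and $a_n\mapsto a_{-n+1}$. Your route is instead an energy identity exploiting the variational structure of $\Delta_p$. The paper's argument gives slightly more (nontrivial solutions actually diverge at a linear rate), and avoids any summation; yours is a single unified computation with no case distinctions, is closer in spirit to the Hamiltonian structure (\ref{defh}), and would adapt readily to variants of the equation where the pointwise monotonicity argument might become awkward.
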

\begin{proof}
We restrict to the case $a_0 \geq 0$ due to the invariance $a_n \rightarrow -a_n $ of (\ref{dpsstat}).
Let us first assume
\begin{equation}
\label{case1}
a_1 >0, \ \ \ a_0 \geq 0, \ \ \ a_1 \geq a_0 .
\end{equation}
Using (\ref{dpsstat}), one can show by induction that $(a_n)_{n\geq 0}$ is a positive
non-decreasing sequence. Then (\ref{dpsstat}) implies 
$(a_{n+1}-a_n)^\alpha \geq a_n $ for all $n\geq 0$. It follows that
$a_{n+1}-a_n \geq a_1^{1/\alpha}$ for all $n\geq 1$, 
hence $\lim_{n\rightarrow +\infty}{a_n}=+\infty$.

If $0 \leq a_1 <a_0$, then $\tilde{a}_n = a_{-n+1}$ defines a solution of (\ref{dpsstat})
satisfying (\ref{case1}), and thus $\lim_{n\rightarrow -\infty}{a_n}=+\infty$.

If $a_1 \leq 0 \leq a_0$ and $(a_0 , a_1)\neq (0,0)$, the case $n=1$ of (\ref{dpsstat})
yields $a_2 < a_1 \leq 0$. Then $\tilde{a}_n = -a_{n+1}$ defines a solution of (\ref{dpsstat})
satisfying (\ref{case1}), and thus $\lim_{n\rightarrow +\infty}{a_n}=+\infty$.
{\hfill $\Box$}\end{proof}

\section*{Acknowledgments}

G.J. acknowledges financial support from the Rh\^one-Alpes Complex Systems Institute (IXXI).
Y.S. is grateful to Israel Science Foundation (Grant 484 /12)
for financial support.


\begin{thebibliography}{99}
\bibitem{aubryC} S. Aubry and T. Cretegny, Physica D {\bf 119}, 34 (1998).

\bibitem{bm}
I. Bialynicki-Birula and J. Mycielski, Annals of Physics {\bf 100}, 62 (1976).

\bibitem{bdj}
B. Bid\'egaray-Fesquet, E. Dumas and G. James, {\em From Newton's cradle to the discrete $p$-Schr\"odinger equation}, article in preparation, 2012.

\bibitem{jesus}
J. Cuevas, G. James, P.G. Kevrekidis, B. Malomed and B. S\'{a}nchez-Rey,
Journal of Nonlinear Mathematical Physics {\bf 15}, supplement 3, 124 (2008).

\bibitem{dmi}
S.V. Dmitriev, P.G. Kevrekidis, N. Yoshikawa and D.J. Frantzeskakis,
J. Phys. A: Math. Theor. {\bf 40}, 1727 (2007).

\bibitem{eilbeckj}
J.C. Eilbeck and M. Johansson,
{\em The discrete nonlinear Schr\"odinger equation - $20$ years on},
in ``Conference on Localization and Energy Transfer in Nonlinear Systems"
(eds. L. Vazquez, R.S. MacKay and M-P. Zorzano), World Scientific, (2003), 44.

\bibitem{eg}
L.C. Evans and R.F. Gariepy: {\em Measure theory and fine properties of functions}
(Studies in Advanced Mathematics, CRC Press 1992).

\bibitem{flach2}
S. Flach, Phys. Rev. E {\bf 51}, 1503 (1995).

\bibitem{flg}
S. Flach and A. Gorbach,
Physics Reports {\bf 467}, 1 (2008).

\bibitem{hutzler}
S. Hutzler, G. Delaney, D. Weaire and F. MacLeod,
Am. J. Phys. {\bf 72}, 1508 (2004).

\bibitem{james1}
G. James, B. S\'{a}nchez-Rey and J. Cuevas,
Reviews in Mathematical Physics {\bf 21}, 1 (2009).

\bibitem{James11}
G. James,
Math. Models Meth. Appl. Sci. {\bf 21}, 2335 (2011).

\bibitem{JKC11}
G. James, P.G. Kevrekidis and J.Cuevas,
Physica D {\bf 251}, 39 (2013).

\bibitem{nolta12}
G. James, J. Cuevas and P.G. Kevrekidis, {\em Breathers and surface modes in oscillator chains with Hertzian interactions}, 
in ``Proceedings of the 2012 International Symposium on Nonlinear Theory and its Applications (NOLTA 2012)", Palma, Majorca, 
Spain, 22-26 Oct. 2012, 470-473.

\bibitem{pgk} P.G. Kevrekidis, {\em The Discrete Nonlinear Schr{\"o}dinger
Equation: Mathematical Analysis, Numerical Computations and Physical
Perspectives} (Springer-Verlag, Heidelberg 2009).

\bibitem{khatri}
D. Khatri, D. Ngo and C. Daraio, 
Granular Matter {\bf 14}, 63 (2012).

\bibitem{kivcomp}
Yu. S. Kivshar,
Phys. Rev. E {\bf 48}, R43 (1993).

\bibitem{kivc}
Yu. S. Kivshar and D.K. Campbell,
Phys. Rev. E {\bf 48}, 3077 (1993).

\bibitem{daraio}
D. Ngo, S. Griffiths, D. Khatri and C. Daraio,
Granular Matter (2013), DOI 10.1007/s10035-012-0377-5. 

\bibitem{page}
{J.B. Page}, 
Phys Rev B {\bf 41}, 7835 ({1990}).

\bibitem{palmero}
F. Palmero, R. Carretero-Gonz\'alez, J. Cuevas, P.G. Kevrekidis and W. Kr\'olikowski,
Phys. Rev. E {\bf 77}, 036614 (2008).

\bibitem{qin}
W.-X. Qin and X. Xiao, 
Nonlinearity {\bf 20}, 2305 (2007).

\bibitem{ros}
P. Rosenau and S. Schochet, Chaos {\bf 15}, 015111 (2005). 

\bibitem{ssen}
S. Sen and T.R. Krishna Mohan,  
Phys. Rev. E {\bf 79}, 036603 (2009).

\bibitem{sievpage}
A.J. Sievers and J.B. Page: Unusual anharmonic local mode systems.
In: {\em Dynamical Properties of Solids}, vol. 7, ed by  
G.K. Norton, A.A. Maradudin
(North-Holland, Amsterdam 1995), 137.

\bibitem{sahvm}
Y. Starosvetsky, M. Arif Hasan, A.F. Vakakis and L.I. Manevitch, 
SIAM J. Appl. Math. {\bf 72}, 337 (2012).

\bibitem{sun}
D. Sun, C. Daraio and S. Sen,
Phys. Rev. E {\bf 83}, 066605 (2011).
\end{thebibliography}
\end{document}